\documentclass[12pt,draftclsnofoot,onecolumn]{IEEEtran}

\ifCLASSINFOpdf

\else

\fi

\usepackage{geometry}
\geometry{left=2cm,right=2cm,top=2.15cm,bottom=1.8cm}
\usepackage{setspace}
\usepackage[ruled,vlined]{algorithm2e}
\usepackage{cite}
\usepackage{color}
\usepackage{graphicx}
\usepackage{subfig}
\usepackage{amsfonts}
\usepackage{amsmath}
\usepackage{amsthm}
\usepackage{mathdots}
\usepackage{bm}
\usepackage{esint}
\usepackage{amssymb}

\newtheorem{definition}{Definition}

\newtheorem{proposition}{Proposition}
\newtheorem{remark}{Remark}

\makeatletter
\renewcommand{\maketag@@@}[1]{\hbox{\m@th\normalsize\normalfont#1}}%
\makeatother

\captionsetup{font={small}}

\hyphenation{op-tical net-works semi-conduc-tor}

\setlength{\belowcaptionskip}{-0.5cm}   

\setlength{\abovedisplayskip}{2.5pt} 

\setlength{\belowdisplayskip}{2.5pt}

\begin{document}


\begin{spacing}{1.35}

\title{{\color{black}Joint Active and} Passive Beamforming Design for Reconfigurable Intelligent Surface Enabled Integrated Sensing and Communication}

\author{
        Zhe Xing, \textit{Graduate Student Member, IEEE},
        Rui Wang, \textit{Senior Member, IEEE}, and
        Xiaojun Yuan, \textit{Senior Member, IEEE}
\thanks{Z. Xing and R. Wang are with the College of Electronics and Information Engineering, Tongji University, Shanghai 201804, China. R. Wang is also with the Shanghai Institute of Intelligent Science and Technology, Tongji University, Shanghai 201804, China (e-mail: zxing@tongji.edu.cn; ruiwang@tongji.edu.cn).

X. Yuan is with the National Key Laboratory of Science and Technology on Communications, University of Electronic Science and Technology of China, Chengdu, 610000, China (e-mail: xjyuan@uestc.edu.cn).

}
}

\markboth{}%
{\MakeLowercase{\textit{et al.}}}

\maketitle

\begin{abstract}

To exploit the potential of the reconfigurable intelligent surface (RIS) in supporting integrated sensing and communication (ISAC), this paper proposes a novel {\color{black}joint active and} passive beamforming design for RIS-enabled ISAC system in consideration of the target size.  First, the detection probability for target sensing is derived in closed-form based on the illumination power on an approximated scattering surface area of the target, and a new concept of ultimate detection resolution (UDR) is defined for the first time to measure the target detection capability. Then, an optimization problem is formulated to maximize the signal-to-noise ratio (SNR) at the user-equipment (UE) under a minimum detection probability constraint.  {\color{black}To solve the non-convex problem, a novel alternative optimization approach is developed. In this approach, the solutions of the communication and sensing beamformers are obtained by our proposed bisection-search based method. The optimal receive combining vector is derived from an equivalent Rayleigh-quotient problem. To optimize the RIS phase shifts, the Charnes-Cooper transformation is conducted to cope with the fractional objective, and a novel convexification process is proposed to convexify the detection probability constraint with matrix operations and a real-valued first-order Taylor expansion. After the convexification, a successive convex approximation (SCA) based algorithm is designed to yield a suboptimal phase-shift solution. Finally, the overall optimization algorithm is built, followed by detailed analysis on its computational complexity,  convergence behavior and problem feasibility condition.}
Extensive simulations are carried out to testify the analytical properties of the proposed beamforming design, and to reveal two important trade-offs, namely, communication vs. sensing trade-off and UDR vs. sensing-duration trade-off. In comparison with several existing benchmarks, our proposed approach is validated to be superior when detecting targets with practical sizes.

\end{abstract}

\begin{IEEEkeywords}

Reconfigurable intelligent surface (RIS), integrated sensing and communication (ISAC), beamforming optimization, detection probability, ultimate detection resolution (UDR).

\end{IEEEkeywords}

\end{spacing}

\IEEEpeerreviewmaketitle


\section{Introduction}

The forthcoming beyond fifth- (B5G) and sixth-generation (6G) mobile communications have been envisioned as pivotal enablers for many innovative applications, such as the autonomous mobility, virtual/augmented reality (VR/AR),  digital twin, and human-machine interaction, etc. \cite{6G3}. Supporting these applications requires a tight cooperation of wireless communication and environmental sensing, which have been concurrently developed with rare coordination and mutual benefit for decades \cite{JCR-1,JCR-2}. Recently, owing to their commonalities in regard of signal processing methods, hardware platforms and system architectures, etc. \cite{JCR-2}, the coexistence and merging of the two individual functionalities have attracted considerable interest, thereby promoting the emergence and development of a novel paradigm shift, termed integrated sensing and communication (ISAC) \cite{ISAC}. 

The ISAC can be performed with the aid of various key enabling technologies, including the millimeter wave (mmWave), ultra-dense network (UDN) and multiple-input-multiple-output (MIMO) radar. These technologies have been incorporated to boost the communication and sensing capabilities by improving the spectral efficiency and spatial degrees of freedom (DoF), but are still unable to adequately address several critical challenges \cite{My-TWC, My-ICCC}. For instance, the mmWave is highly susceptible to obstructions on the line-of-sight (LoS) path, and suffers from severe propagation loss in the atmosphere. To compensate for the resultant signal attenuation over the wireless channel, higher transmit power and/or antenna gain are requisite to enhance the emitted signal strength, thereby increasing the energy consumption (EC) to a large extent. 
In addition, the dense deployment of the ISAC base stations (BSs) with massive MIMO arrays brings about high hardware cost (HC), while further pushing the total EC in the network to an exorbitant level. 
To tackle these issues, recent attention has been paid to a new burgeoning concept termed reconfigurable intelligent surface (RIS), or intelligent reflecting surface (IRS) \cite{Propose-IRS-1}, which early appeared as a prototype of intelligent wall \cite{Intelligent-Wall}, and was developed by the landmark works \cite{Landmark-1, Landmark-2} three years ago.
An RIS is a near-passive reflecting metasurface with many small controllable units, which can be digitally configured to perform \textit{passive beamforming} by changing the physical properties of the impinging electromagnetic wave (such as the phase-shift), so as to create a reliable virtual LoS link and manipulate the propagation environment intelligently{\color{black}\cite{IRS-Survey-1,IRS-Survey-2, Q.Wu2020(CM)}}. Since the RIS is generally fabricated with cheap hardware components without energy-consuming radio-frequency (RF) chains \cite{L.Dai2020(Access)}, it is envisioned as a promising candidate technology in compliant with the notion of green communication in B5G and 6G.

Up to now, plenty of researches have been focusing on the performance analysis and application potential of the RIS in both communication and sensing fields. As for communication, the RIS was leveraged to transfer passive information \cite{Information-Transfer}, build index modulation scheme \cite{Index-Modulation} and achieve secure physical-layer transmission{\color{black} \cite{Secure-Transmission, Secure-Transmission-2,Secure-Transmission-3}}, owing to the flexibility of the phase-shift adjustment. More importantly, it was revealed that a sufficiently large RIS with $N$ reflecting units could yield a remarkable signal-to-noise ratio (SNR) gain by $\mathcal{O}(N^2)$ over the cascaded channel \cite{N2}. In light of this, the RIS was also widely employed to {\color{black}combat the unfavourable channel conditions by forming desired passive beams \cite{R3-Comm3-2,R3-Comm3-4}}, so as to improve the achievable data-rate \cite{My-TWC, ACR-Maximization-2}, spectral/energy efficiency \cite{Landmark-2}, outage probability \cite{Outage-Probability, Outage-Probability-2} and bit-error-rate (BER) performance \cite{BER} of the assisted wireless communication system. 
As for sensing, some prior works exploited the RIS to perform user localization{\color{black}\cite{R3-Comm3-1}} in combination with the codebook search \cite{Localization-1}, the phase-shift profile design as well as the parameter estimation{\color{black}\cite{R3-Comm3-3,Localization-2, Localization-3}}, and incorporated the RIS into the conventional radar system to assist the target detection by providing additional reception link for echoes \cite{RIS-Radar-1, RIS-Radar-2, RIS-Radar-3}. According to their results, such meta-localization and meta-radar systems were validated to be able to outperform the traditional ones without the RIS, especially when the reflection arrays were fabricated to be large.

Owing to the benefits brought by the utilization of the RIS, recent progresses have been made to incorporate the RIS into the ISAC system. For instance, Jiang \textit{et al.} \cite{RIS-ISAC-1}, first introduced the RIS to the dual-function radar and communication (DFRC) system with single user-equipment (UE) and single target, and jointly optimized the reflection matrix and the transmit precoder. The optimization problem was solved by the semidefinite relaxation (SDR) and the majorization–minimization (MM) methods. After that, Song \textit{et al.} \cite{RIS-ISAC-2}, further considered a single-user RIS-ISAC scenario with multiple targets to be detected, and proposed to maximize the minimum beampattern gain in several sensing directions under the transmit power constraint at the BS and the SNR constraint at the UE. Liu \textit{et al.} \cite{RIS-ISAC-3}, extended the beamforming design to a multi-user RIS-ISAC system, and alternatively optimized the transmit beamformer, receive filter and reflection coefficients by maximizing the sum-rate of the UE under the radar SNR constraint. Wang \textit{et al.} \cite{RIS-ISAC-4}, performed the joint waveform and phase-shift design in an RIS-aided multi-user DFRC system to minimize the multi-user interference (MUI). Tong \textit{et al.} \cite{RIS-ISAC-Huang}, exploited the RIS to assist the uplink multi-user ISAC by dividing the sensing space into several blocks, and applied the generalized approximate message passing to determine the environmental information.
Unlike these works considering the completely passive and continuous RIS, Prasobh Sankar \textit{et al.} \cite{RIS-ISAC-5}, introduced the hybrid RIS to the multi-user ISAC system, where partial elements on the RIS were designed to be active while the others remained passive. Wang \textit{et al.} \cite{RIS-ISAC-6}, considered a more practical scenario where the RIS phase-shifts were discrete, and took the Cramér-Rao bound (CRB) of the direction-of-arrival (DOA) estimation as a sensing performance metric.

In these prior works, heterogeneous RIS-aided beamforming strategies were developed to jointly fulfil the communication and sensing demands. Although these initial attempts made a big step forward in this direction, some challenging problems remained unsolved. First, the targets to be sensed are mostly treated as points. However, the cross-section area of a target is generally non-ignorable because it is essentially associated with the scattering capability. 
Second, under the assumption of point target, the beamforming design is confined to the conventional end-to-end channel model and abandons much freedom in sensing beampattern adjustment. Such freedom is crucial for further improvement of the target detection performance in an RIS-ISAC system.  Consequently, it is of significance to develop a new RIS-enabled beamforming strategy \textit{in consideration of the target size}, which motivates our research. In this paper, {\color{black}considering a joint communication and target detection problem in the ISAC field}, we take the target size into account and optimize the performances of UE communication and target detection by developing a novel joint active and passive beamforming optimization scheme. On this basis, we further introduce a new concept for sensing capability measurement and provide extensive analysis on the proposed solutions. 
Our contributions are summarized as follows.

\begin{itemize}
\item[•] \textbf{Derivation of the Detection Probability and Definition of the Ultimate Detection Resolution (UDR)}: In this work, a physics-based model is adopted to characterize the practical RIS reflection. Considering the target size, the scattering surface area of the target is approximated as a smooth surface in accordance with empirical radar cross-section (RCS) measures of practical targets. Then, based on the illumination power computed integrally over the scattering surface area, the detection probability of the target is derived in closed-form, according to which a new concept of UDR is then mathematically defined for the first time. The UDR has not been reported in any previous works to the best of our knowledge, but is essential for sensing capability measurement by characterizing the minimum size of the detectable target under a certain system setup and a given detection probability requirement. 


\item[•] \textbf{Joint Active and Passive Beamforming Design for the RIS-ISAC}: {\color{black}An optimization problem is formulated to maximize the SNR at the UE under a minimum detection probability constraint, by optimizing the communication and sensing (C\&S) beamformers, the receive combining vector, and the RIS phase shifts. 
To solve the non-convex problem, a novel alternative optimization approach is developed in this work. In specific, the C\&S beamformers are optimized by our proposed bisection-search based algorithm. The receive combining vector is optimized by transforming the original subproblem into an equivalent Rayleigh-quotient problem. Regarding the RIS phase-shift optimization, the Charnes-Cooper transformation is performed to deal with the fractional objective function. Then, a new convexification process is proposed to linearise the detection probability constraint by means of a series of matrix manipulations and a real-valued first-order Taylor expansion. After the convexification, a successive convex approximation (SCA) based algorithm is designed to iteratively find the phase-shift solution. 
Finally, the overall optimization algorithm is built, with its computational complexity, convergence behavior and problem feasibility condition being analysed in detail. It is rigorously proved that the developed algorithm is convergent, and the UDR can be determined by the problem feasibility condition. }

\item[•] \textbf{Performance Evaluation}: The analytical properties of the proposed beamforming design are testified via extensive simulations. Two important trade-offs, namely, 1) communication vs.  sensing trade-off, and 2) UDR vs. sensing-duration trade-off, are numerically investigated. Then, the proposed design is compared with several existing benchmarks, including two state-of-the-art approaches which assume point targets. Our results validate that the proposed design significantly outperforms the approach which merely considers the BS-RIS-target link, and is superior to the approach in consideration of the echo link when detecting targets with practical sizes.
\end{itemize}

The rest of this paper is organized as follows. Section II describes the system model and problem formulation. Section III proposes the joint active and passive beamforming optimization approach. Section IV carries out the simulations for performance evaluations. Section V draws the final conclusion.

\textit{Notations:} $(\cdot)^{*}$, $(\cdot)^{\mathrm{T}}$, $(\cdot)^{\mathrm{H}}$ and $(\cdot)^{-1}$ denote the conjugate, transpose, Hermitian and inversion operators, respectively. $|\cdot|$ and $\|\cdot\|_2$ denote the modulus and $\ell_2$ norm. $\mathbb{E}\{\cdot\}$, $\mathrm{rank}(\cdot)$ and $tr(\cdot)$ are the expectation, the rank and the trace. $\otimes$ denotes the Kronecker product. $\oiint \limits_{S_r}\ (\cdot)\ dS$ is the surface integral over $S_r$. $\mathrm{vec}(\cdot)$ and $\mathrm{vec}^{-1}(\cdot)$ stand for the vectorize and matrixing operators. $\mathfrak{Re}(\cdot)$ and $\mathfrak{Im}(\cdot)$ represent the real part and imaginary part. $\left\langle \mathbf{x}, \mathbf{y} \right\rangle$ denotes the inner product of $\mathbf{x}$ and $\mathbf{y}$.




\section{System Model and Problem Statement}

\begin{figure}[h]
\includegraphics[width=3.6in]{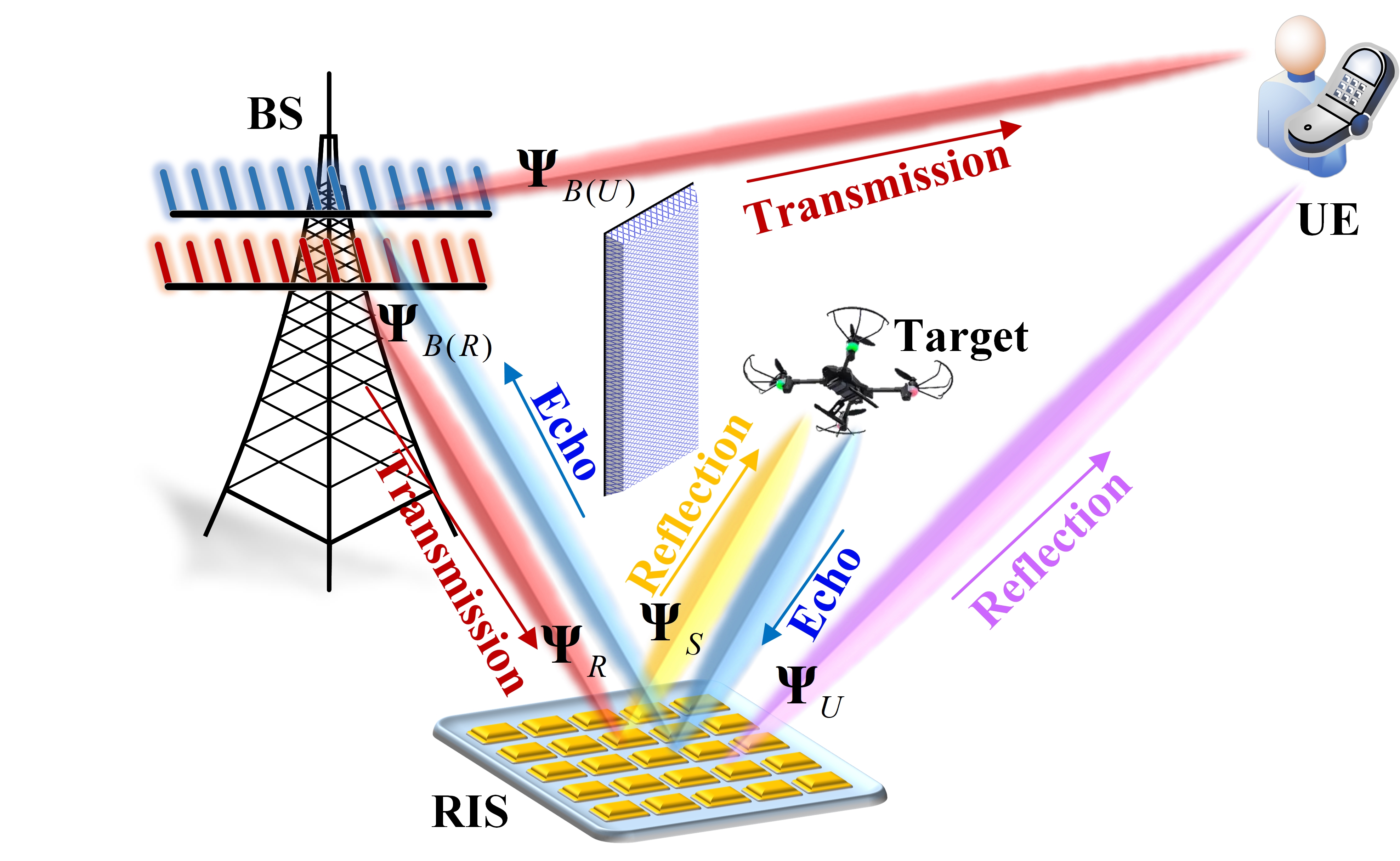}
\hfil
\centering
\caption{\color{black}The considered RIS-ISAC system, where 
the RIS is leveraged to reflect the ISAC signal from the BS to the UE and the target for communication and sensing purposes. A portion of the signal wave scattered by the target is then reflected back to the BS for target detection.}
\label{System Model}
\end{figure}

As depicted in Fig. 1, this paper considers an RIS-ISAC system composed of a BS equipped with $M$ transmitting and $M$ receiving antennas arranged in uniform linear arrays (ULAs) with antenna spacing of $d$, an RIS equipped with $N=N_xN_y$ reflecting elements in a uniform planar array (UPA) with element spacing of $d$, a target to be detected, and a single-antenna communication UE. The UE is assumed to be static. The LoS path between the BS and the target is blocked by an obstruction. 
The BS is responsible for simultaneously communicating with the UE and detecting the target. {\color{black}Since the target near or around the BS can be easily sensed by the direct-link echoes \cite{RIS-ISAC-4, RIS-ISAC-6}, this system is dedicated to the detection of the target behind the obstacle. In such a scenario, the direct BS-target-BS echo link is substantially weak and unavailable \cite{RIS-ISAC-2,Secure_ISAC}, due to double environmental scatterings of the echoes in the NLoS paths between the BS and the target}. To attain a favourable communication/sensing performance, {\color{black}we assume that the RIS has been appropriately deployed to establish a strong virtual LoS link, which can be achieved by the RIS placement optimizations \cite{R3-Comm4-1}}. By properly adjusting the phase shifts, the RIS is expected to produce considerable passive beamforming gains toward the UE and the sensing direction at the same time. When the target is illuminated by the reflective beam, it scatters the impinging electromagnetic wave to the entire propagation space. A portion of the scattered wave is then reflected back to the BS by the RIS as the echo, and is eventually harvested at the BS to determine the presence of the target. Self-interference cancellation is performed at the BS side, so that the disturbance of the transmitted signal on the echoes can be avoided \cite{RIS-ISAC-1}.

\begin{figure}[h]
\includegraphics[width=6.4in]{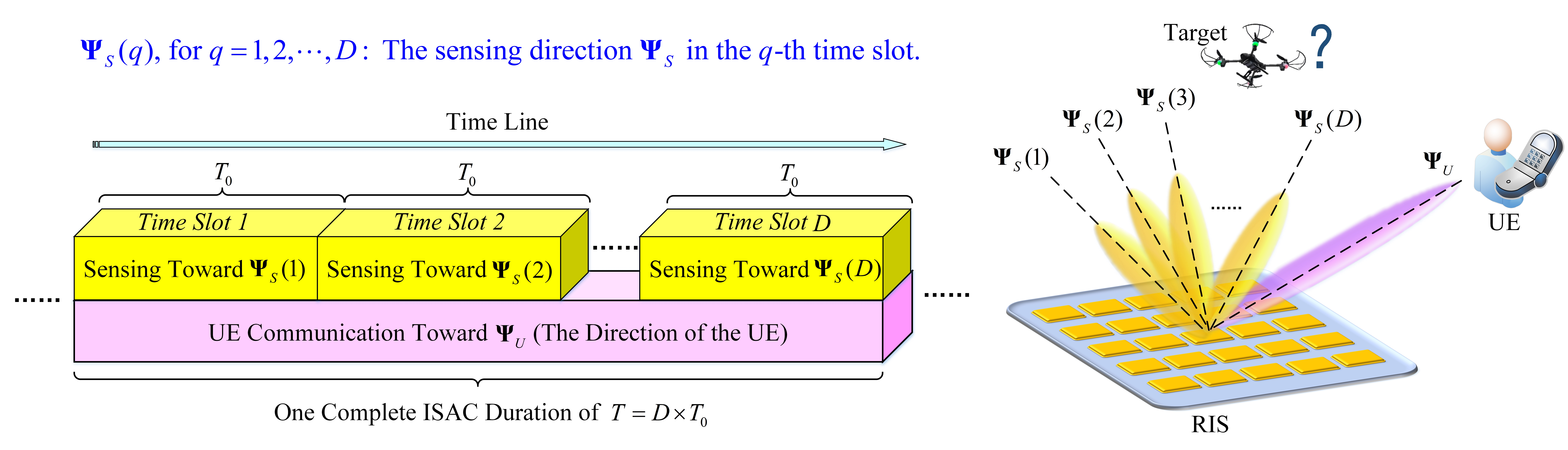}
\hfil
\centering
\caption{Illustration of the joint sensing and communication scheme over the time line.}
\label{ISAC_Working_Process}
\end{figure}

{\color{black}In our work, the exact position of the target is considered to be unknown. In this case, we aim to detect the presence of a possible target within a certain sensing region of $\mathcal{S}_{T}=\{(r,\theta,\varphi)\ |\ r\in (0,r_{\mathrm{max}}], \theta\in[\theta_1,\theta_2],\varphi\in[\varphi_1,\varphi_2]\}$ from the view of the RIS, where $r$, $\theta$ and $\varphi$ are the distance, elevation angle-of-departure (AOD) and azimuth AOD from the RIS to the possible target, respectively. }The sensing and communication tasks can be jointly performed as described in Fig. 2. Specifically, we consider one complete ISAC round as an example, which is evenly divided into $D$ transient time slots, with each time slot lasting for a time duration of $T_0$. In one time slot, the RIS updates its phase-shift variables to simultaneously generate passive beamforming gains toward the direction $\mathbf{\Psi}_U=(\theta_U,\varphi_U)$ for UE communication, and toward a sensing direction $\mathbf{\Psi}_S=(\theta_S,\varphi_S)$ for target detection, {\color{black}where $\theta_1\leq\theta_S\leq\theta_2$ and $\varphi_1\leq\varphi_S\leq\varphi_2$}. Here, $\mathbf{\Psi}_U$ is fixed owing to the assumption of motionless UE, whereas $\mathbf{\Psi}_S$ remains unchanged within one time slot but varies from time slot to time slot to scan over the neighbourhood. A sensing beam is assigned to each $\mathbf{\Psi}_S$ with a time duration of $T_0$. As such, through the environmental scanning in the entire ISAC round, the direction in which the target appears can be determined. It is noteworthy that passive beamforming needs to be optimized in each time slot, so as to change the sensing beam toward different $\mathbf{\Psi}_S$. In fact, it is sufficient to focus on the beamforming optimization in one time slot based on a specific $\mathbf{\Psi}_S$ and $\mathbf{\Psi}_U$, while those for other time slots can be done in a similar way. Therefore, we will consider a specific $\mathbf{\Psi}_S$ and $\mathbf{\Psi}_U$, followed by an elaboration of the communication and sensing performance metrics, as well as our problem formulation.


\subsection{Communication Performance Metric}
We first describe the UE communication performance metric.
{\color{black}Let the transmitted ISAC waveform at the BS be denoted by $\mathbf{x}(t)=\mathbf{w}_{c} c(t)+\mathbf{w}_{s} s(t)$ \cite{RIS-ISAC-1, RIS-ISAC-3}, where $\mathbf{w}_{c}\in \mathbb{C}^{M\times 1}$ and $\mathbf{w}_{s}\in \mathbb{C}^{M\times 1}$ are the communication and sensing beamformers; $c(t)$ and $s(t)$ are the information-carrying signal and sensing signal, satisfying $\mathbb{E}\{|c(t)|^2\}=1$ and $\mathbb{E}\{|s(t)|^2\}=1$.  Both $\mathbf{w}_{c}$ and $\mathbf{w}_{s}$ are constrained by the total transmit power limit of $tr(\mathbf{w}_{c}\mathbf{w}_{c}^\mathrm{H}) + tr(\mathbf{w}_{s}\mathbf{w}_{s}^\mathrm{H}) \leq P_{tx}$. $c(t)$ and $s(t)$ are assumed to be statistically independent and uncorrelated, i.e. $\mathbb{E}\{s(t)c^\mathrm{*}(t)\}=0$ \cite{Secure_ISAC}.}

Based on $\mathbf{x}(t)$, the signal received by the UE is expressed as {\color{black}
 \begin{equation}\label{received-communication-signal}
y_{u}(t) =\left[ \mathbf{h}_{RU}^{\mathrm{H}}\mathbf{\Omega}(\mathbf{\Psi}_{R},\mathbf{\Psi}_{U})\mathbf{H}_{BR} + \mathbf{h}_{BU}^{\mathrm{H}} \right] \mathbf{x}(t) + n_u(t),
\end{equation}\normalsize
where} $n_u(t)$ is the complex zero-mean additive white Gaussian noise at the UE with variance of $\sigma_{n,u}^2$; $\mathbf{\Psi}_{R}=\left(\theta_{R},\varphi_{R}\right)$ denotes the incident direction, with $\theta_{R}$ and $\varphi_{R}$ being the elevation and azimuth angle-of-arrivals (AOAs) at the RIS; $\mathbf{\Omega}(\mathbf{\Psi}_{R},\mathbf{\Psi}_{U})$ is the diagonal RIS response matrix with respect to $\mathbf{\Psi}_{R}$ and $\mathbf{\Psi}_{U}$;
{\color{black}$\mathbf{H}_{BR}$, $\mathbf{h}_{RU}$ and $\mathbf{h}_{BU}$ are the BS-RIS, the RIS-UE and the BS-UE channels, modelled as 
 \begin{equation}\label{H_BR and h_RU}
\mathbf{H}_{BR} = \sqrt{\frac{\rho_{BR}\kappa}{1+\kappa}} \overline{\mathbf{H}}_{BR} + \sqrt{\frac{\rho_{BR}}{1+\kappa}} \widetilde{\mathbf{H}}_{BR},
 \end{equation}
 \begin{equation}
\mathbf{h}_{XU} = \sqrt{\frac{\rho_{XU}\kappa}{1+\kappa}} \overline{\mathbf{h}}_{XU} + \sqrt{\frac{\rho_{XU}}{1+\kappa}} \widetilde{\mathbf{h}}_{XU}, \ X\in \{R,B\}
\end{equation}\normalsize
where $\kappa$ is the Rician factor; $\rho_{BR}$ and $\rho_{XU}$ denote the large-scale path losses, given by
 \begin{equation}
\rho_{BR} = \zeta_0 \left(\frac{d_{BR}}{d_0}\right)^{-\alpha_{BR}},\ \rho_{XU} = \zeta_0 \left(\frac{d_{XU}}{d_0}\right)^{-\alpha_{XU}},\ X\in \{R,B\}
\end{equation}\normalsize
where $\zeta_0$ is the path loss coefficient at $d_0=1$ m; $\alpha_{BR}$ and $\alpha_{XU}$ are the path loss exponents; $d_{BR}$ and $d_{XU}$ denote the distances between the BS and the RIS, and between the BS/RIS and the UE.
$\widetilde{\mathbf{H}}_{BR}$ and $\widetilde{\mathbf{h}}_{XU}$ are the NLoS Rayleigh fading components.
$\overline{\mathbf{H}}_{BR}$ and $\overline{\mathbf{h}}_{XU}$ are the LoS components given by
 \begin{equation}
\overline{\mathbf{H}}_{BR} = \mathbf{a}(\theta_R,\varphi_R) \mathbf{a}^\mathrm{H}(\theta_{B(R)},\varphi_{B(R)}),\ \overline{\mathbf{h}}_{RU} = \mathbf{a}(\theta_U,\varphi_U),\ \overline{\mathbf{h}}_{BU} = \mathbf{a}(\theta_{B(U)},\varphi_{B(U)}),
\end{equation} \normalsize
where} $\theta_{B(R)}$ or $\theta_{B(U)}$ and $\varphi_{B(R)}$ or $\varphi_{B(U)}$ are the elevation and azimuth AODs at the BS toward the RIS or UE; $\mathbf{a} \left(\theta_{B(R)},\varphi_{B(R)}\right)$, $\mathbf{a} \left(\theta_{B(U)},\varphi_{B(U)}\right)$, $\mathbf{a} \left(\theta_{R},\varphi_{R}\right)$ and $\mathbf{a} \left(\theta_{U},\varphi_{U}\right)$ are the steering vectors, given by
\small \begin{equation}  \nonumber
\mathbf{a} \left(\theta_{B(X)},\varphi_{B(X)}\right)
= \left( 1, e^{j\frac{2\pi d}{\lambda} \sin{\theta_{B(X)}}\sin{\varphi_{B(X)}}},\cdots, e^{j\frac{2\pi d}{\lambda} (M-1) \sin{\theta_{B(X)}}\sin{\varphi_{B(X)}}} \right)^{\mathrm{T}},\ X\in \{R,U\},
\end{equation}
\begin{equation}\small\nonumber
\begin{split}
\mathbf{a} \left(\theta_{R},\varphi_{R}\right)
= & \left( 1, e^{j\frac{2\pi d}{\lambda} \sin{ \theta_{R} }\cos{ \varphi_{R} }},\cdots, e^{j\frac{2\pi d}{\lambda} (N_x-1) \sin{ \theta_{R} }\cos{ \varphi_{R} }} \right)^{\mathrm{T}}   \!\!
 \otimes 
\left( 1, e^{j\frac{2\pi d}{\lambda} \sin{ \theta_{R} }\sin{ \varphi_{R} }},\cdots, e^{j\frac{2\pi d}{\lambda} (N_y-1) \sin{ \theta_{R} }\sin{ \varphi_{R} }} \right)^{\mathrm{T}} ,
\end{split}
\end{equation}\normalsize
\begin{equation}\small\nonumber
\begin{split}
\mathbf{a} \left(\theta_{U},\varphi_{U}\right)
= & \left( 1, e^{j\frac{2\pi d}{\lambda} \sin{ \theta_{U} }\cos{ \varphi_{U} }},\cdots, e^{j\frac{2\pi d}{\lambda} (N_x-1) \sin{ \theta_{U} }\cos{ \varphi_{U} }} \right)^{\mathrm{T}}  \!\!
  \otimes 
\left( 1, e^{j\frac{2\pi d}{\lambda} \sin{ \theta_{U} }\sin{ \varphi_{U} }},\cdots, e^{j\frac{2\pi d}{\lambda} (N_y-1) \sin{ \theta_{U} }\sin{ \varphi_{U} }} \right)^{\mathrm{T}} .
\end{split}
\end{equation}\normalsize

{\color{black}In this paper, we assume that the channel state information (CSI) of the BS-RIS, RIS-UE and BS-UE channels are known, which can be achieved by various existing channel estimation techniques proposed in e.g. \cite{Channel Estimation} for RIS-aided systems}. Besides, we use the physics-based model in \cite{Physical-Model-TCOM, Physical-Model-Yuan} to characterize the practical RIS reflection, where $\mathbf{\Omega}(\mathbf{\Psi}_{R},\mathbf{\Psi}_{U})$ is modelled as \cite{Physical-Model-TCOM}
 \begin{equation}
\mathbf{\Omega}(\mathbf{\Psi}_{R},\mathbf{\Psi}_{U}) = \frac{\sqrt{4\pi}}{\lambda} g_{uc}(\mathbf{\Psi}_{R},\mathbf{\Psi}_{U}) \mathrm{diag}(\bm{\omega}),
\end{equation}\normalsize
where $\bm{\omega}=(e^{j\beta_1},e^{j\beta_2},...,e^{j\beta_N})^\mathrm{T}$ is the unit-modulus adjustable phase-shift vector with $\beta_i$ for $i=1,2,...,N$ being the phase-shift variables, and $g_{uc}(\mathbf{\Psi}_{R},\mathbf{\Psi}_{U})$ is the inherent unit-cell response factor. Note that an expression of $g_{uc}(\mathbf{\Psi}_{R},\mathbf{\Psi}_{U})$ is derived in [45, Eq. (16)] based on the physical reflection properties. Interested readers may refer to \cite{Physical-Model-TCOM} for more details.

For the signal model in (\ref{received-communication-signal}), the SNR at the UE with respect to $\mathbf{w}_c,\mathbf{w}_s$ and $\bm{\omega}$ is {\color{black}given by
 \begin{equation}\label{Comm_Metric}
\mathrm{SNR_{UE}}(\mathbf{w}_c,\mathbf{w}_s,\bm{\omega}) = \frac{| [\mathbf{h}_{RU}^{\mathrm{H}}\mathbf{\Omega}(\mathbf{\Psi}_{R},\mathbf{\Psi}_{U})\mathbf{H}_{BR} + \mathbf{h}_{BU}^{\mathrm{H}}] \mathbf{w}_{c} |^2}{| [\mathbf{h}_{RU}^{\mathrm{H}}\mathbf{\Omega}(\mathbf{\Psi}_{R},\mathbf{\Psi}_{U})\mathbf{H}_{BR} + \mathbf{h}_{BU}^{\mathrm{H}}] \mathbf{w}_{s} |^2 + \sigma_{n,u}^2},
\end{equation}\normalsize
which is} employed as the communication performance metric.

\subsection{Sensing Performance Metric}

The target detection is performed based on the echo power depending on the target size. 
To facilitate the analysis, we consider detecting possible target in a specific sensing direction of $\mathbf{\Psi}_S=(\theta_S,\varphi_S)$ {\color{black}at a distance of $r\in (0,r_{\mathrm{max}}]$ away from the RIS}, as shown in Fig. \ref{HSSA}. Due to the randomness and irregularity of practical targets, the scattering surface area of the target is approximated as a smooth surface, in accordance with the empirical measures of the RCSs of practical targets [47, Table I].
The scattering surface area, denoted by $S(\Delta_\theta,\Delta_\varphi)$, has an elevation angle-spread of $\Delta_\theta$ and an azimuth angle-spread of $\Delta_\varphi$ from the view of the RIS. By analysing the illumination power on $S(\Delta_\theta,\Delta_\varphi)$, we will derive the detection probability of the target in closed-form. Then, based on the detection probability, we will also introduce a new concept of UDR for detection capability measurements.

\begin{figure}[h]
\includegraphics[width=2.5in]{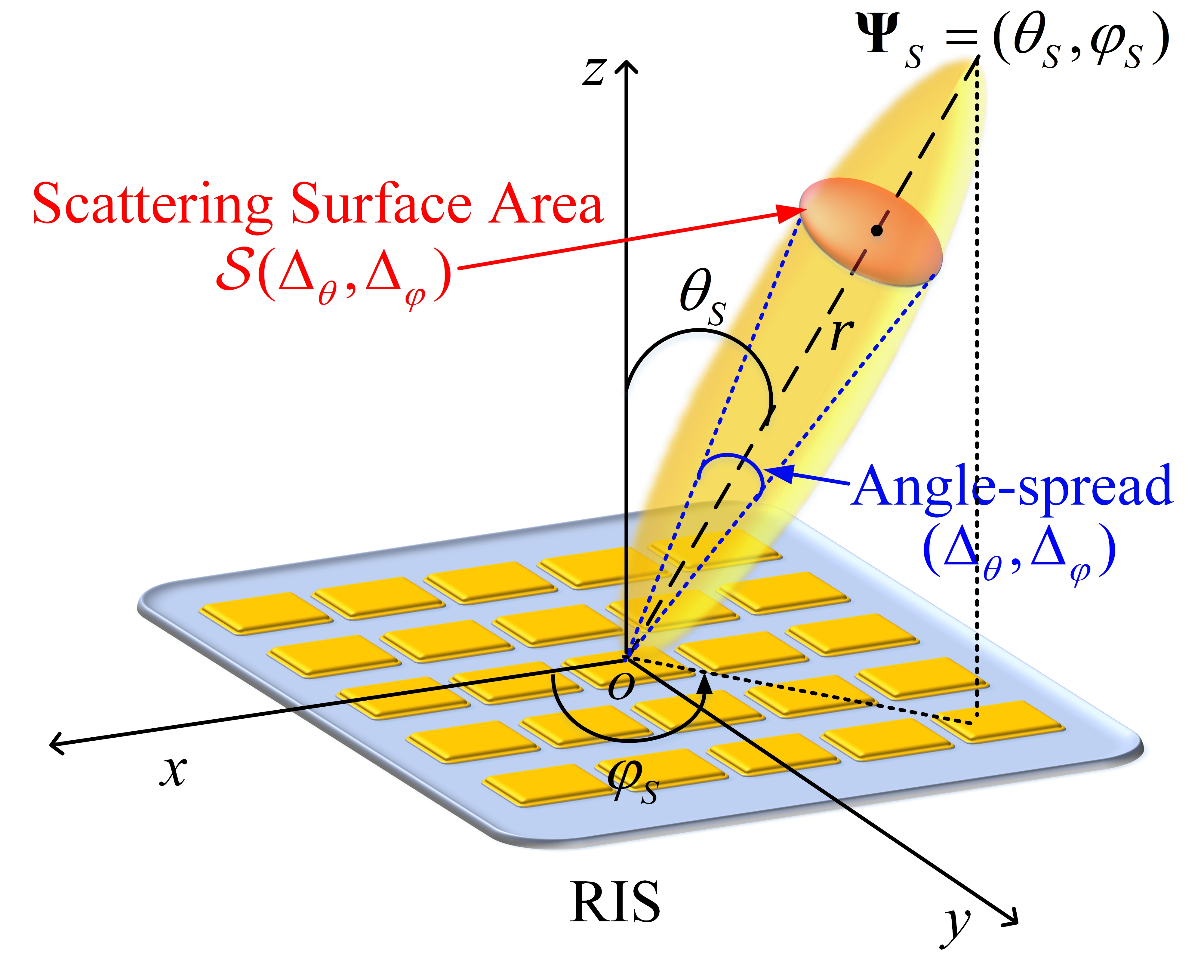}
\hfil
\centering
\caption{The visual illustration of the scattering surface area of the possible target.}
\label{HSSA}
\end{figure}

To begin with, we analyse the signal power in each transmission hop. When the RIS is reflecting the signal from the BS to an arbitrary direction $\mathbf{\Psi}=(\theta,\varphi)$, the reflective radiation power pattern toward $\mathbf{\Psi}=(\theta,\varphi)$ {\color{black}is expressed as
\begin{align}
P_R\left(\mathbf{\Psi}\right)&=
\mathbb{E}\left\{
\left|  \mathbf{a}^{\mathrm{H}}(\theta,\varphi) \mathbf{\Omega}(\mathbf{\Psi}_{R},\mathbf{\Psi}) \mathbf{H}_{BR}[\mathbf{w}_{c} c(t) + \mathbf{w}_{s} s(t)]\right|^2  \right\} \nonumber\\
&= \frac{4\pi}{\lambda^2}
 |g_{uc}(\mathbf{\Psi}_{R},\mathbf{\Psi})|^2  \mathbf{a}^{\mathrm{H}}(\theta,\varphi) \mathrm{diag}(\bm{\omega})\mathbf{H}_{BR}(\mathbf{w}_{c}\mathbf{w}_{c}^\mathrm{H} + \mathbf{w}_{s}\mathbf{w}_{s}^\mathrm{H}) \mathbf{H}_{BR}^\mathrm{H} \mathrm{diag}(\bm{\omega}^\mathrm{H}) \mathbf{a}(\theta,\varphi) ,
\end{align}\normalsize
where the expression of $\mathbf{a}(\theta,\varphi)$ is similar to that of $\mathbf{a}(\theta_U,\varphi_U)$ with the subscript $U$ being removed.
Then}, at the distance of $r$ away from the RIS, the reflective radiation power density is given by $\frac{P_R\left(\mathbf{\Psi}\right)}{4\pi r^2}$, and the illumination power on a differential surface area of $dA$ is given by $\frac{P_R\left(\mathbf{\Psi}\right)}{4\pi r^2} dA$. 


Let $\mathbb{E}[\delta_S]$ denote the average loss of the first-order scattering. It is reported in \cite{TWC-localization} that $\mathbb{E}[\delta_S]$ is around $-10$ dB in practice. Then, the differential signal power scattered by the differential surface area is expressed as
$dP_S(\mathbf{\Psi},r)=\frac{P_R\left(\mathbf{\Psi}\right)}{4\pi r^2} \mathbb{E}[\delta_S] dA$,
according to which the corresponding differential echo power receipted at the BS is derived {\color{black}as
\small \begin{align}\label{dP_rx}
dP_{rx}(\mathbf{\Psi},r)=\ & dP_S(\mathbf{\Psi},r) \left(\frac{\lambda}{4\pi r}\right)^2 
\left|   \mathbf{w}_{rx}^\mathrm{H} \mathbf{H}_{BR}^\mathrm{H} \mathbf{\Omega}(\mathbf{\Psi},\mathbf{\Psi}_{R}) \mathbf{a}(\theta,\varphi) \right|^2 \nonumber \\
=\ & \frac{dP_S(\mathbf{\Psi},r)}{4\pi r^2}
| g_{uc}(\mathbf{\Psi},\mathbf{\Psi}_R) |^2 |\mathbf{w}_{rx}^\mathrm{H} \mathbf{H}_{BR}^\mathrm{H} \mathrm{diag}(\bm{\omega}) \mathbf{a}(\theta,\varphi) |^2 \nonumber \\
=\ &  \frac{ \mathbb{E}\{\delta_S\}}{4\pi r^4 \lambda^2 }
| g_{uc}(\mathbf{\Psi}_R,\mathbf{\Psi}) |^2 
| g_{uc}(\mathbf{\Psi},\mathbf{\Psi}_R) |^2 \times
\mathbf{a}^{\mathrm{H}}(\theta,\varphi) \mathrm{diag}(\bm{\omega})\mathbf{H}_{BR}(\mathbf{w}_{c}\mathbf{w}_{c}^\mathrm{H} + \mathbf{w}_{s}\mathbf{w}_{s}^\mathrm{H}) \mathbf{H}_{BR}^\mathrm{H} \mathrm{diag}(\bm{\omega}^\mathrm{H}) \mathbf{a}(\theta,\varphi) \nonumber\\
& \times \mathbf{a}^{\mathrm{H}}(\theta,\varphi) \mathrm{diag}(\bm{\omega}^{\mathrm{H}})\mathbf{H}_{BR} \mathbf{w}_{rx} \mathbf{w}_{rx}^{\mathrm{H}} \mathbf{H}_{BR}^{\mathrm{H}} \mathrm{diag}(\bm{\omega}) \mathbf{a}(\theta,\varphi) \ 
 dA,
\end{align}\normalsize
where} $\mathbf{w}_{rx}$ is the unit-norm receive combining vector at the BS satisfying $\|\mathbf{w}_{rx}\|_2=1$.

Based on the differential echo power in (\ref{dP_rx}), the total echo power scattered from $S(\Delta_\theta,\Delta_\varphi)$ and eventually harvested at the BS can be integrally calculated {\color{black}as
\small \begin{align}\label{P_rx_detail}
P_{rx}(\mathbf{\Psi}_S,r,\Delta_\theta,\Delta_\varphi) = & \oiint \limits_{S(\Delta_\theta,\Delta_\varphi)} dP_{rx}(\mathbf{\Psi},r) \nonumber\\
\overset{(a)}{=}& \int_{\varphi_S-\frac{\Delta_{\varphi}}{2}}^{\varphi_S+\frac{\Delta_{\varphi}}{2}} \int_{\theta_S-\frac{\Delta_{\theta}}{2}}^{\theta_S+\frac{\Delta_{\theta}}{2}}
\left\{ \frac{ \mathbb{E}\{\delta_S\}}{4\pi r^2 \lambda^2 }
| g_{uc}(\mathbf{\Psi}_R,\mathbf{\Psi}) |^2 
| g_{uc}(\mathbf{\Psi},\mathbf{\Psi}_R) |^2 \right. \nonumber\\
&\ \ \ \ \ \ \ \ \ \ \ \  \times
\mathbf{a}^{\mathrm{H}}(\theta,\varphi) \mathrm{diag}(\bm{\omega})\mathbf{H}_{BR}(\mathbf{w}_{c}\mathbf{w}_{c}^\mathrm{H} + \mathbf{w}_{s}\mathbf{w}_{s}^\mathrm{H}) \mathbf{H}_{BR}^\mathrm{H} \mathrm{diag}(\bm{\omega}^\mathrm{H}) \mathbf{a}(\theta,\varphi) \nonumber\\
&\ \ \ \ \ \ \ \ \ \ \ \  \times \mathbf{a}^{\mathrm{H}}(\theta,\varphi) \mathrm{diag}(\bm{\omega}^{\mathrm{H}})\mathbf{H}_{BR} \mathbf{w}_{rx} \mathbf{w}_{rx}^{\mathrm{H}} \mathbf{H}_{BR}^{\mathrm{H}} \mathrm{diag}(\bm{\omega}) \mathbf{a}(\theta,\varphi)  \sin{\theta} \} \ d\theta\ d\varphi,
\end{align}\normalsize
where} derivation $(a)$ uses the geometric relation of $dA=r^2 \sin{\theta}\ d\theta\ d\varphi$.

To make a detection at the BS, the received echo is sampled over the duration of $T_0$ with a sampling frequency of $f_s$, and is dedicated to a hypothesis test.
Then, in accordance with the Neyman-Pearson criterion, the detection probability is expressed as [49, Eq. (3.8)]
 \begin{equation}\label{Pd_metric}
\mathbb{P}_d (\mathbf{\Psi}_S,r,\Delta_\theta,\Delta_\varphi)   
= Q \left( Q^{-1}(\mathbb{P}_f) - \frac{\sqrt{T_0 f_s P_{rx}(\mathbf{\Psi}_S,r,\Delta_\theta,\Delta_\varphi) }}{\sigma_n} \right),
\end{equation}\normalsize
where {\color{black} $Q(\cdot)$ and $Q^{-1}(\cdot)$ are the Q-function and the inverse Q-function, respectively, given by $Q(x)=\frac{1}{\sqrt{2\pi}}\int_{x}^{\infty} \mathrm{exp}\left(-\frac{u^2}{2}\right)du$ \cite{Q-func} and $Q^{-1}(x)=\sqrt{2} \mathrm{erf}^{-1}(1-2x)$, with $\mathrm{erf}^{-1}(.)$ being the inverse error function}; 
$\mathbb{P}_f= Q\left(\frac{\eta}{\sqrt{ \sigma_n^2/(T_0 f_s) }}\right)$ represents the false alarm rate, with $\eta$ being the decision threshold.  

{\color{black}In accordance with (\ref{Pd_metric}), we note that $\mathbb{P}_d (\mathbf{\Psi}_S,r,\Delta_\theta,\Delta_\varphi)$ decreases if $r$ increases, implying that the detection performance will be degraded when the target moves away from the RIS. Thus, without loss of generality, we consider the worst-case detection probability with respect to $r=r_{\mathrm{max}}$, i.e. $\mathbb{P}_d (\mathbf{\Psi}_S,r_{\mathrm{max}},\Delta_\theta,\Delta_\varphi)$, as the sensing performance metric to be optimized.  Moreover,}
we note that under a certain system setup (i.e. given $P_{tx}$, $\mathbf{H}_{BR}$, etc.), the detection probability is positively related to $S(\Delta_\theta,\Delta_\varphi)$ associated with the target size, owing to the integral operator $ \oiint \limits_{S(\Delta_\theta,\Delta_\varphi)} (\cdot)$. If the detection probability is required to satisfy $\mathbb{P}_d (\mathbf{\Psi}_S,r_{\mathrm{max}},\Delta_\theta,\Delta_\varphi)\geq \gamma_{\mathbb{P}_d^{(\mathrm{min})}}$ with $\gamma_{\mathbb{P}_d^{(\mathrm{min})}}\in(0,1)$ being a constant, the target size should be larger than a certain minimum. 
Based on the above analysis, we provide the following definition.
\begin{definition}
Under a certain system setup, a target at the distance of $r_{\mathrm{max}}$ in the sensing direction of $\mathbf{\Psi}_S=(\theta_S,\varphi_S)$ is called to be “$\gamma_{\mathbb{P}_d^{(\mathrm{min})}}$-detectable”, if its scattering surface area, i.e. $S(\Delta_\theta,\Delta_\varphi)$, can make $\mathbb{P}_d (\mathbf{\Psi}_S,r_{\mathrm{max}},\Delta_\theta,\Delta_\varphi)\geq \gamma_{\mathbb{P}_d^{(\mathrm{min})}}$ hold. Then, the scattering surface area of the smallest $\gamma_{\mathbb{P}_d^{(\mathrm{min})}}$-detectable target is defined as the {ultimate detection resolution (UDR)}, mathematically given by
 \begin{equation}\label{Detectable_Resolution}{\color{black}
\mathrm{UDR}(\Delta_\theta,\Delta_\varphi,\gamma_{\mathbb{P}_d^{(\mathrm{min})}}) 
\overset{\triangle}{=}  
\min_{\Delta_\theta,\Delta_\varphi} \{S(\Delta_\theta,\Delta_\varphi)\}, \ \mathrm{subject\ to} \ \mathbb{P}_d (\mathbf{\Psi}_S,r_{\mathrm{max}},\Delta_\theta,\Delta_\varphi)\geq \gamma_{\mathbb{P}_d^{(\mathrm{min})}},
}\end{equation}\normalsize
{\color{black}which represents the minimum $S(\Delta_\theta,\Delta_\varphi)$ with respect to $\Delta_\theta$ and $\Delta_\varphi$ that makes $\mathbb{P}_d (\mathbf{\Psi}_S,r_{\mathrm{max}},\Delta_\theta,\Delta_\varphi)\geq \gamma_{\mathbb{P}_d^{(\mathrm{min})}}$ hold.}
\end{definition}
The UDR defined in (\ref{Detectable_Resolution}) characterizes the capability of the target detection from the perspective of the size of the detectable target.
In Section III, we will show that the UDR can be explicitly determined via the feasibility condition of the optimization problem.





\subsection{Problem Statement}

Based on $\mathrm{SNR_{UE}}(\mathbf{w}_c,\mathbf{w}_s,\bm{\omega})$ and $\mathbb{P}_d (\mathbf{\Psi}_S,r_\mathrm{max},\Delta_\theta,\Delta_\varphi)$, our objective is to improve the UE communication performance as much as possible, while ensuring the minimum detection probability requirement.  The overall optimization problem is then formulated {\color{black}as
 \begin{subequations}\label{Overall_problem} 
\begin{align}
(\mathrm{P1}):\ \ \mathop{\mathrm{maximize}}\limits_{\mathbf{w}_{c}, \mathbf{w}_{s},\mathbf{w}_{rx},\bm{\omega}}\ \  & \mathrm{SNR_{UE}}(\mathbf{w}_c,\mathbf{w}_s,\bm{\omega}), \\
\mathrm{subject\ to}\ \  & |[\bm{\omega}]_\ell|=1,  \ell=1,2,...,N, \\
&  tr(\mathbf{w}_{c}\mathbf{w}_{c}^\mathrm{H}) + tr(\mathbf{w}_{s}\mathbf{w}_{s}^\mathrm{H}) \leq P_{tx}, \\
& \|\mathbf{w}_{rx}\|_2=1,\\
& \mathbb{P}_d (\mathbf{\Psi}_S,r_\mathrm{max},\Delta_\theta,\Delta_\varphi) \geq \gamma_{\mathbb{P}_d^{(\mathrm{min})}},
\end{align}
\end{subequations}\normalsize
where constraint (\ref{Overall_problem}b) comes from the unit-modulus property of the adjustable phase-shift variables; constraint (\ref{Overall_problem}c) is the total transmit power limit; constraint (\ref{Overall_problem}d) comes from the unit-norm receive combining vector at the BS; constraint (\ref{Overall_problem}e) means that the detection probability is guaranteed to be not lower than a threshold $\gamma_{\mathbb{P}_d^{(\mathrm{min})}}$.
As $(\mathrm{P1})$ is non-convex, we will propose a novel alternative optimization approach to solve $(\mathrm{P1})$ in the next section.}

\section{Joint Active and Passive Beamforming Optimization}

This section is dedicated to solving Problem $(\mathrm{P1})$ by {\color{black} developing a novel alternative optimization approach to optimize $\mathbf{w}_{c}$, $\mathbf{w}_{s}$, $\mathbf{w}_{rx}$ and $\bm{\omega}$. First, with fixed $\mathbf{w}_{rx}$ and $\bm{\omega}$, we optimize $\mathbf{w}_{c}$ and $\mathbf{w}_{s}$ by proposing a bisection-search based method. Then, with fixed $\mathbf{w}_{c}$, $\mathbf{w}_{s}$ and $\bm{\omega}$, we optimize $\mathbf{w}_{rx}$ by transforming the original subproblem into an equivalent Rayleigh-quotient problem. Subsequently, when optimizing $\bm{\omega}$ with fixed $\mathbf{w}_{c}$, $\mathbf{w}_{s}$ and $\mathbf{w}_{rx}$, we adopt the Charnes-Cooper transformation to deal with the fractional objective function, and propose a novel convexification process as well as an SCA-based algorithm to iteratively find the solution. Finally, we design the overall algorithm, and present essential analyses on the computational complexity, the algorithm convergence and the problem feasibility condition.

\subsection{Optimization of $\mathbf{w}_{c}$, $\mathbf{w}_{s}$ with Fixed $\mathbf{w}_{rx}$, $\bm{\omega}$}}

{\color{black}
Considering $\mathbf{w}_{rx}$ and $\bm{\omega}$ to be given, we optimize $\mathbf{w}_{c}$ and $\mathbf{w}_{s}$ by solving the following subproblem:
 \begin{subequations}\label{P2} 
\begin{align}
(\mathrm{P2}):\ \ \mathop{\mathrm{maximize}}\limits_{\mathbf{w}_{c}, \mathbf{w}_{s}}\ \  & \mathrm{SNR_{UE}}(\mathbf{w}_c,\mathbf{w}_s,\bm{\omega}), \\
\mathrm{subject\ to}\ \  & \mathrm{ (\ref{Overall_problem}c), (\ref{Overall_problem}e)}.
\end{align}
\end{subequations}\normalsize
After defining $\mathbf{W}_c=\mathbf{w}_c\mathbf{w}_c^\mathrm{H}$, $\mathbf{W}_s=\mathbf{w}_s\mathbf{w}_s^\mathrm{H}$, 
$\mathbf{\Xi} = [\mathbf{H}_{BR}^\mathrm{H} \mathbf{\Omega}^\mathrm{H}(\mathbf{\Psi}_{R},\mathbf{\Psi}_{U}) \mathbf{h}_{RU} + \mathbf{h}_{BU}] [\mathbf{h}_{RU}^{\mathrm{H}}\mathbf{\Omega}(\mathbf{\Psi}_{R},\mathbf{\Psi}_{U})\mathbf{H}_{BR} + \mathbf{h}_{BU}^{\mathrm{H}}]$, and 
 \begin{align}
\mathbf{M}_r=\int_{\varphi_S-\frac{\Delta_{\varphi}}{2}}^{\varphi_S+\frac{\Delta_{\varphi}}{2}} \int_{\theta_S-\frac{\Delta_{\theta}}{2}}^{\theta_S+\frac{\Delta_{\theta}}{2}} &
\{ | g_{uc}(\mathbf{\Psi}_R,\mathbf{\Psi}) |^2 
| g_{uc}(\mathbf{\Psi},\mathbf{\Psi}_R) |^2 
 \mathbf{H}_{BR}^\mathrm{H} \mathrm{diag}(\bm{\omega}^\mathrm{H}) \mathbf{a}(\theta,\varphi) 
 \mathbf{a}^{\mathrm{H}}(\theta,\varphi) \mathrm{diag}(\bm{\omega}^{\mathrm{H}})\mathbf{H}_{BR} \mathbf{w}_{rx} \nonumber\\
&\times \mathbf{w}_{rx}^{\mathrm{H}} \mathbf{H}_{BR}^{\mathrm{H}} \mathrm{diag}(\bm{\omega}) \mathbf{a}(\theta,\varphi)  \mathbf{a}^{\mathrm{H}}(\theta,\varphi) \mathrm{diag}(\bm{\omega})\mathbf{H}_{BR}  \sin{\theta}\}\ d\theta\  d\varphi, \nonumber
 \end{align}\normalsize
the above $\mathrm{(P2)}$ can be recast as
 \begin{subequations}\label{P3} 
\begin{align}
(\mathrm{P3}):\ \ \mathop{\mathrm{maximize}}\limits_{\mathbf{W}_{c}\succeq\mathbf{0}, \mathbf{W}_{s}\succeq\mathbf{0}}\ \  & \frac{tr(\mathbf{\Xi}\mathbf{W}_c)}{tr(\mathbf{\Xi}\mathbf{W}_s)+\sigma_{n,u}^2}, \\
\mathrm{subject\ to}\ \  & tr(\mathbf{W}_c)+tr(\mathbf{W}_s) \leq P_{tx},\\
& tr(\mathbf{M}_r (\mathbf{W}_c + \mathbf{W}_s)) \geq \mathcal{G},\\
&\mathrm{rank}(\mathbf{W}_c)=1, \mathrm{rank}(\mathbf{W}_s)=1,
\end{align}
\end{subequations}\normalsize
where $ \mathcal{G}$ is a constant given by
 \begin{equation}\label{mathcal_G}
\mathcal{G} = \frac{4\pi r_{\mathrm{max}}^2 \lambda^2 \sigma_n^2}{\mathbb{E}[\delta_S] T_0 f_s} \left(Q^{-1}(\mathbb{P}_f) - Q^{-1}(\gamma_{\mathbb{P}_d^{(\mathrm{min})}}) \right)^{2}.
\end{equation}\normalsize

To solve $(\mathrm{P3})$, we introduce an auxiliary variable $t\geq 0$ and transform $(\mathrm{P3})$ into 
 \begin{subequations}\label{P4} 
\begin{align}
(\mathrm{P4}):\ \ \mathop{\mathrm{maximize}}\limits_{\mathbf{W}_{c}\succeq\mathbf{0}, \mathbf{W}_{s}\succeq\mathbf{0},t\geq 0}\ \  & t, \\
\mathrm{subject\ to}\ \  & \frac{tr(\mathbf{\Xi}\mathbf{W}_c)}{tr(\mathbf{\Xi}\mathbf{W}_s)+\sigma_{n,u}^2}\geq t,\\
& \mathrm{(\ref{P3}b),(\ref{P3}c),(\ref{P3}d)}.
\end{align}
\end{subequations}\normalsize
Since $\mathbf{W}_{s}$ and $t$ are coupled in (\ref{P4}b) making $(\mathrm{P4})$ non-convex, here we propose a new bisection-search based method to yield the solution. Concretely, we first treat $t$ as a constant and consider the following feasibility check problem, described as
 \begin{subequations}\label{P5} 
\begin{align}
(\mathrm{P5}):\ \ \mathop{\mathrm{maximize}}\limits_{\mathbf{W}_{c}\succeq\mathbf{0}, \mathbf{W}_{s}\succeq\mathbf{0}}\ \  & t, \\
\mathrm{subject\ to}\ \  
& \mathrm{(\ref{P4}b),(\ref{P3}b),(\ref{P3}c),(\ref{P3}d)}.
\end{align}
\end{subequations} \normalsize
To decide the maximum of $t$, we choose a certain $t=t_{\mathrm{min}}$ such that $(\mathrm{P5})$ has feasible solutions, and choose a certain $t=t_{\mathrm{max}}>t_{\mathrm{min}}$ that makes $(\mathrm{P5})$ infeasible. Then, we
successively branch on the initial search interval of $t\in [t_{\mathrm{min}},t_{\mathrm{max}}]$, and examine the feasibility of $(\mathrm{P5})$ by using the SDR to solve it after each branch. This procedure repeats, until the length of the search interval is cut to be smaller than a predetermined value. The details are summarized in Algorithm 1.
\begin{algorithm}
\color{black}
\caption{The proposed bisection-search based algorithm for solving $(\mathrm{P4})$.}
\LinesNumbered
{\bf Initialization:} $t_{\mathrm{min}}$, $t_{\mathrm{max}}$ ($t_{\mathrm{max}}>t_{\mathrm{min}}>0$), and $\epsilon_1>0$\;

{\bf Repeat:} \\
\ \ \ 1) Compute $t_{\mathrm{new}}=\frac{t_{\mathrm{min}}+t_{\mathrm{max}}}{2}$\;
\ \ \ 2) Use SDR to solve $(\mathrm{P5})$ with $t=t_{\mathrm{new}}$ by dropping the rank-one constraint (\ref{P3}d)\;
\ \ \ 3) {\bf If} Problem $(\mathrm{P5})$ with $t=t_{\mathrm{new}}$ has feasible solutions {\bf then} \\ \ \ \ \ \ \ \ \ \ $t_{\mathrm{min}}\leftarrow t_{\mathrm{new}}$,  and record the optimal solutions of $\mathbf{W}_c$ and $\mathbf{W}_s$ as $\mathbf{W}_c^{(\star)}$ and $\mathbf{W}_s^{(\star)}$; \\
\ \ \ \ \ \ {\bf Else} $t_{\mathrm{max}}\leftarrow t_{\mathrm{new}}$\;
{\bf Until:} $t_{\mathrm{max}} - t_{\mathrm{min}} \leq \epsilon_1$\;


Perform eigenvalue decomposition for $\mathbf{W}_c^{(\star)}$ and $\mathbf{W}_s^{(\star)}$ to obtain $\mathbf{w}_c^{(\star)}$ and $\mathbf{w}_s^{(\star)}$\;

{\bf Output:} $\mathbf{w}_c^{(\star)}$ and $\mathbf{w}_s^{(\star)}$\;

\end{algorithm}

\begin{remark}
Let $k=2$ denote the number of optimization variables in $(\mathrm{P5})$, and $m=3$ denote the number of constraints except for the rank-one constraint (\ref{P3}d). In accordance with \cite{SDR-SPM}, because $m\leq k+1$ holds, the SDR of $(\mathrm{P5})$ is tight and there exist rank-one solutions $\mathbf{W}_c^{(\star)}$ and $\mathbf{W}_s^{(\star)}$ that can be efficiently found. Then, the optimal $\mathbf{w}_c$ and $\mathbf{w}_s$, denoted by $\mathbf{w}_c^{(\star)}$ and $\mathbf{w}_s^{(\star)}$, can be obtained through the eigenvalue decompositions of $\mathbf{W}_c^{(\star)}$ and $\mathbf{W}_s^{(\star)}$.

\end{remark}

\subsection{Optimization of $\mathbf{w}_{rx}$ with Fixed $\mathbf{w}_{c}$, $\mathbf{w}_{s}$ and $\bm{\omega}$}

Then, considering $\mathbf{w}_{c}$, $\mathbf{w}_{s}$ and $\bm{\omega}$ to be given, we optimize the receive combining vector $\mathbf{w}_{rx}$. As $\mathbf{w}_{rx}$ only appears in (\ref{Overall_problem}d) and (\ref{Overall_problem}e) to maintain the required target sensing performance, we can update $\mathbf{w}_{rx}$ by maximizing the detection probability via
 \begin{subequations}\label{P6} 
\begin{align}
(\mathrm{P6}):\ \ \mathop{\mathrm{maximize}}\limits_{\mathbf{w}_{rx}}\ \  & \mathbb{P}_d (\mathbf{\Psi}_S,r_\mathrm{max},\Delta_\theta,\Delta_\varphi), \\
\mathrm{subject\ to}\ \  & \| \mathbf{w}_{rx} \|_2=1.
\end{align}
\end{subequations}\normalsize
According to (\ref{Pd_metric}), the maximization of $\mathbb{P}_d (\mathbf{\Psi}_S,r_\mathrm{max},\Delta_\theta,\Delta_\varphi)$ is equivalent to the maximization of $P_{rx}(\mathbf{\Psi}_S,r_\mathrm{max},\Delta_\theta,\Delta_\varphi)$ in (\ref{P_rx_detail}). Hence, by defining 
\small \begin{align}
\mathbf{M}_{cs} = \int_{\varphi_S-\frac{\Delta_{\varphi}}{2}}^{\varphi_S+\frac{\Delta_{\varphi}}{2}} \int_{\theta_S-\frac{\Delta_{\theta}}{2}}^{\theta_S+\frac{\Delta_{\theta}}{2}} 
& \left\{ \frac{ \mathbb{E}\{\delta_S\}}{4\pi r_{\mathrm{max}}^2 \lambda^2 }   | g_{uc}(\mathbf{\Psi}_R,\mathbf{\Psi}) |^2 
| g_{uc}(\mathbf{\Psi},\mathbf{\Psi}_R) |^2 
 \mathbf{H}_{BR}^\mathrm{H} \mathrm{diag}(\bm{\omega}) \mathbf{a}(\theta,\varphi) 
 \mathbf{a}^{\mathrm{H}}(\theta,\varphi) \mathrm{diag}(\bm{\omega})\mathbf{H}_{BR} ( \mathbf{w}_{c} \mathbf{w}_{c}^{\mathrm{H}} \right. 
\nonumber\\ 
& + \mathbf{w}_{s} \mathbf{w}_{s}^{\mathrm{H}} ) \mathbf{H}_{BR}^{\mathrm{H}} \mathrm{diag}(\bm{\omega}^{\mathrm{H}}) \mathbf{a}(\theta,\varphi)  \mathbf{a}^{\mathrm{H}}(\theta,\varphi) \mathrm{diag}(\bm{\omega}^{\mathrm{H}})\mathbf{H}_{BR}  \sin{\theta}\} \ d\theta \ d\varphi,
 \end{align}\normalsize
 the above $(\mathrm{P6})$ can be recast as
 \begin{subequations}\label{P7} 
\begin{align}
(\mathrm{P7}):\ \ \mathop{\mathrm{maximize}}\limits_{\mathbf{w}_{rx}}\ \  & \mathbf{w}_{rx}^\mathrm{H} \mathbf{M}_{cs} \mathbf{w}_{rx}, \\
\mathrm{subject\ to}\ \  & \| \mathbf{w}_{rx} \|_2=1.
\end{align}
\end{subequations}\normalsize

Note that when constrained by (\ref{P7}b), the objective in (\ref{P7}a) is equivalent to $\mathop{\mathrm{maximize}}\limits_{\mathbf{w}_{rx}} \frac{\mathbf{w}_{rx}^\mathrm{H} \mathbf{M}_{cs} \mathbf{w}_{rx}} {\mathbf{w}_{rx}^\mathrm{H} \mathbf{w}_{rx}}$ since $\mathbf{w}_{rx}^\mathrm{H} \mathbf{w}_{rx}=1$ holds, which implies that Problem $(\mathrm{P7})$ is a typical Rayleigh-quotient problem. As a result, the solution of $(\mathrm{P7})$ can be readily attained as
 \begin{equation}\label{wrx_opt}
\mathbf{w}_{rx} = \mathbf{w}_{rx}^{(\star)} =  \mathbf{e}_{\mathrm{max}}(\mathbf{M}_{cs}),
\end{equation}\normalsize
where $\mathbf{e}_{\mathrm{max}}(\mathbf{M}_{cs})$ denotes the eigenvector of $\mathbf{M}_{cs}$ corresponding to the largest eigenvalue of $\mathbf{M}_{cs}$.

}

\subsection{Optimization of $\bm{\omega}$ with Fixed $\mathbf{w}_{c}$, $\mathbf{w}_{s}$ and $\mathbf{w}_{rx}$}

Since the optimal $\mathbf{w}_{c}$, $\mathbf{w}_{s}$ and $\mathbf{w}_{rx}$ have been settled, we are now ready to optimize $\bm{\omega}$. With given $\mathbf{w}_{c}$, $\mathbf{w}_{s}$ and $\mathbf{w}_{rx}$, we begin our analysis from the following subproblem with respect to $\bm{\omega}$, given by
 \begin{subequations}\label{P8}
\begin{align}
(\mathrm{P8}):\ \ \mathop{\mathrm{maximize}}\limits_{\bm{\omega}}\ \  & \mathrm{SNR_{UE}}(\mathbf{w}_c,\mathbf{w}_s,\bm{\omega}), \\
\mathrm{subject\ to}\ \  & \mathrm{(\ref{Overall_problem}b), (\ref{Overall_problem}e)}.
\end{align}
\end{subequations}\normalsize
{\color{black}As $(\mathrm{P8})$ is complicatedly intractable due to the fractional objective function and the non-convex constraints, we first use the Charnes-Cooper transformation to convert the objective into a linear function}. Then, we propose to convexify constraint (\ref{Overall_problem}e) into a linear constraint with matrix operations and a real-valued first-order Taylor {\color{black}expansion}. Finally, we use the SDR to relax the problem and design an SCA-based algorithm to acquire a local optimal phase-shift solution. 

{\color{black}

\subsubsection{\textbf{Charnes-Cooper Transformation for $(\mathrm{P8})$}} 

To deal with the fractional objective in $(\mathrm{P8})$, we first define $\mathbf{q}=[\bm{\omega}^\mathrm{T},1]^\mathrm{T}$ and $\mathbf{Q}=\mathbf{q}\mathbf{q}^\mathrm{H}$, such that the objective function can be rewritten as
 \begin{equation}
\mathrm{SNR_{UE}} = \frac{\mathbf{q}^\mathrm{H} \mathbf{Z}_c \mathbf{q}}{\mathbf{q}^\mathrm{H} \mathbf{Z}_s \mathbf{q} + \sigma_{n,u}^2} = \frac{tr(\mathbf{Z}_c \mathbf{Q})}{tr(\mathbf{Z}_s \mathbf{Q})+ \sigma_{n,u}^2},
\end{equation}\normalsize
where $\mathbf{Z}_c$ and $\mathbf{Z}_s$ are, respectively, expressed as
\small \begin{equation}\nonumber
\mathbf{Z}_c = 
\left[\begin{matrix}
\frac{4\pi}{\lambda^2} |g_{uc}(\mathbf{\Psi}_{R},\mathbf{\Psi}_{U})|^2 \mathrm{diag}(\mathbf{w}_c^\mathrm{H} \mathbf{H}_{BR}^\mathrm{H}) \mathbf{h}_{RU}\mathbf{h}_{RU}^\mathrm{H} \mathrm{diag}( \mathbf{H}_{BR} \mathbf{w}_c),
& \frac{\sqrt{4\pi}}{\lambda} g_{uc}^*(\mathbf{\Psi}_{R},\mathbf{\Psi}_{U}) \mathrm{diag}(\mathbf{w}_c^\mathrm{H} \mathbf{H}_{BR}^\mathrm{H}) \mathbf{h}_{RU}\mathbf{h}_{BU}^\mathrm{H} \mathbf{w}_c \\
\frac{\sqrt{4\pi}}{\lambda} g_{uc}(\mathbf{\Psi}_{R},\mathbf{\Psi}_{U}) \mathbf{w}_c^\mathrm{H} \mathbf{h}_{BU} \mathbf{h}_{RU}^\mathrm{H} \mathrm{diag}( \mathbf{H}_{BR} \mathbf{w}_c),
&  \mathbf{w}_c^\mathrm{H} \mathbf{h}_{BU} \mathbf{h}_{BU}^\mathrm{H} \mathbf{w}_c
\end{matrix}
\right],
\end{equation}\normalsize
\small \begin{equation}\nonumber
\mathbf{Z}_s = 
\left[\begin{matrix}
\frac{4\pi}{\lambda^2} |g_{uc}(\mathbf{\Psi}_{R},\mathbf{\Psi}_{U})|^2 \mathrm{diag}(\mathbf{w}_s^\mathrm{H} \mathbf{H}_{BR}^\mathrm{H}) \mathbf{h}_{RU}\mathbf{h}_{RU}^\mathrm{H} \mathrm{diag}( \mathbf{H}_{BR} \mathbf{w}_s),
& \frac{\sqrt{4\pi}}{\lambda} g_{uc}^*(\mathbf{\Psi}_{R},\mathbf{\Psi}_{U}) \mathrm{diag}(\mathbf{w}_s^\mathrm{H} \mathbf{H}_{BR}^\mathrm{H}) \mathbf{h}_{RU}\mathbf{h}_{BU}^\mathrm{H} \mathbf{w}_s \\
\frac{\sqrt{4\pi}}{\lambda} g_{uc}(\mathbf{\Psi}_{R},\mathbf{\Psi}_{U}) \mathbf{w}_s^\mathrm{H} \mathbf{h}_{BU} \mathbf{h}_{RU}^\mathrm{H} \mathrm{diag}( \mathbf{H}_{BR} \mathbf{w}_s),
&  \mathbf{w}_s^\mathrm{H} \mathbf{h}_{BU} \mathbf{h}_{BU}^\mathrm{H} \mathbf{w}_s
\end{matrix}
\right].
\end{equation}\normalsize
Then, Problem $\mathrm{(P8)}$ is transformed into
 \begin{subequations}\label{P9}
\begin{align}
(\mathrm{P9}):\ \ \mathop{\mathrm{maximize}}\limits_{\mathbf{Q}\succeq \mathbf{0}}\ \  & \frac{tr(\mathbf{Z}_c \mathbf{Q})}{tr(\mathbf{Z}_s \mathbf{Q})+ \sigma_{n,u}^2}, \\
\mathrm{subject\ to}\ \  & tr\left(\mathbf{E}_\ell \mathbf{Q}\right)=1,  \ell=1,2,...,N+1,  \\ 
& \mathrm{rank}(\mathbf{Q})=1 ,\\
& \mathrm{(\ref{Overall_problem}e)},
\end{align}
\end{subequations}\normalsize
where $\mathbf{E}_\ell$ is a selecting matrix with each element satisfying
 \begin{equation}
[\mathbf{E}_\ell]_{(m,n)} = 
\left\{
\begin{matrix}
1,\ \ m=n=\ell,\\
0,\ \ \ \mathrm{otherwise}.
\end{matrix}
\right.
\end{equation}\normalsize

As the objective function in $(\mathrm{P9})$ is non-convex with respect to $\mathbf{Q}$, we apply the Charnes-Cooper transformation to acquire an equivalent linear expression. In specific, we introduce $\mu=\frac{1}{tr(\mathbf{Z}_s \mathbf{Q})+ \sigma_{n,u}^2}$ and $\mathbf{X}=\mu\mathbf{Q}$, so as to transform $(\mathrm{P9})$ into
 \begin{subequations}\label{P10}
\begin{align}
(\mathrm{P10}):\ \ \mathop{\mathrm{maximize}}\limits_{\mathbf{X}\succeq \mathbf{0},\mu>0}\ \  & tr(\mathbf{Z}_c \mathbf{X}), \\
\mathrm{subject\ to}\ \  & tr\left(\mathbf{E}_\ell \mathbf{X}\right)=\mu,  \ell=1,2,...,N+1,  \\ 
& \mathrm{rank}(\mathbf{X})=1 ,\\
& tr(\mathbf{Z}_s \mathbf{X})+ \mu \sigma_{n,u}^2 = 1, \\
& \mathrm{(\ref{Overall_problem}e)}.
\end{align}
\end{subequations}\normalsize
In Problem $(\mathrm{P10})$, the constraints and the objective function are convex except for (\ref{P10}c) and (\ref{Overall_problem}e). Thereupon, we will further focus on the convexification of (\ref{Overall_problem}e) to make $(\mathrm{P10})$ more tractable.

}

\subsubsection{\textbf{Convexification of Constraint (\ref{Overall_problem}e)}}

To convexify (\ref{Overall_problem}e), we need to first convert (\ref{Overall_problem}e) into a constraint with respect to variable $\mathbf{X}$. To begin with, we rewrite $\mathbb{P}_d (\mathbf{\Psi}_S,r_{\mathrm{max}},\Delta_\theta,\Delta_\varphi) \geq \gamma_{\mathbb{P}_d^{(\mathrm{min})}}$ as
 \begin{equation}
Q \left( Q^{-1}(\mathbb{P}_f) - \frac{\sqrt{T_0 f_s P_{rx}(\mathbf{\Psi}_S,r_{\mathrm{max}},\Delta_\theta,\Delta_\varphi) }}{\sigma_n} \right) \geq \gamma_{\mathbb{P}_d^{(\mathrm{min})}},
\end{equation}\normalsize
yielding
 \begin{equation}\label{P_rx_inequal}
P_{rx}(\mathbf{\Psi}_S,r_{\mathrm{max}},\Delta_\theta,\Delta_\varphi)
\geq \frac{\sigma_n^2}{T_0 f_s} \left(   Q^{-1}(\mathbb{P}_f)  -  Q^{-1} (\gamma_{\mathbb{P}_d^{(\mathrm{min})}}) \right)^2.
\end{equation}\normalsize
Then, by defining {\color{black} a function $f(\bm{\omega})$ in relation to $\bm{\omega}$ as
\small \begin{align}\nonumber
f(\bm{\omega})=\int_{\varphi_S-\frac{\Delta_{\varphi}}{2}}^{\varphi_S+\frac{\Delta_{\varphi}}{2}} \int_{\theta_S-\frac{\Delta_{\theta}}{2}}^{\theta_S+\frac{\Delta_{\theta}}{2}}
& \left\{ 
| g_{uc}(\mathbf{\Psi}_R,\mathbf{\Psi}) |^2 
| g_{uc}(\mathbf{\Psi},\mathbf{\Psi}_R) |^2 \right. 
\mathbf{a}^{\mathrm{H}}(\theta,\varphi) \mathrm{diag}(\bm{\omega})\mathbf{H}_{BR}(\mathbf{w}_{c}\mathbf{w}_{c}^\mathrm{H} + \mathbf{w}_{s}\mathbf{w}_{s}^\mathrm{H}) \mathbf{H}_{BR}^\mathrm{H} \mathrm{diag}(\bm{\omega}^\mathrm{H}) \mathbf{a}(\theta,\varphi) \nonumber\\
&\  \times \mathbf{a}^{\mathrm{H}}(\theta,\varphi) \mathrm{diag}(\bm{\omega}^{\mathrm{H}})\mathbf{H}_{BR} \mathbf{w}_{rx} \mathbf{w}_{rx}^{\mathrm{H}} \mathbf{H}_{BR}^{\mathrm{H}} \mathrm{diag}(\bm{\omega}) \mathbf{a}(\theta,\varphi)  \sin{\theta} \}  d\theta\ d\varphi, \nonumber
\end{align}\normalsize
the} above (\ref{P_rx_inequal}) can be equivalently written as
 \begin{equation}
f(\bm{\omega}) \geq \mathcal{G},
\end{equation}\normalsize
where $ \mathcal{G}$ is given in (\ref{mathcal_G}).

To simplify the expression, we define two auxiliary matrices with respect to $\theta$ and $\varphi$ as {\color{black}
 \begin{equation}
\mathbf{A}(\theta,\varphi) = |g_{uc}(\mathbf{\Psi}_R,\mathbf{\Psi})|^2 \left[ \mathrm{diag}\{\mathbf{a}^{\mathrm{H}}(\theta,\varphi)\}\mathbf{H}_{BR}(\mathbf{w}_{c}  \mathbf{w}_{c}^\mathrm{H} + \mathbf{w}_{s}  \mathbf{w}_{s}^\mathrm{H})\mathbf{H}_{BR}^\mathrm{H} \mathrm{diag}\{\mathbf{a}(\theta,\varphi)\}  \right]^{\mathrm{T}},
\end{equation}
\begin{equation}
\mathbf{B}(\theta,\varphi) = | g_{uc}(\mathbf{\Psi},\mathbf{\Psi}_R)|^2
\mathrm{diag}\{\mathbf{a}^\mathrm{H}(\theta,\varphi)\} \mathbf{H}_{BR} \mathbf{w}_{rx}  \mathbf{w}_{rx}^\mathrm{H} \mathbf{H}_{BR}^\mathrm{H} \mathrm{diag}\{\mathbf{a}(\theta,\varphi)\} \sin{\theta}.
\end{equation}\normalsize
Then}, we have
 \begin{align}\label{Simplify-29}
f(\bm{\omega})
= & \int_{\varphi_S-\frac{\Delta_{\varphi}}{2}}^{\varphi_S+\frac{\Delta_{\varphi}}{2}} \int_{\theta_S-\frac{\Delta_{\theta}}{2}}^{\theta_S+\frac{\Delta_{\theta}}{2}}
\bm{\omega}^\mathrm{H} \mathbf{A}(\theta,\varphi) \bm{\omega}
\bm{\omega}^\mathrm{H} \mathbf{B}(\theta,\varphi) \bm{\omega}
\ d\theta\ d\varphi  \nonumber\\
\overset{(a)}{=} & \int_{\varphi_S-\frac{\Delta_{\varphi}}{2}}^{\varphi_S+\frac{\Delta_{\varphi}}{2}} \int_{\theta_S-\frac{\Delta_{\theta}}{2}}^{\theta_S+\frac{\Delta_{\theta}}{2}}
\mathbf{q}^\mathrm{H} \widetilde{\mathbf{A}}(\theta,\varphi) \mathbf{q}
\mathbf{q}^\mathrm{H} \widetilde{\mathbf{B}}(\theta,\varphi) \mathbf{q}
\ d\theta\ d\varphi \nonumber\\
 = & f(\mathbf{q}),
\end{align}\normalsize
where $\widetilde{\mathbf{A}}(\theta,\varphi)$ and $\widetilde{\mathbf{B}}(\theta,\varphi)$ are given by
 \begin{equation}
\widetilde{\mathbf{A}}(\theta,\varphi)=
\left[\begin{matrix} \mathbf{A}(\theta,\varphi) & \mathbf{0}_{N\times 1} \\ \mathbf{0}_{1\times N} & 0
\end{matrix} \right],\ 
\widetilde{\mathbf{B}}(\theta,\varphi)=
\left[\begin{matrix} \mathbf{B}(\theta,\varphi) & \mathbf{0}_{N\times 1} \\ \mathbf{0}_{1\times N} & 0
\end{matrix} \right],
\end{equation}\normalsize
and derivation $(a)$ uses the property of $\bm{\omega}^\mathrm{H} \mathbf{A}(\theta,\varphi) \bm{\omega}=\mathbf{q}^\mathrm{H} \widetilde{\mathbf{A}}(\theta,\varphi) \mathbf{q}$ and $\bm{\omega}^\mathrm{H} \mathbf{B}(\theta,\varphi) \bm{\omega}=\mathbf{q}^\mathrm{H} \widetilde{\mathbf{B}}(\theta,\varphi) \mathbf{q}$.

In (\ref{Simplify-29}), $f(\mathbf{q})$ is composed of a product of double quadratic forms involved in an integral, which is still highly non-convex. To tackle this issue, we derive the following proposition.

\begin{proposition}
By introducing $\mathbf{Q}=\mathbf{q}\mathbf{q}^\mathrm{H}$, $f(\mathbf{q})$ in (\ref{Simplify-29}) can be transformed into
 \begin{equation}\label{g_Q}
f(\mathbf{q}) = \|\mathbf{v}(\mathbf{Q})\|_2^2,
\end{equation}\normalsize
where $\mathbf{v}(\mathbf{Q})\in \mathbb{C}^{(N+1)^2 \times 1}$ is a vector with respect to $\mathbf{Q}$, given by 
 \begin{equation}
\mathbf{v}(\mathbf{Q}) = \left[tr(\mathbf{S}_1 \mathbf{Q}),tr(\mathbf{S}_2 \mathbf{Q}),\cdots,tr(\mathbf{S}_{(N+1)^2} \mathbf{Q})\right]^\mathrm{T},
\end{equation}\normalsize
where $\mathbf{S}_i$ for $i=1,2,\cdots,(N+1)^2$ are specified in (\ref{S_i}) in Appendix A, and are independent of $\mathbf{Q}$.

\end{proposition}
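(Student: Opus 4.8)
The plan is to turn $f(\mathbf{q})$ into a positive semidefinite (PSD) quadratic form in $\mathrm{vec}(\mathbf{Q})$ and then read off the matrices $\mathbf{S}_i$ from an eigen-decomposition. First I would observe that, because $\mathbf{Q}=\mathbf{q}\mathbf{q}^{\mathrm H}$, each quadratic form inside the integrand of (\ref{Simplify-29}) is a trace against $\mathbf{Q}$, namely $\mathbf{q}^{\mathrm H}\widetilde{\mathbf{A}}(\theta,\varphi)\mathbf{q}=tr(\widetilde{\mathbf{A}}(\theta,\varphi)\mathbf{Q})$ and $\mathbf{q}^{\mathrm H}\widetilde{\mathbf{B}}(\theta,\varphi)\mathbf{q}=tr(\widetilde{\mathbf{B}}(\theta,\varphi)\mathbf{Q})$, both of which are real and nonnegative. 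The structural fact I would rely on is that $\widetilde{\mathbf{A}}$ and $\widetilde{\mathbf{B}}$ are Hermitian PSD: each is the zero-padding of a Hermitian PSD Gram-type matrix $\mathrm{diag}\{\mathbf{a}^{\mathrm H}\}\mathbf{H}_{BR}(\cdots)\mathbf{H}_{BR}^{\mathrm H}\mathrm{diag}\{\mathbf{a}\}$, so they admit rank-one decompositions $\widetilde{\mathbf{A}}=\sum_p\widetilde{\mathbf{a}}_p\widetilde{\mathbf{a}}_p^{\mathrm H}$ and $\widetilde{\mathbf{B}}=\sum_q\widetilde{\mathbf{b}}_q\widetilde{\mathbf{b}}_q^{\mathrm H}$ with only a few nonzero terms.

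Next I would expand the integrand as $tr(\widetilde{\mathbf{A}}\mathbf{Q})\,tr(\widetilde{\mathbf{B}}\mathbf{Q})=\sum_{p,q}|\widetilde{\mathbf{a}}_p^{\mathrm H}\mathbf{q}|^2\,|\widetilde{\mathbf{b}}_q^{\mathrm H}\mathbf{q}|^2$ and then perform the decisive manipulation: since only magnitudes matter, I conjugate one factor and regroup, $|\widetilde{\mathbf{a}}_p^{\mathrm H}\mathbf{q}|^2|\widetilde{\mathbf{b}}_q^{\mathrm H}\mathbf{q}|^2=|(\widetilde{\mathbf{a}}_p^{\mathrm H}\mathbf{q})(\mathbf{q}^{\mathrm H}\widetilde{\mathbf{b}}_q)|^2=|\widetilde{\mathbf{a}}_p^{\mathrm H}\mathbf{Q}\widetilde{\mathbf{b}}_q|^2=|tr(\widetilde{\mathbf{b}}_q\widetilde{\mathbf{a}}_p^{\mathrm H}\mathbf{Q})|^2$, where the rank-one structure $\mathbf{Q}=\mathbf{q}\mathbf{q}^{\mathrm H}$ is used in the middle equality. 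This is precisely what is needed, as it converts the bilinear-in-$\mathbf{q}$ product into a single trace that is \emph{linear} in $\mathbf{Q}$. Hence the integrand equals a finite sum $\sum_{p,q}|tr(\mathbf{T}_{pq}(\theta,\varphi)\mathbf{Q})|^2$ with $\mathbf{T}_{pq}=\widetilde{\mathbf{b}}_q\widetilde{\mathbf{a}}_p^{\mathrm H}$.

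I would then carry the integral into a Gram matrix. Using $tr(\mathbf{T}\mathbf{Q})=\mathrm{vec}(\mathbf{T}^{\mathrm H})^{\mathrm H}\mathrm{vec}(\mathbf{Q})$ and writing $\mathbf{t}_{pq}=\mathrm{vec}(\mathbf{T}_{pq}^{\mathrm H})$, the accumulated expression becomes $f(\mathbf{q})=\mathrm{vec}(\mathbf{Q})^{\mathrm H}\mathbf{\Theta}\,\mathrm{vec}(\mathbf{Q})$, where $\mathbf{\Theta}=\int\!\!\int\sum_{p,q}\mathbf{t}_{pq}\mathbf{t}_{pq}^{\mathrm H}\,d\theta\,d\varphi$ is an $(N+1)^2\times(N+1)^2$ Hermitian PSD matrix, being an integral of rank-one PSD terms. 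Since $\mathrm{rank}(\mathbf{\Theta})\le(N+1)^2$, an eigen-decomposition $\mathbf{\Theta}=\sum_{i=1}^{(N+1)^2}\lambda_i\mathbf{u}_i\mathbf{u}_i^{\mathrm H}$ (padding with zero eigenvalues if necessary) gives $f(\mathbf{q})=\sum_{i=1}^{(N+1)^2}\big|\sqrt{\lambda_i}\,\mathbf{u}_i^{\mathrm H}\mathrm{vec}(\mathbf{Q})\big|^2$. Defining $\mathbf{S}_i=\big(\mathrm{vec}^{-1}(\sqrt{\lambda_i}\,\mathbf{u}_i)\big)^{\mathrm H}$ so that $tr(\mathbf{S}_i\mathbf{Q})=\sqrt{\lambda_i}\,\mathbf{u}_i^{\mathrm H}\mathrm{vec}(\mathbf{Q})$ yields $f(\mathbf{q})=\sum_i|tr(\mathbf{S}_i\mathbf{Q})|^2=\|\mathbf{v}(\mathbf{Q})\|_2^2$ with the $\mathbf{S}_i$ independent of $\mathbf{Q}$, which is the claimed form (\ref{g_Q}); these $\mathbf{S}_i$ are exactly what (\ref{S_i}) records.

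The genuinely delicate point, which I would present most carefully, is the conjugation mismatch: a direct attempt to express the product of the two quadratic forms through $\mathbf{Q}\otimes\mathbf{Q}$ produces bilinear forms $\mathbf{q}^{\mathrm T}(\cdot)\mathbf{q}$ (equivalently terms in $\mathbf{q}\otimes\mathbf{q}=\mathrm{vec}(\mathbf{q}\mathbf{q}^{\mathrm T})$), which are \emph{not} of the required form $tr(\mathbf{S}_i\mathbf{Q})=\mathbf{q}^{\mathrm H}\mathbf{S}_i\mathbf{q}$. This is overcome precisely by the magnitude-squared manipulation above, which hinges on the echo power being real-valued and on $\mathbf{Q}$ being rank one. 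The second point, reducing the continuous angular integral to exactly $(N+1)^2$ terms, is then automatic from the finite dimension of the $\mathrm{vec}(\mathbf{Q})$ space, since however involved the integral is, the resulting PSD Gram matrix $\mathbf{\Theta}$ cannot have rank exceeding $(N+1)^2$.
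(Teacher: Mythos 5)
Your proposal is correct, and it shares the paper's overall architecture: linearize the quartic in $\mathbf{q}$ into a form linear in $\mathbf{Q}$ (hence quadratic in $\mathrm{vec}(\mathbf{Q})$), integrate the angle-dependent positive semidefinite kernel into a single $(N+1)^2\times(N+1)^2$ Hermitian PSD matrix, eigendecompose it, and recover the $\mathbf{S}_i$ by applying $\mathrm{vec}^{-1}$ to the scaled eigenvectors. Where you differ is in the linearization step. The paper applies trace/Kronecker identities wholesale, writing $\mathbf{q}^\mathrm{H}\widetilde{\mathbf{A}}\mathbf{q}\,\mathbf{q}^\mathrm{H}\widetilde{\mathbf{B}}\mathbf{q}=tr\{\widetilde{\mathbf{B}}\mathbf{Q}\widetilde{\mathbf{A}}\mathbf{Q}\}=\left(\mathbf{q}^\mathrm{T}\otimes\mathbf{q}^\mathrm{H}\right)\left[\widetilde{\mathbf{A}}^\mathrm{T}\otimes\widetilde{\mathbf{B}}\right]\left(\mathbf{q}^*\otimes\mathbf{q}\right)$ and integrating $\widetilde{\mathbf{A}}^\mathrm{T}\otimes\widetilde{\mathbf{B}}$ into a matrix $\mathbf{K}$, whereas you expand $\widetilde{\mathbf{A}},\widetilde{\mathbf{B}}$ into rank-one terms and use the modulus-regrouping $|\widetilde{\mathbf{a}}_p^\mathrm{H}\mathbf{q}|^2|\widetilde{\mathbf{b}}_q^\mathrm{H}\mathbf{q}|^2=|tr(\widetilde{\mathbf{b}}_q\widetilde{\mathbf{a}}_p^\mathrm{H}\mathbf{Q})|^2$. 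These routes are equivalent at bottom: summing your rank-one Gram terms gives $\mathbf{\Theta}=\iint\widetilde{\mathbf{B}}^\mathrm{T}\otimes\widetilde{\mathbf{A}}\,d\theta\,d\varphi$, which is the paper's $\mathbf{K}$ with the Kronecker factors swapped (a permutation similarity), and indeed the paper's left vector satisfies $[\mathrm{vec}(\mathbf{Q}^\mathrm{T})]^\mathrm{T}=\mathrm{vec}(\mathbf{Q})^\mathrm{H}$, so both proofs reduce $f$ to a Hermitian quadratic form in $\mathrm{vec}(\mathbf{Q})$. What your route buys is that the PSD structure needed for the eigendecomposition is made explicit (the paper asserts $\mathbf{C}(\theta,\varphi)\succeq\mathbf{0}$ without proof, while you verify that $\widetilde{\mathbf{A}},\widetilde{\mathbf{B}}$ are Gram-type matrices, modulo the small point that $\mathbf{A}$ carries an outer transpose, and the transpose of a Hermitian PSD matrix is again Hermitian PSD); what it costs is that your $\mathbf{S}_i=\bigl(\mathrm{vec}^{-1}(\sqrt{\lambda_i}\,\mathbf{u}_i)\bigr)^\mathrm{H}$ need not literally coincide with the paper's $\mathbf{S}_i=\mathrm{vec}^{-1}(\bm{\sigma}_i)$ in (\ref{S_i}), so your closing claim that they are ``exactly what (\ref{S_i}) records'' is a harmless overstatement --- the proposition's representation claim is established either way.
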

\begin{proof}
The proof is relegated to Appendix A.
\end{proof}

Based on Proposition 1, constraint (\ref{Overall_problem}e) can be equivalently written as
 \begin{equation}\label{g_cons1}
\|\mathbf{v}(\mathbf{Q})\|_2 \geq 
\sqrt{\mathcal{G} }.
\end{equation}\normalsize
By denoting $\mathbf{X}=\mu\mathbf{Q}$ as introduced previously, constraint (\ref{Overall_problem}e) can be further expressed as
 \begin{equation}\label{g_cons}
 \|\mathbf{v}(\mathbf{X})\|_2 - 
\mu \sqrt{\mathcal{G}  } \geq 0,
\end{equation}\normalsize
so that Problem $(\mathrm{P10})$ can be recast as
 \begin{subequations}\label{P11}
\begin{align}
(\mathrm{P11}):\ \ \mathop{\mathrm{maximize}}\limits_{\mathbf{X}\succeq \mathbf{0},\mu>0}\ \  & tr(\mathbf{Z}_c \mathbf{X}), \\
\mathrm{subject\ to}\ \  & \mathrm{(\ref{P10}b),(\ref{P10}c),(\ref{P10}d), (\ref{g_cons})}.
\end{align}
\end{subequations}\normalsize

\begin{remark}
Since $tr(\cdot)$ is linear and $\|\cdot\|_2$ is convex, $\|\mathbf{v}(\mathbf{X})\|_2$ is convex with respect to $\mathbf{X}$. 
However, (\ref{g_cons}) is still a non-convex constraint, due to the inequality of “convex function”$\geq$“constant”. 
\end{remark}

To further convexify (\ref{g_cons}), we next propose to linearise $\|\mathbf{v}(\mathbf{X})\|_2$ by a real-valued first-order {\color{black}Taylor expansion}. 
Let $\mathcal{T}_{\mathbf{z}_k}\left(h(\mathbf{z})\right)$ denote the first-order {\color{black}Taylor expansion} for a function $h(\mathbf{z})$ around a given point $\mathbf{z}_k$, expressed as
 \begin{equation}\label{FApprox}
\mathcal{T}_{\mathbf{z}_k}\left(h(\mathbf{z})\right)=h(\mathbf{z}_k) + \left\langle \left.\frac{\partial h(\mathbf{z})}{\partial \mathbf{z}}\right|_{\mathbf{z}=\mathbf{z}_k}, \left(\mathbf{z}-\mathbf{z}_k\right) \right\rangle,
\end{equation}\normalsize
where $\mathbf{z}_k$ can be, e.g. a local optimum obtained in the $k$-th iteration in a successive optimization process. Then, using the definition in (\ref{FApprox}), the first-order {\color{black}Taylor expansion} of $\|\mathbf{v}(\mathbf{X})\|_2$ around $\mathbf{v}(\mathbf{X}_k)$ can be calculated by
 \begin{equation}
\mathcal{T}_{\mathbf{v}(\mathbf{X}_k)}\left(\|\mathbf{v}(\mathbf{X})\|_2\right)=\|\mathbf{v}(\mathbf{X}_k)\|_2 + \left\langle \left.\frac{\partial \|\mathbf{v}(\mathbf{X})\|_2}{\partial\mathbf{v}(\mathbf{X})}\right|_{ \mathbf{v}(\mathbf{X})=\mathbf{v}(\mathbf{X}_k) }, \left[\mathbf{v}(\mathbf{X})-\mathbf{v}(\mathbf{X}_k)\right] \right\rangle.
\end{equation}\normalsize
{\color{black}It is remarkable that because $\|\mathbf{v}(\mathbf{X})\|_2$ is convex, $\mathcal{T}_{ \mathbf{v}(\mathbf{X}_k) }\left( \|\mathbf{v}(\mathbf{X})\|_2 \right)$ is a lower-bound of $\|\mathbf{v}(\mathbf{X})\|_2$ around $\forall \mathbf{v}(\mathbf{X}_k)$, i.e. $\|\mathbf{v}(\mathbf{X})\|_2 \geq \mathcal{T}_{ \mathbf{v}(\mathbf{X}_k) }\left( \|\mathbf{v}(\mathbf{X})\|_2 \right)$,
which implies that (\ref{g_cons}) is guaranteed if 
 \begin{equation}\label{g_cons_Taylor}
\mathcal{T}_{ \mathbf{v}(\mathbf{X}_k) }\left( \|\mathbf{v}(\mathbf{X})\|_2 \right) - 
\mu \sqrt{\mathcal{G}  } \geq 0
\end{equation} \normalsize
holds. 
Thus, $\mathcal{T}_{ \mathbf{v}(\mathbf{X}_k) }\left( \|\mathbf{v}(\mathbf{X})\|_2 \right)$ can be used as a surrogate function of $\|\mathbf{v}(\mathbf{X})\|_2$}.

To derive $\mathcal{T}_{ \mathbf{v}(\mathbf{X}_k) }\left( \|\mathbf{v}(\mathbf{X})\|_2 \right)$, we should first derive $\left.\frac{\partial \|\mathbf{v}(\mathbf{X})\|_2}{\partial\mathbf{v}(\mathbf{X})}\right|_{ \mathbf{v}(\mathbf{X})=\mathbf{v}(\mathbf{X}_k) } $. Nevertheless, we find that the inner product $\left\langle \left.\frac{\partial \|\mathbf{v}(\mathbf{X})\|_2}{\partial\mathbf{v}(\mathbf{X})}\right|_{ \mathbf{v}(\mathbf{X})=\mathbf{v}(\mathbf{X}_k) }, \left[\mathbf{v}(\mathbf{X})-\mathbf{v}(\mathbf{X}_k)\right] \right\rangle$ is generally a complex value, making (\ref{g_cons_Taylor}) invalid (i.e. “complex value” $\geq$ “real value”) during the optimization process. Fortunately, we note that the optimization for a complex variable can be intrinsically treated as the optimization for its real and imaginary parts. As such, we propose to transform $\mathbf{v}(\mathbf{X})$ by combining its real and imaginary parts into a new vector, in order to find an equivalent real representation of $\mathcal{T}_{ \mathbf{v}(\mathbf{X}_k) }\left( \|\mathbf{v}(\mathbf{X})\|_2 \right)$.

\begin{proposition}
A real representation of $\mathcal{T}_{ \mathbf{v}(\mathbf{X}_k) }\left( \|\mathbf{v}(\mathbf{X})\|_2 \right)$ is derived as
 \begin{equation}\label{real_representation}
\mathcal{T}_{ \widetilde{\mathbf{v}}(\mathbf{X}_k) }\left( \|\widetilde{\mathbf{v}}(\mathbf{X})\|_2 \right)
= tr\left(\mathbf{\Upsilon}_{\mathbf{X}_k}  \mathbf{X}\right),
\end{equation}\normalsize
where $\widetilde{\mathbf{v}}(\mathbf{X})$ is a real-value vector expressed as (\ref{widetilde_vQ_A}) in Appendix B.
$\mathbf{\Upsilon}_{\mathbf{X}_k}$ is a matrix with respect to $\mathbf{X}_k$ but is independent of $\mathbf{X}$, given by (\ref{Upsilon_Qk}) in Appendix B.
\end{proposition}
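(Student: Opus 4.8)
The plan is to exploit the elementary identity $\|\mathbf{v}(\mathbf{X})\|_2 = \|\widetilde{\mathbf{v}}(\mathbf{X})\|_2$, where $\widetilde{\mathbf{v}}(\mathbf{X})$ stacks the real and imaginary parts of $\mathbf{v}(\mathbf{X})$ into a single real vector, so that the first-order Taylor expansion can be carried out entirely over the reals and the problematic ``complex value $\geq$ real value'' situation is avoided. First I would observe that $\mathbf{X}=\mu\mathbf{Q}$ is Hermitian, so each entry $tr(\mathbf{S}_i\mathbf{X})$ of $\mathbf{v}(\mathbf{X})$ obeys $\mathfrak{Re}(tr(\mathbf{S}_i\mathbf{X}))=tr\big(\frac{\mathbf{S}_i+\mathbf{S}_i^\mathrm{H}}{2}\mathbf{X}\big)$ and $\mathfrak{Im}(tr(\mathbf{S}_i\mathbf{X}))=tr\big(\frac{\mathbf{S}_i-\mathbf{S}_i^\mathrm{H}}{2j}\mathbf{X}\big)$, both linear in $\mathbf{X}$ with Hermitian coefficient matrices. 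Concatenating these $2(N+1)^2$ real entries defines the real-valued $\widetilde{\mathbf{v}}(\mathbf{X})$ of (\ref{widetilde_vQ_A}), and since $\|\mathbf{v}(\mathbf{X})\|_2^2=\sum_i\big(\mathfrak{Re}([\mathbf{v}(\mathbf{X})]_i)^2+\mathfrak{Im}([\mathbf{v}(\mathbf{X})]_i)^2\big)$, the two norms coincide.

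Next, with $\widetilde{\mathbf{v}}(\mathbf{X})$ real and $\|\widetilde{\mathbf{v}}\|_2$ a convex function of a real argument, I would compute its gradient $\partial\|\widetilde{\mathbf{v}}\|_2/\partial\widetilde{\mathbf{v}}=\widetilde{\mathbf{v}}/\|\widetilde{\mathbf{v}}\|_2$ and substitute into the definition (\ref{FApprox}). Here the inner product is the ordinary real one, so the expansion is a genuine real-valued lower bound. A short simplification, using $\widetilde{\mathbf{v}}(\mathbf{X}_k)^\mathrm{T}\widetilde{\mathbf{v}}(\mathbf{X}_k)=\|\widetilde{\mathbf{v}}(\mathbf{X}_k)\|_2^2$, collapses the three terms of the expansion to the single ratio $\mathcal{T}_{\widetilde{\mathbf{v}}(\mathbf{X}_k)}(\|\widetilde{\mathbf{v}}(\mathbf{X})\|_2)=\widetilde{\mathbf{v}}(\mathbf{X}_k)^\mathrm{T}\widetilde{\mathbf{v}}(\mathbf{X})/\|\widetilde{\mathbf{v}}(\mathbf{X}_k)\|_2$.

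Finally I would linearise this ratio in $\mathbf{X}$. Because every entry of $\widetilde{\mathbf{v}}(\mathbf{X})$ has the form $tr(\mathbf{\Theta}_i\mathbf{X})$ with $\mathbf{\Theta}_i$ Hermitian, the numerator factors through the trace: $\widetilde{\mathbf{v}}(\mathbf{X}_k)^\mathrm{T}\widetilde{\mathbf{v}}(\mathbf{X})=\sum_i[\widetilde{\mathbf{v}}(\mathbf{X}_k)]_i\,tr(\mathbf{\Theta}_i\mathbf{X})=tr\big((\sum_i[\widetilde{\mathbf{v}}(\mathbf{X}_k)]_i\mathbf{\Theta}_i)\mathbf{X}\big)$ by linearity. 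Dividing by the scalar $\|\widetilde{\mathbf{v}}(\mathbf{X}_k)\|_2$ then gives $tr(\mathbf{\Upsilon}_{\mathbf{X}_k}\mathbf{X})$ with $\mathbf{\Upsilon}_{\mathbf{X}_k}=\frac{1}{\|\widetilde{\mathbf{v}}(\mathbf{X}_k)\|_2}\sum_i[\widetilde{\mathbf{v}}(\mathbf{X}_k)]_i\mathbf{\Theta}_i$, matching the form (\ref{Upsilon_Qk}); it depends only on $\mathbf{X}_k$ and is Hermitian, being a real combination of Hermitian matrices.

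The main obstacle is bookkeeping rather than conceptual: assembling the $\mathbf{\Theta}_i$ from the real/imaginary splits of the $\mathbf{S}_i$ of Proposition~1 and verifying the trace identities used above. These identities hinge critically on $\mathbf{X}$ being Hermitian; without that property the real and imaginary parts would not reduce to traces against fixed coefficient matrices, and the compact expression $tr(\mathbf{\Upsilon}_{\mathbf{X}_k}\mathbf{X})$ would fail to be well-defined.
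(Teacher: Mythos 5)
Your proposal is correct and follows essentially the same route as the paper's Appendix B: splitting $\mathbf{v}(\mathbf{X})$ into real and imaginary parts via $\mathfrak{Re}\{tr(\mathbf{S}_i\mathbf{X})\}=\tfrac{1}{2}tr\left((\mathbf{S}_i+\mathbf{S}_i^\mathrm{H})\mathbf{X}\right)$ and $\mathfrak{Im}\{tr(\mathbf{S}_i\mathbf{X})\}=-\tfrac{j}{2}tr\left((\mathbf{S}_i-\mathbf{S}_i^\mathrm{H})\mathbf{X}\right)$ (valid because $\mathbf{X}$ is Hermitian), stacking them into the real vector $\widetilde{\mathbf{v}}(\mathbf{X})$ with $\|\widetilde{\mathbf{v}}(\mathbf{X})\|_2=\|\mathbf{v}(\mathbf{X})\|_2$, taking the real gradient $\widetilde{\mathbf{v}}/\|\widetilde{\mathbf{v}}\|_2$, collapsing the Taylor expansion to $\|\widetilde{\mathbf{v}}(\mathbf{X}_k)\|_2^{-1}\widetilde{\mathbf{v}}^\mathrm{T}(\mathbf{X}_k)\widetilde{\mathbf{v}}(\mathbf{X})$, and pulling the scalar weights inside the trace to obtain $tr(\mathbf{\Upsilon}_{\mathbf{X}_k}\mathbf{X})$; your compact $\sum_i[\widetilde{\mathbf{v}}(\mathbf{X}_k)]_i\mathbf{\Theta}_i$ expands exactly to the paper's (\ref{Upsilon_Qk}), including the sign flip from $(-j/2)^2=-1/4$ on the imaginary-part terms.
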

\begin{proof}
The proof is relegated to Appendix B.
\end{proof}

In accordance with the proof, $\mathcal{T}_{ \mathbf{v}(\mathbf{X}_k) }\left( \|\mathbf{v}(\mathbf{X})\|_2 \right) - \mu\sqrt{\mathcal{G}}\geq 0$ is equivalent to $\mathcal{T}_{ \widetilde{\mathbf{v}}(\mathbf{X}_k) }\left( \|\widetilde{\mathbf{v}}(\mathbf{X})\|_2 \right) - \mu\sqrt{\mathcal{G}} \geq 0$, such that $\mathcal{T}_{ \widetilde{\mathbf{v}}(\mathbf{X}_k) }\left( \|\widetilde{\mathbf{v}}(\mathbf{X})\|_2 \right)$ can be used as a surrogate function of $\|\mathbf{v}(\mathbf{X})\|_2$. Consequently, 
the constraint $\|\mathbf{v}(\mathbf{X})\|_2 - \mu \sqrt{\mathcal{G}} \geq 0$ can be convexified as
 \begin{equation}\label{Convexification-19c}
tr\left(\mathbf{\Upsilon}_{\mathbf{X}_k}  \mathbf{X}\right) - \mu \sqrt{\mathcal{G}} \geq 0.
\end{equation}\normalsize
Based on (\ref{Convexification-19c}), Problem $(\mathrm{P11})$ can be transformed into
 \begin{subequations}\label{P12}
\begin{align}
(\mathrm{P12}):\ \ \mathop{\mathrm{maximize}}\limits_{\mathbf{X}\succeq \mathbf{0},\mu>0}\ \  & tr(\mathbf{Z}_c \mathbf{X}), \\
\mathrm{subject\ to}\ \  & \mathrm{(\ref{P10}b),(\ref{P10}c),(\ref{P10}d), (\ref{Convexification-19c})}.
\end{align}
\end{subequations}\normalsize

It is worth mentioning that we aim to acquire the phase-shift solution of $(\mathrm{P11})$ by solving $(\mathrm{P12})$. However, given a certain point $\mathbf{X}_k$, the solution of $(\mathrm{P12})$ is only a local optimum around $\mathbf{X}_k$ instead of being the true optimum of $(\mathrm{P11})$. Therefore, we will design an SCA-based algorithm which iteratively solves $(\mathrm{P12})$ to approach the solution of $(\mathrm{P11})$. 

\subsubsection{\textbf{SCA-based Algorithm Design}}

The developed SCA-based algorithm is detailed in Algorithm 2, which can be summarized as follows.
First, we initialize the parameters including the initial point $\mathbf{X}_1$, iteration index $k$ and a terminating condition $\epsilon_2>0$.
Then, we successively solve $(\mathrm{P12})$ until the gap between the objective values in adjacent two iterations is smaller than $\epsilon_2$. Due to the non-convexity of $\mathrm{rank}(\mathbf{X})=1$, the SDR can be applied to drop the rank-one constraint. However, the SDR for $(\mathrm{P12})$ is not strictly guaranteed to be tight. If a rank-one solution is obtained, we can perform eigenvalue decomposition to acquire the optimal $\mathbf{q}$ and $\bm{\omega}$. Otherwise, we need to apply some convex relaxation techniques for phase-only beamforming, e.g. the SDP concave-convex procedure (SDP-CCP) \cite{SDP-CCP}, to iteratively acquire an approximate rank-one solution before eigenvalue decomposition. The SDP-CCP is detailed in [52, Algorithm 1].

\begin{algorithm}
\caption{The developed SCA-based algorithm.}
\LinesNumbered
{\bf Initialization:} \\
\ \ Initialize the first point $\mathbf{X}_1 \succeq \mathbf{0}$ satisfying $\mathrm{rank}(\mathbf{X}_1)=1$; Initialize the iteration time: $k\leftarrow 1$\;
\ \ Set a small constant $\epsilon_2 > 0$\;

{\bf Repeat:} \\
\ \ 1) Compute $\mathbf{\Upsilon}_{\mathbf{X}_k}$ according to (\ref{Upsilon_Qk})\;
\ \ 2) Solve $(\mathrm{P12})$ around point $\mathbf{X}_k$ with the aid of the SDR or SDP-CCP, and obtain the optimal solutions of $\mathbf{X}$ and $\mu$, denoted by $\mathbf{X}^{(\star)}$ and $\mu^{(\star)}$\;
\ \ 3) $\mathbf{X}_{k+1}\leftarrow \mathbf{X}^{(\star)}$\;
\ \ 4) Compute $\mathrm{Gap}_{k}^{k+1}=tr(\mathbf{Z}_c \mathbf{X}_{k+1}) - tr(\mathbf{Z}_c \mathbf{X}_k)$, compare $\mathrm{Gap}_{k}^{k+1}$ and $\epsilon_2$\;
\ \ 5) $k\leftarrow k+1$\;
{\bf Until:} $\mathrm{Gap}_{k}^{k+1} \leq \epsilon_2$\;

\ \ Obtain $\mathbf{Q}^{(\star)}=\mathbf{X}^{(\star)}/ \mu^{(\star)}$, and obtain $\mathbf{q}^{(\star)}$ from the eigenvalue decomposition of $\mathbf{Q}^{(\star)}$\;

{\bf Output:} $\bm{\omega}^{(\star)}=\mathbf{q}^{(\star)}[1:N]$\;

\end{algorithm}


\begin{proposition}
The proposed Algorithm 2 is convergent.
\end{proposition}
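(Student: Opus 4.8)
The plan is to establish convergence through the standard monotone-plus-bounded argument for successive convex approximation: I would show that the sequence of objective values $\{tr(\mathbf{Z}_c\mathbf{X}_k)\}$ generated by Algorithm 2 is non-decreasing and bounded above, and then invoke the monotone convergence theorem.

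First I would verify that the convexified constraint (\ref{Convexification-19c}) is a tightened restriction of the true constraint (\ref{g_cons}) that is exact at the expansion point. Since $\|\mathbf{v}(\mathbf{X})\|_2$ is convex, its first-order Taylor expansion is a global lower bound, so $\|\mathbf{v}(\mathbf{X})\|_2 \geq \mathcal{T}_{\mathbf{v}(\mathbf{X}_k)}(\|\mathbf{v}(\mathbf{X})\|_2)$, and by Proposition 2 the right-hand side equals $tr(\mathbf{\Upsilon}_{\mathbf{X}_k}\mathbf{X})$. Consequently, any $\mathbf{X}$ obeying (\ref{Convexification-19c}) also obeys (\ref{g_cons}); that is, the feasible set of $(\mathrm{P12})$ built around any point $\mathbf{X}_k$ is contained in the feasible set of $(\mathrm{P11})$, so every iterate $\mathbf{X}_{k+1}$ is feasible for the original problem. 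Moreover, at $\mathbf{X}=\mathbf{X}_k$ the expansion is tight, i.e. $tr(\mathbf{\Upsilon}_{\mathbf{X}_k}\mathbf{X}_k)=\|\mathbf{v}(\mathbf{X}_k)\|_2$, so the previous iterate $\mathbf{X}_k$ itself satisfies (\ref{Convexification-19c}) and is therefore a feasible point of the subproblem $(\mathrm{P12})$ solved at iteration $k$. Using this, monotonicity follows immediately: because $\mathbf{X}_{k+1}$ is an optimal solution of $(\mathrm{P12})$ around $\mathbf{X}_k$ while $\mathbf{X}_k$ is only feasible for the same problem, optimality yields $tr(\mathbf{Z}_c\mathbf{X}_{k+1})\geq tr(\mathbf{Z}_c\mathbf{X}_k)$, which is exactly the non-negativity of $\mathrm{Gap}_k^{k+1}$ used in the stopping rule.

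Next I would argue that the objective is bounded above. Through the Charnes-Cooper identification $\mathbf{X}=\mu\mathbf{Q}$ together with the normalization (\ref{P10}d), the value $tr(\mathbf{Z}_c\mathbf{X})$ coincides with $\mathrm{SNR_{UE}}$ evaluated at the corresponding phase-shift vector; since the transmit power is finite, the channel gains $\mathbf{H}_{BR}$, $\mathbf{h}_{RU}$, $\mathbf{h}_{BU}$ are bounded, and the denominator is bounded below by $\mu\sigma_{n,u}^2>0$, the attainable SNR admits a finite upper bound. A real sequence that is both monotonically non-decreasing and bounded above converges, which establishes the convergence of the objective values produced by Algorithm 2.

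The hard part will be the feasibility-transfer step, where one must simultaneously invoke the global lower-bound property (to keep every iterate feasible for the original problem) and the tightness at the expansion point (to ensure $\mathbf{X}_k$ is admissible in the next subproblem), while reconciling both with the real-valued reformulation of Proposition 2 so that the surrogate $tr(\mathbf{\Upsilon}_{\mathbf{X}_k}\mathbf{X})$ faithfully encodes the lower bound on the complex-valued $\|\mathbf{v}(\mathbf{X})\|_2$. A secondary subtlety is that convergence of the objective value does not by itself guarantee convergence of the iterates $\{\mathbf{X}_k\}$; should iterate convergence be desired, I would additionally exploit the boundedness of the feasible set implied by the power and normalization constraints, combined with a convergent-subsequence argument, to pin down a limit point.
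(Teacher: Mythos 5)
Your proposal is correct and follows essentially the same route as the paper's own proof: both arguments use the global lower-bound property of the first-order Taylor surrogate together with its tightness at the expansion point to show that each iterate remains feasible for the next convexified subproblem, which by optimality of the new solution gives $tr(\mathbf{Z}_c\mathbf{X}_{k+1})\geq tr(\mathbf{Z}_c\mathbf{X}_k)$, and then both invoke boundedness of the objective to conclude convergence of the monotone sequence. Your added observations (iterates staying feasible for $(\mathrm{P11})$, the SNR-based justification of the upper bound, and the caveat that objective convergence does not imply iterate convergence) are refinements of, not departures from, the paper's argument.
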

\begin{proof}
According to the proof of Proposition 2, we have $\|\mathbf{v}(\mathbf{X})\|_2=\|\widetilde{\mathbf{v}}(\mathbf{X})\|_2$. Then, because $\|\widetilde{\mathbf{v}}(\mathbf{X})\|_2$ is convex and $tr\left(\mathbf{\Upsilon}_{\mathbf{X}_k}  \mathbf{X}\right)$ is the first-order {\color{black}Taylor expansion} of $\|\widetilde{\mathbf{v}}(\mathbf{X})\|_2$, there should be $\|\mathbf{v}(\mathbf{X})\|_2 =\|\widetilde{\mathbf{v}}(\mathbf{X})\|_2 \geq tr\left(\mathbf{\Upsilon}_{\mathbf{X}_k}  \mathbf{X}\right)$ and $\|\mathbf{v}(\mathbf{X}_k)\|_2 =\|\widetilde{\mathbf{v}}(\mathbf{X}_k)\|_2 = tr\left(\mathbf{\Upsilon}_{\mathbf{X}_k}  \mathbf{X}_k\right)$ for $\forall \mathbf{X}_k$. 
As a result, we have $tr\left(\mathbf{\Upsilon}_{\mathbf{X}_{k+1}}  \mathbf{X}_{k+1}\right) = \|\widetilde{\mathbf{v}}(\mathbf{X}_{k+1})\|_2 \geq tr\left(\mathbf{\Upsilon}_{\mathbf{X}_{k}}  \mathbf{X}_{k+1}\right)$ for $\forall k\geq 1$. Since $\mathbf{X}_{k+1}$ is the optimal solution of $(\mathrm{P12})$ around point $\mathbf{X}_{k}$, there is $tr\left(\mathbf{\Upsilon}_{\mathbf{X}_{k}}  \mathbf{X}_{k+1}\right) - \mu \sqrt{\mathcal{G} } \geq 0$. Thus, we have $tr\left(\mathbf{\Upsilon}_{\mathbf{X}_{k+1}}  \mathbf{X}_{k+1}\right) - \mu \sqrt{\mathcal{G} } \geq 0$ because $tr\left(\mathbf{\Upsilon}_{\mathbf{X}_{k+1}}  \mathbf{X}_{k+1}\right) \geq tr\left(\mathbf{\Upsilon}_{\mathbf{X}_{k}}  \mathbf{X}_{k+1}\right)$, implying that $\mathbf{X}_{k+1}$ is a possible solution of $(\mathrm{P12})$ around point $\mathbf{X}_{k+1}$. 
If $\mathbf{X}_{k+1}$ is the optimal solution around point $\mathbf{X}_{k+1}$, we have $\mathbf{X}_{k+2}=\mathbf{X}_{k+1}$, resulting in $tr(\mathbf{Z}_c \mathbf{X}_{k+2}) = tr(\mathbf{Z}_c \mathbf{X}_{k+1})$. 
Otherwise, there must exist another solution $\mathbf{X}_{k+2}$ that is better than $\mathbf{X}_{k+1}$, such that $tr(\mathbf{Z}_c \mathbf{X}_{k+2}) > tr(\mathbf{Z}_c \mathbf{X}_{k+1})$ holds owing to the maximization of the objective. Consequently, we have $tr(\mathbf{Z}_c \mathbf{X}_{k+2}) \geq tr(\mathbf{Z}_c \mathbf{X}_{k+1})$ for $\forall k\geq 1$, implying that the objective value is monotonically increasing during the iteration process. Additionally, the objective function $tr(\mathbf{Z}_c \mathbf{X})$ in $(\mathrm{P12})$ is upper-bounded owing to the original unit-modulus phase-shift constraint. Therefore, we conclude that Algorithm 2 is convergent.
\end{proof}

\subsection{\color{black} Overall Algorithm Design for Solving $(\mathrm{P1})$}
{\color{black}
Based on the proposed solutions in Section III-A, B, C, we design the overall optimization algorithm, i.e. Algorithm 3, to solve $(\mathrm{P1})$ and acquire the solutions of $\mathbf{w}_{c}, \mathbf{w}_{s},\mathbf{w}_{rx}$ and $\bm{\omega}$. In Algorithm 3, we alternatively optimize $(\mathbf{w}_{c}, \mathbf{w}_{s})$, $\mathbf{w}_{rx}$ and $\bm{\omega}$, until the convergence criterion is eventually met. The configuration of the initial points $\mathbf{w}_{c,1},\mathbf{w}_{s,1},\mathbf{w}_{rx,1},\bm{\omega}_1$ will be given in Remark 4 in the following Section III-E, in accordance with the feasibility condition of the original optimization problem.

\begin{algorithm}
\color{black}
\caption{The overall algorithm for solving $(\mathrm{P1})$.}
\LinesNumbered
{\bf Initialization and input:} \\
\ \ Initialize $\mathbf{w}_{c,1},\mathbf{w}_{s,1},\mathbf{w}_{rx,1},\bm{\omega}_1$\;
\ \ Initialize the iteration time $ T \leftarrow 1$, terminating condition $\epsilon_3>0$\;

{\bf Repeat:} \\
\ \ 1) Input $\mathbf{w}_{rx,T},\bm{\omega}_T$ to Algorithm 1. 

\ \ \ \ \ Call Algorithm 1 to optimize 
$\mathbf{w}_{c}$, $\mathbf{w}_{s}$, and obtain $\mathbf{w}_{c}^{(\star)}$ and $\mathbf{w}_{s}^{(\star)}$\;
\ \ \ \ \   $\mathbf{w}_{c,T+1}\leftarrow\mathbf{w}_{c}^{(\star)}$, $\mathbf{w}_{s,T+1}\leftarrow\mathbf{w}_{s}^{(\star)}$\;
\ \ 2) Obtain $\mathbf{w}_{rx}^{(\star)}$ according to (\ref{wrx_opt}); $\mathbf{w}_{rx,T+1} \leftarrow \mathbf{w}_{rx}^{(\star)}$\;

\ \ 3) Input $\mathbf{w}_{c,T+1},\mathbf{w}_{s,T+1},\mathbf{w}_{rx,T+1}, \bm{\omega}_T$ to Algorithm 2, and use $\bm{\omega}_T$ to initialize the first point as

\ \ \ \ \ $\mathbf{X}_1=\frac{1}{tr(\mathbf{Z}_s \mathbf{Q}_1)+ \sigma_{n,u}^2} \mathbf{Q}_1$, where $\mathbf{Q}_1 = \mathbf{q}_1 \mathbf{q}_1^\mathrm{H}$ and $\mathbf{q}_1=[\bm{\omega}_T^\mathrm{T},1]^\mathrm{T}$;

\ \ \ \ \ Call Algorithm 2 to optimize 
$\bm{\omega}$ and obtain $\bm{\omega}^{(\star)}$; $\bm{\omega}_{T+1}\leftarrow \bm{\omega}^{(\star)}$ \;

\ \ 4) Compute 
$\mathrm{Gap}_{T}^{T+1}=\mathrm{SNR_{UE}}(\mathbf{w}_{c,T+1},\mathbf{w}_{s,T+1}, \bm{\omega}_{T+1}) - \mathrm{SNR_{UE}}(\mathbf{w}_{c,T},\mathbf{w}_{s,T}, \bm{\omega}_{T})$\;
\ \ 5) $T\leftarrow T+1$\;
{\bf Until:} $\mathrm{Gap}_{T}^{T+1} \leq \epsilon_3$\;


{\bf Output:} $\mathbf{w}_{c,T+1},\mathbf{w}_{s,T+1},\mathbf{w}_{rx,T+1}, \bm{\omega}_{T+1}$ as the final solutions\;

\end{algorithm}

\subsubsection{\textbf{Convergence Analysis}}
Since Algorithm 3 includes an alternative optimization process, it is necessary to analyse its convergence.

\begin{proposition}
The developed Algorithm 3 is convergent.
\end{proposition}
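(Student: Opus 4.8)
The plan is to establish convergence in the standard way for a block coordinate (alternating) maximization scheme: I would show that the sequence of objective values $\mathrm{SNR_{UE}}$ generated across the outer iterations of Algorithm 3 is monotonically non-decreasing and bounded above, so that the monotone convergence theorem guarantees a limit. The argument decomposes blockwise according to the three update steps in Algorithm 3, and leans on Proposition 3 for the $\bm{\omega}$-block.

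For monotonicity, I would track $\mathrm{SNR_{UE}}(\mathbf{w}_{c,T},\mathbf{w}_{s,T},\bm{\omega}_T)$ through one outer iteration. In Step 1, Algorithm 1 returns the \emph{optimal} solution of the $(\mathbf{w}_c,\mathbf{w}_s)$-subproblem $(\mathrm{P2})$ for the fixed $\mathbf{w}_{rx,T},\bm{\omega}_T$; since the incumbent $(\mathbf{w}_{c,T},\mathbf{w}_{s,T})$ is itself feasible for $(\mathrm{P2})$, the returned value cannot decrease. In Step 2, $\mathbf{w}_{rx}$ does not appear in $\mathrm{SNR_{UE}}$ at all, so the objective is unchanged; this step serves only the detection-probability constraint, as discussed below. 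In Step 3, I would invoke Proposition 3: through the Charnes--Cooper-equivalent reformulation one has $tr(\mathbf{Z}_c\mathbf{X})=\mathrm{SNR_{UE}}$, and since Algorithm 2 is initialized at $\mathbf{X}_1$ built from $\bm{\omega}_T$ (so that $tr(\mathbf{Z}_c\mathbf{X}_1)$ equals the current SNR) and produces a monotonically non-decreasing sequence of $tr(\mathbf{Z}_c\mathbf{X})$, its output cannot lower the objective. Chaining the three steps gives $\mathrm{SNR_{UE}}(\mathbf{w}_{c,T+1},\mathbf{w}_{s,T+1},\bm{\omega}_{T+1})\geq\mathrm{SNR_{UE}}(\mathbf{w}_{c,T},\mathbf{w}_{s,T},\bm{\omega}_{T})$.

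For boundedness, I would note that the numerator of $\mathrm{SNR_{UE}}$ is upper-bounded because the transmit power obeys $tr(\mathbf{w}_c\mathbf{w}_c^{\mathrm{H}})+tr(\mathbf{w}_s\mathbf{w}_s^{\mathrm{H}})\leq P_{tx}$, the channels $\mathbf{H}_{BR},\mathbf{h}_{RU},\mathbf{h}_{BU}$ have finite gain, and the phase shifts are unit-modulus; meanwhile the denominator is bounded below by $\sigma_{n,u}^2>0$. Hence $\mathrm{SNR_{UE}}$ stays finite throughout, and a monotone bounded sequence converges.

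The main obstacle I anticipate is the feasibility hand-off between blocks rather than the monotonicity bookkeeping. Because the detection-probability constraint (\ref{Overall_problem}e) couples all four variables, I must argue that the incumbent $\bm{\omega}_T$ remains feasible for the $\bm{\omega}$-subproblem after $(\mathbf{w}_c,\mathbf{w}_s)$ and $\mathbf{w}_{rx}$ have been updated, so that Algorithm 2 can legitimately initialize at $\bm{\omega}_T$. This is exactly where Step 2 earns its place: Step 1 re-imposes (\ref{Overall_problem}e) with the new $(\mathbf{w}_c,\mathbf{w}_s)$ and old $\mathbf{w}_{rx},\bm{\omega}_T$, and Step 2 then \emph{maximizes} the detection probability over $\mathbf{w}_{rx}$, which can only preserve or loosen that constraint. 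I would make this precise by verifying that after Step 2 the point $\bm{\omega}_T$ still satisfies $\mathbb{P}_d(\mathbf{\Psi}_S,r_\mathrm{max},\Delta_\theta,\Delta_\varphi)\geq\gamma_{\mathbb{P}_d^{(\mathrm{min})}}$, certifying a feasible start for Algorithm 2. With this feasibility chain in place, the monotone-plus-bounded argument closes the proof.
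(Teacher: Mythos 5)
Your proof is correct and follows essentially the same route as the paper's: monotonicity of $\mathrm{SNR_{UE}}$ across the block updates plus upper-boundedness (from the transmit-power and unit-modulus constraints) gives convergence of the objective sequence. Your version is in fact more careful than the paper's own two-line argument, which just invokes ``the maximizations of the objectives in $(\mathrm{P2})$ and $(\mathrm{P8})$'': your explicit feasibility hand-off (Step 1 re-imposes the detection constraint with the new beamformers, Step 2 can only increase the detection probability, hence $\bm{\omega}_T$ is a feasible warm start for Algorithm 2) and your appeal to Proposition 3 with the warm-start initialization $\mathbf{X}_1$ correctly account for the fact that Algorithm 2 only guarantees a monotonically improved local solution of $(\mathrm{P8})$, not its global maximum, which the paper's wording glosses over.
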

\begin{proof}
Owing to the maximizations of the objectives in $(\mathrm{P2})$ and $(\mathrm{P8})$, we have $\mathrm{SNR_{UE}}(\mathbf{w}_{c,T+1},\mathbf{w}_{s,T+1}, \bm{\omega}_{T}) \geq \mathrm{SNR_{UE}}(\mathbf{w}_{c,T},\mathbf{w}_{s,T}, \bm{\omega}_{T})$ and $\mathrm{SNR_{UE}}(\mathbf{w}_{c,T+1},\mathbf{w}_{s,T+1}, \bm{\omega}_{T+1}) \geq \mathrm{SNR_{UE}}(\mathbf{w}_{c,T+1},\mathbf{w}_{s,T+1}, \bm{\omega}_{T})$, yielding
\begin{equation}\label{Overall Convergence}
\mathrm{SNR_{UE}}(\mathbf{w}_{c,T+1},\mathbf{w}_{s,T+1}, \bm{\omega}_{T+1}) \geq \mathrm{SNR_{UE}}(\mathbf{w}_{c,T},\mathbf{w}_{s,T}, \bm{\omega}_{T}).
\end{equation}
The inequality (\ref{Overall Convergence}) implies that the objective value in the $(T+1)$-th iteration is not lower than that in the $T$-th iteration. Additionally, due to the total transmit power limit in (\ref{Overall_problem}c) and the unit-modulus phase shift constraint in (\ref{Overall_problem}b), the objective function $\mathrm{SNR_{UE}}(\mathbf{w}_{c},\mathbf{w}_{s}, \bm{\omega})$ is upper-bounded. Consequently, we prove that Algorithm 3 is convergent.
\end{proof}

\subsubsection{\textbf{Computational Complexity Analysis}}
In addition to the convergence behavior, the computational complexity is also an important property to be analysed for a new algorithm. For each iteration in Algorithm 3, the complexities of the optimizations of $(\mathbf{w}_c,\mathbf{w}_s)$, $\mathbf{w}_{rx}$ and $\bm{\omega}$ are detailed as follows:
\begin{enumerate}
\item[•] Optimization of $(\mathbf{w}_c,\mathbf{w}_s)$: Since an SDP problem (i.e. $(\mathrm{P5})$ without the rank-one constraint) needs to be solved after each branch during the bisection search, the computational complexity of this part is $\log_2\left(\frac{t_{\mathrm{max}}-t_{\mathrm{min}}}{\epsilon_1} \right) \times \mathcal{O}(M^{4.5}\log(1/\mathfrak{u}))$, where $\log_2\left(\frac{t_{\mathrm{max}}-t_{\mathrm{min}}}{\epsilon_1} \right)$ is the number of execution times of the bisection search, and $\mathcal{O}(M^{4.5}\log(1/\mathfrak{u}))$ is the complexity of using SDR to solve $(\mathrm{P5})$ in each branch, with $\mathfrak{u}$ being the solution accuracy \cite{SDR-SPM}.

\item[•] Optimization of $\mathbf{w}_{rx}$: According to (\ref{wrx_opt}), one time of eigenvalue decomposition is required to obtain the solution of $\mathbf{w}_{rx}$, whose complexity is $\mathcal{O}(M^3)$.

\item[•] Optimization of $\bm{\omega}$: The phase-shift variables are optimized by executing Algorithm 2 to successively solve $(\mathrm{P12})$, so that the complexity of this part is $K\times\mathcal{O}\left((N+3)^{4}N^{\frac{1}{2}}\log(1/\mathfrak{u})\right)$, where $K$ is the total iteration times required for Algorithm 2 to converge, which will be numerically investigated via simulations in Section IV.
\end{enumerate}

As a result, the computational complexity of each iteration in Algorithm 3 is $\log_2\left(\frac{t_{\mathrm{max}}-t_{\mathrm{min}}}{\epsilon_1} \right) \times \mathcal{O}(M^{4.5}\log(1/\mathfrak{u})) 
+ \mathcal{O}(M^3) + K\times\mathcal{O}\left((N+3)^{4}N^{\frac{1}{2}}\log(1/\mathfrak{u})\right)$.

}

\subsection{Problem Feasibility Condition and UDR}

On account of the total transmit power limit and the unit-modulus phase-shift constraint, the detection probability is restricted below a certain value $U\in(0,1)$, implying that there exists an upper bound $U\in(0,1)$ such that $\mathbb{P}_d (\mathbf{\Psi}_S,r_{\mathrm{max}},\Delta_\theta,\Delta_\varphi)\leq U$ holds for any $ \mathbf{w}_{c}, \mathbf{w}_{s},\mathbf{w}_{rx},\bm{\omega}$ satisfying (\ref{Overall_problem}b), (\ref{Overall_problem}c), (\ref{Overall_problem}d). 
If $U < \gamma_{\mathbb{P}_d^{(\mathrm{min})}}$, Problem $(\mathrm{P1})$ will become infeasible because we cannot find a group of $\mathbf{w}_{c}, \mathbf{w}_{s},\mathbf{w}_{rx},\bm{\omega}$ making constraint (\ref{Overall_problem}e) hold in any case.  Therefore, to keep $(\mathrm{P1})$ feasible, the following condition should be satisfied.

\begin{remark}\color{black}
Problem $(\mathrm{P1})$ is feasible when 
$U=\mathop{\mathrm{max}}\limits_{\mathbf{w}_{c}, \mathbf{w}_{s},\mathbf{w}_{rx},\bm{\omega}}\ \mathbb{P}_d (\mathbf{\Psi}_S,r_\mathrm{max},\Delta_\theta,\Delta_\varphi) \geq \gamma_{\mathbb{P}_d^{(\mathrm{min})}}$ holds, where $U$ can be derived by solving the following problem:
 \begin{subequations}\label{P13} 
\begin{align}
(\mathrm{P13}):\ \ \mathop{\mathrm{maximize}}\limits_{\mathbf{w}_{c}, \mathbf{w}_{s},\mathbf{w}_{rx},\bm{\omega}}\ \  & \mathbb{P}_d (\mathbf{\Psi}_S,r_\mathrm{max},\Delta_\theta,\Delta_\varphi), \\
\mathrm{subject\ to}\ \  & \mathrm{(\ref{Overall_problem}b), (\ref{Overall_problem}c), (\ref{Overall_problem}d)}.
\end{align}
\end{subequations}\normalsize
This is because if $U=\mathop{\mathrm{max}}\limits_{\mathbf{w}_{c}, \mathbf{w}_{s},\mathbf{w}_{rx},\bm{\omega}}\ \mathbb{P}_d (\mathbf{\Psi}_S,r_\mathrm{max},\Delta_\theta,\Delta_\varphi) \geq \gamma_{\mathbb{P}_d^{(\mathrm{min})}}$ holds, there must exist a group of $\mathbf{w}_{c}, \mathbf{w}_{s},\mathbf{w}_{rx}$, $\bm{\omega}$ that can be found to satisfy constraint (\ref{Overall_problem}e).
\end{remark}
Problem $(\mathrm{P13})$ can be solved by alternatively optimizing $(\mathbf{w}_{c}, \mathbf{w}_{s})$, $\mathbf{w}_{rx}$ and $\bm{\omega}$. To optimize $(\mathbf{w}_{c}, \mathbf{w}_{s})$, the objective function in (\ref{P13}a) can be transformed into $tr(\mathbf{M}_r (\mathbf{W}_c + \mathbf{W}_s))$; to optimize $\mathbf{w}_{rx}$, the objective function in (\ref{P13}a) can be recast as $\mathbf{w}_{rx}^\mathrm{H} \mathbf{M}_{cs} \mathbf{w}_{rx}$; to optimize $\bm{\omega}$, the objective function in (\ref{P13}a) can be converted into $ \| \mathbf{v}(\mathbf{Q}) \|_2$. Then, $(\mathbf{w}_{c}, \mathbf{w}_{s})$, $\mathbf{w}_{rx}$ and $\bm{\omega}$ can be updated alternatively by solving the following three subproblems:
\begin{align}\label{P14}
(\mathrm{P14}):\ \ \mathop{\mathrm{maximize}}\limits_{\mathbf{W}_{c}\succeq\mathbf{0}, \mathbf{W}_{s}\succeq\mathbf{0}}\ \   tr(\mathbf{M}_r (\mathbf{W}_c + \mathbf{W}_s)), \ \ 
\mathrm{subject\ to}\ \   \mathrm{(\ref{P3}b),(\ref{P3}d)},
\end{align}\normalsize 
\begin{align}\label{P15}
(\mathrm{P15}):\ \ \mathop{\mathrm{maximize}}\limits_{\mathbf{w}_{rx}}\ \  \mathbf{w}_{rx}^\mathrm{H} \mathbf{M}_{cs} \mathbf{w}_{rx}, \ \ 
\mathrm{subject\ to}\ \  \mathrm{(\ref{P7}b)},
\end{align}\normalsize
\begin{align}\label{P16}
(\mathrm{P16}):\ \ \mathop{\mathrm{maximize}}\limits_{\mathbf{Q}\succeq \mathbf{0}}\ \  \| \mathbf{v}(\mathbf{Q}) \|_2, \ \ 
\mathrm{subject\ to}\ \  \mathrm{(\ref{P9}b),(\ref{P9}c)}.
\end{align}\normalsize

The solutions of $(\mathrm{P14})$-$(\mathrm{P16})$ can be found using the SDR, eigenvalue decomposition and SCA, which can be done in similar ways as described in Section III. A, B, C. The details are skipped here for conciseness.
Moreover, we note that $\mathbb{P}_d (\mathbf{\Psi}_S,r_\mathrm{max},\Delta_\theta,\Delta_\varphi)$ is positively related to $\Delta_\theta$ and $\Delta_\varphi$, because the two terms affect the integral in the expression of the received echo power $P_{rx}(\mathbf{\Psi}_S,r_\mathrm{max},\Delta_\theta,\Delta_\varphi)$ in $\mathbb{P}_d (\mathbf{\Psi}_S,r_\mathrm{max},\Delta_\theta,\Delta_\varphi)$. This implies that if $\Delta_\theta$ and $\Delta_\varphi$ are reduced, $\mathbb{P}_d (\mathbf{\Psi}_S,r_\mathrm{max},\Delta_\theta,\Delta_\varphi)$ will decrease concomitantly on account of the reduction of $P_{rx}(\mathbf{\Psi}_S,r_\mathrm{max},\Delta_\theta,\Delta_\varphi)$. Therewith, 
we can determine $\mathrm{UDR}(\Delta_\theta,\Delta_\varphi,\gamma_{\mathbb{P}_d^{(\mathrm{min})}})$ introduced in Section II in accordance {\color{black}with
\begin{equation}\label{Detail_UDR}
\mathrm{UDR}(\Delta_\theta,\Delta_\varphi,\gamma_{\mathbb{P}_d^{(\mathrm{min})}}) \in \{S(\Delta_\theta,\Delta_\varphi)\ | \ \mathop{\mathrm{max}}\limits_{\mathbf{w}_{c}, \mathbf{w}_{s},\mathbf{w}_{rx},\bm{\omega}}\ \mathbb{P}_d (\mathbf{\Psi}_S,r_\mathrm{max},\Delta_\theta,\Delta_\varphi) = \gamma_{\mathbb{P}_d^{(\mathrm{min})}} \},
\end{equation}
which means that $\mathrm{UDR}(\Delta_\theta,\Delta_\varphi,\gamma_{\mathbb{P}_d^{(\mathrm{min})}})$ belongs to a class of $S(\Delta_\theta,\Delta_\varphi)$, whose corresponding $\Delta_\theta$ and $\Delta_\varphi$ satisfy $\mathop{\mathrm{max}}\limits_{\mathbf{w}_{c}, \mathbf{w}_{s},\mathbf{w}_{rx},\bm{\omega}}\ \mathbb{P}_d (\mathbf{\Psi}_S,r_\mathrm{max},\Delta_\theta,\Delta_\varphi) = \gamma_{\mathbb{P}_d^{(\mathrm{min})}}$. On this basis, if $\Delta_\theta$ and $\Delta_\varphi$ are further scaled down, there will be $\mathop{\mathrm{max}}\limits_{\mathbf{w}_{c}, \mathbf{w}_{s},\mathbf{w}_{rx},\bm{\omega}}\ \mathbb{P}_d (\mathbf{\Psi}_S,r_\mathrm{max},\Delta_\theta,\Delta_\varphi) < \gamma_{\mathbb{P}_d^{(\mathrm{min})}}$ such that $(\mathrm{P1})$ becomes infeasible. As a result, the above (\ref{Detail_UDR}) characterizes the scattering surface area of the smallest $\gamma_{\mathbb{P}_d^{(\mathrm{min})}}$-detectable target.}


Based on the problem feasibility condition in Remark 3, the initial points in Algorithm 3 can be configured as follows.

\begin{remark}\color{black}

Let $ \mathbf{w}_{c}^{(\ddagger)}, \mathbf{w}_{s}^{(\ddagger)},\mathbf{w}_{rx}^{(\ddagger)},\bm{\omega}^{(\ddagger)}$ be the solutions of $(\mathrm{P13})$.
Then, if the feasibility condition $\mathop{\mathrm{max}}\limits_{\mathbf{w}_{c}, \mathbf{w}_{s},\mathbf{w}_{rx},\bm{\omega}}\ \mathbb{P}_d (\mathbf{\Psi}_S,r_\mathrm{max},\Delta_\theta,\Delta_\varphi) \geq \gamma_{\mathbb{P}_d^{(\mathrm{min})}}$ is satisfied, we can use $ \mathbf{w}_{c}^{(\ddagger)}, \mathbf{w}_{s}^{(\ddagger)},\mathbf{w}_{rx}^{(\ddagger)},\bm{\omega}^{(\ddagger)}$ as the initial points of Algorithm 3, i.e. $(\mathbf{w}_{c,1},\mathbf{w}_{s,1},\mathbf{w}_{rx,1},\bm{\omega}_1)= (\mathbf{w}_{c}^{(\ddagger)}, \mathbf{w}_{s}^{(\ddagger)},\mathbf{w}_{rx}^{(\ddagger)},\bm{\omega}^{(\ddagger)})$, such that $(\mathrm{P1})$ has feasible solutions during the optimization process.

\end{remark}



\section{Simulation Results}

This section carries out the simulations to testify the analytical properties of the proposed algorithms, and to evaluate the performance of our beamforming design through the comparisons with several existing benchmarks, including two recently developed approaches considering point targets.


In the simulations, the positions of the BS, the RIS and the UE are {\color{black} $(0,0,18)$, $(2,10,12)$ and $(-30,80,25)$ in meters}. Unless stated otherwise, $\mathbf{\Psi}_S$ and $r_{\mathrm{max}}$ are set as {\color{black}$\mathbf{\Psi}_S=(0.38\pi, 0.44\pi)$ and $r_{\mathrm{max}}=8$ m. The elevation and azimuth angle-spreads of the scattering surface area are $\Delta_\theta=\Delta_\varphi=\frac{\pi}{16}$}. The noise powers are $\sigma_{n,u}^2=-80$ dBm and $\sigma_n^2=-90$ dBm; the carrier frequency is $f_c=2.5$ GHz, resulting in the signal wavelength of $\lambda=11.99$ cm; the average scattering loss is $\mathbb{E}[\delta_S]=-10$ dB \cite{TWC-localization}; the antenna/element spacing is half-wavelength; the terminating conditions of Algorithm 1, 2 and 3 are $\epsilon_1=0.0002$, $\epsilon_2=0.002$ and $\epsilon_3=0.002$; the duration of the sensing time slot is $T_0=0.1$ s; the sampling frequency at the BS is $f_s=1$ kHz; the decision threshold is $\eta=10^{-4.5}$; the Rician factor is $\kappa=10$; the path loss exponents for BS-RIS and RIS-UE links are $\alpha_{BR}=\alpha_{RU}=2.2$, while that for BS-UE link is $\alpha_{BU}=3.5$; the path loss coefficient is $\zeta_0=-30$ dB; the total transmit power limit is $P_{tx}=30$ dBm.
The other parameters including $M$, $N$, $\gamma_{\mathbb{P}_d^{(\mathrm{min})}}$ will vary for diverse observations.

The proposed algorithms are programmed by MATLAB and executed on a personal computer with Intel Core i7 and 32 GB RAM. The optimization problems are solved by CVX Toolbox with Sedumi solver. When computing the integral, we apply the trapezoidal method to approximate the integral value with an accuracy of 100 divisions over the variable range.

\subsection{Convergence Behavior}

\begin{figure}[h]
\centering
\subfloat[]{\includegraphics[width=3.3in]{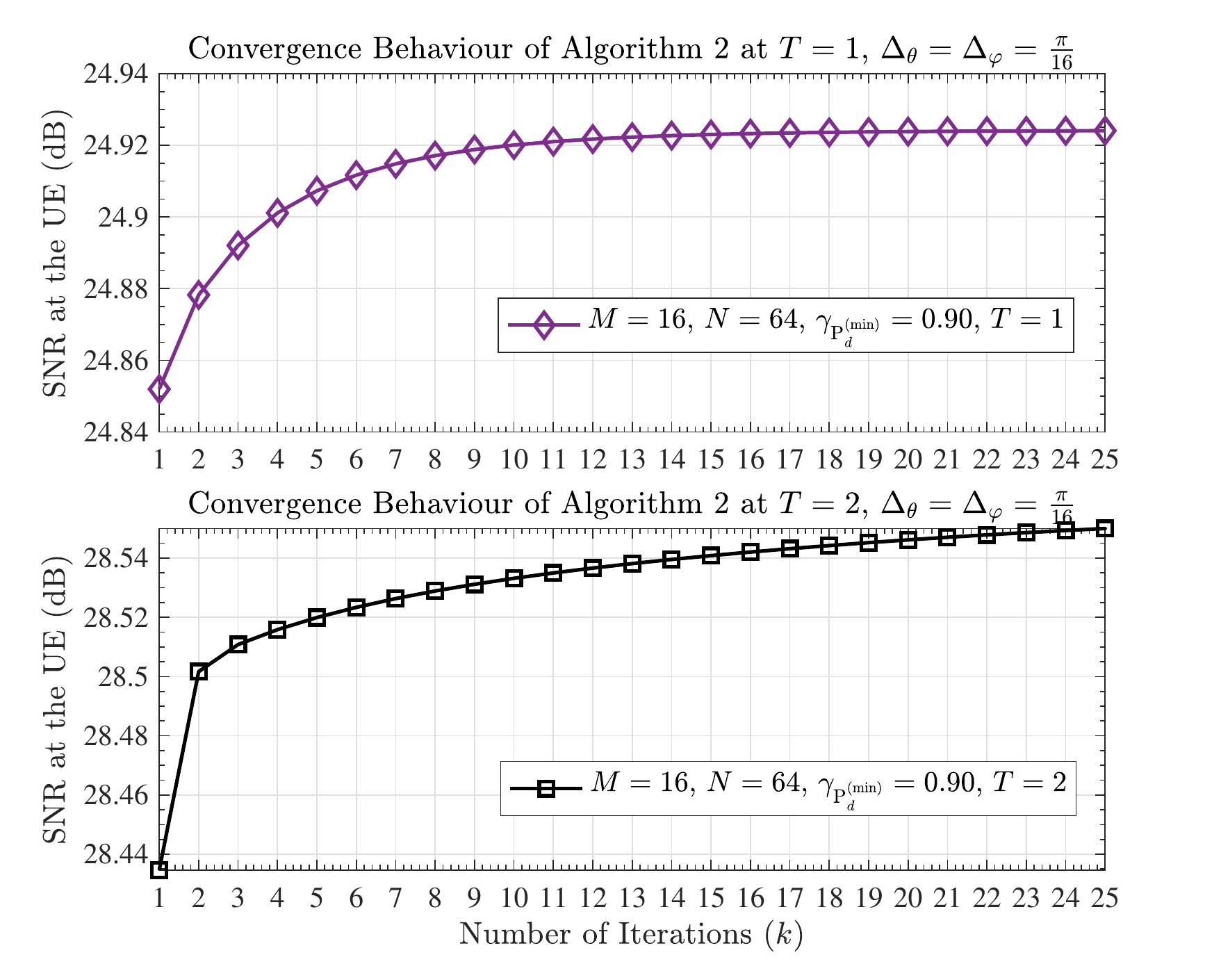}}
\subfloat[]{\includegraphics[width=3.3in]{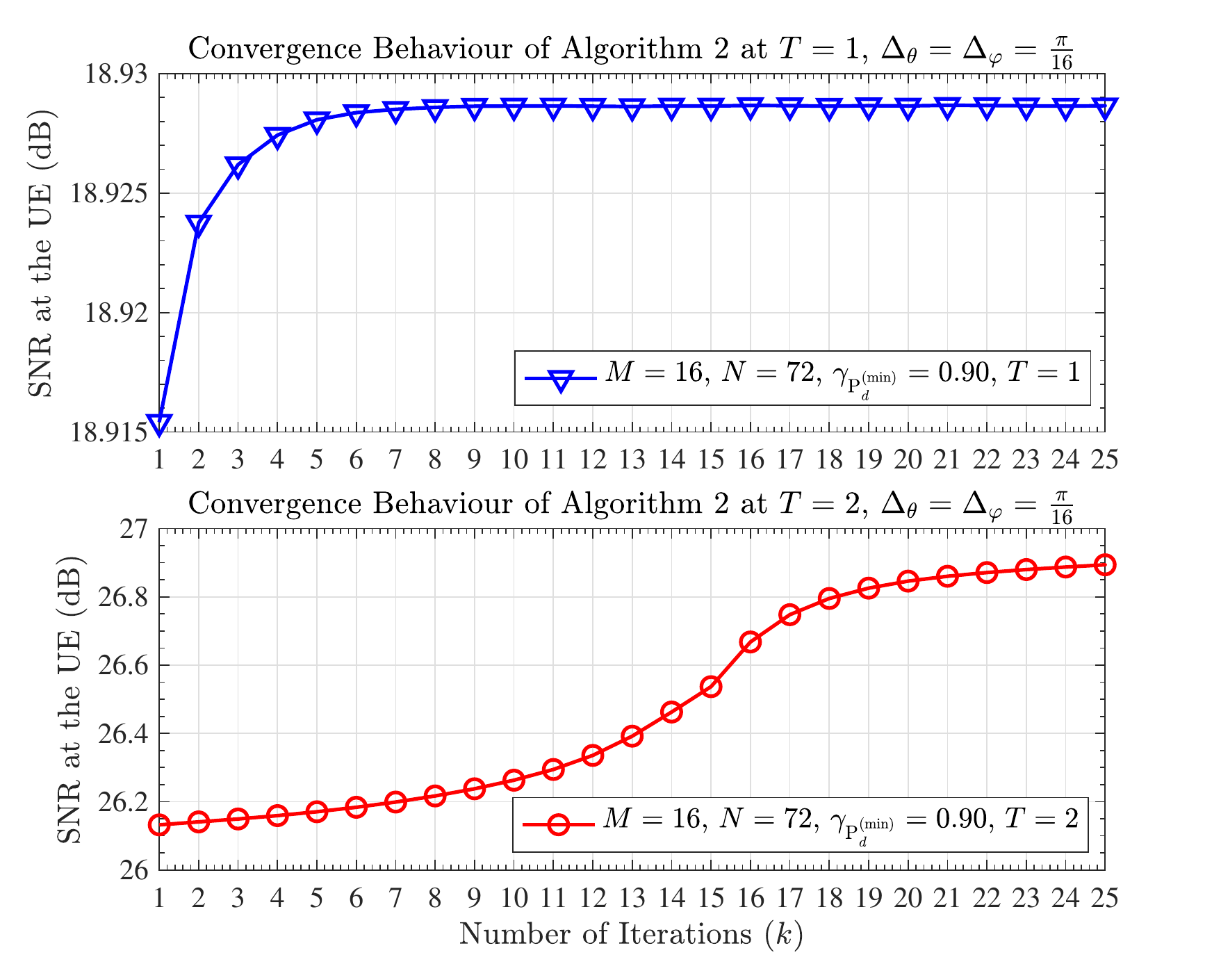}}
\hfil
\caption{\color{black}The convergence behaviors of Algorithm 2 at $T=1$ and $T=2$, when the numbers of BS antennas and RIS reflectors are: (a) $M=16,N=64$, and (b) $M=16,N=72$.}
\label{SCA-Convergence}
\end{figure}

{\color{black}Based on the above parameter configurations, we first numerically investigate the convergence behaviors of the proposed algorithms. Fig. \ref{SCA-Convergence} depicts the convergence curves of Algorithm 2 pertaining to $1\sim 25$ iterations at $T=1$ and $T=2$. From Fig. \ref{SCA-Convergence}, we observe that the SNR at the UE monotonically increases during the iteration process.
Besides, the convergence rates of Algorithm 2 are distinct under different simulation settings, without indicating an explicit regularity with respect to ether $N$ or $T$. The number of iterations required for convergence is shown to be diverse. However, it is worth noting that after sufficient times of iterations, all the curves presented in the figure can eventually converge to implicit upper limits, which testifies the proof of Proposition 3.}

\begin{figure}[h]
\centering
\begin{minipage}[t]{3.4in}
\includegraphics[width=3.3in]{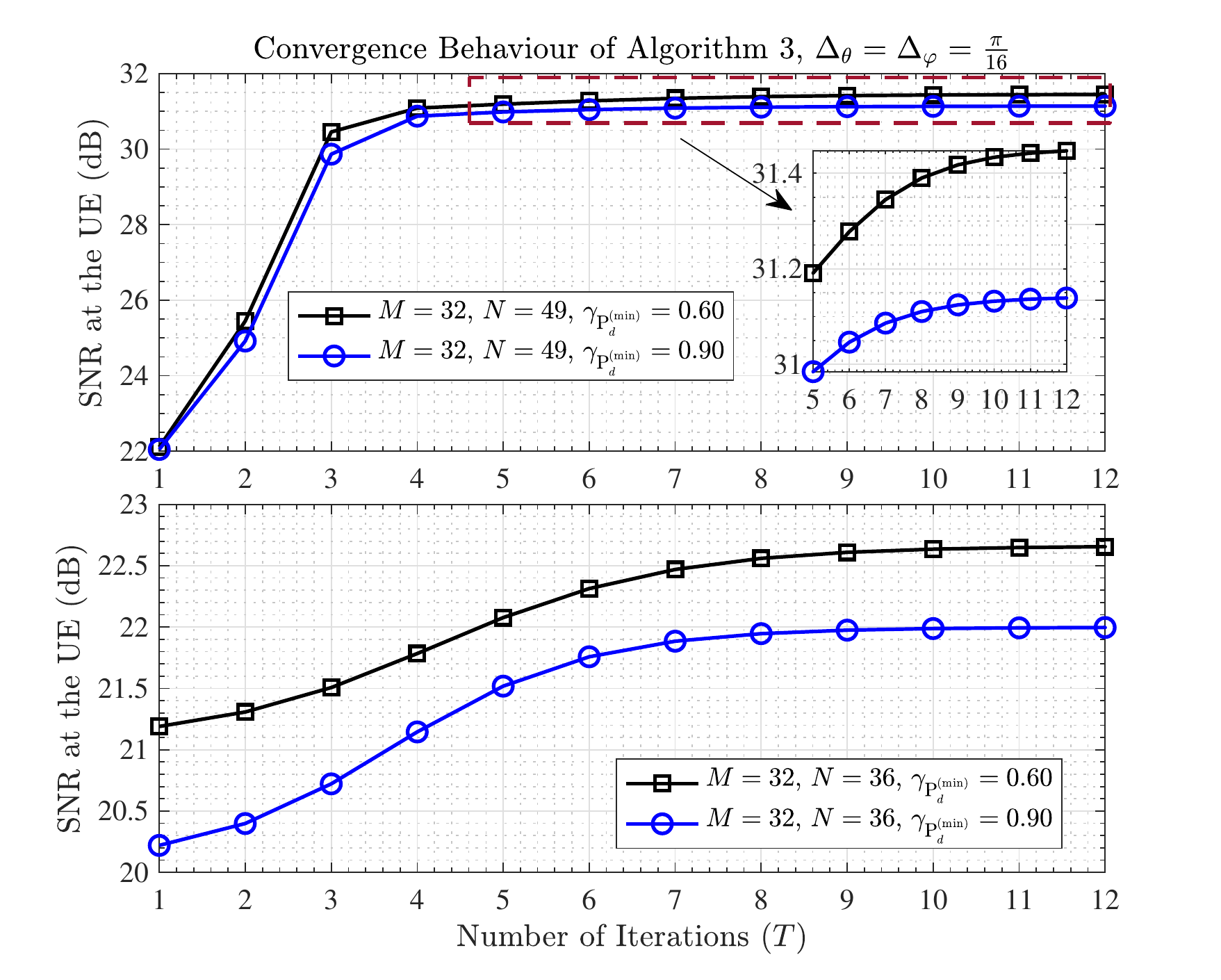}
\hfil
\centering
\caption{\color{black}Convergence behavior of Algorithm 3 at $M=32$. }
\label{Overall-Convergence}
\end{minipage}
\begin{minipage}[t]{3.4in}
\includegraphics[width=3.3in]{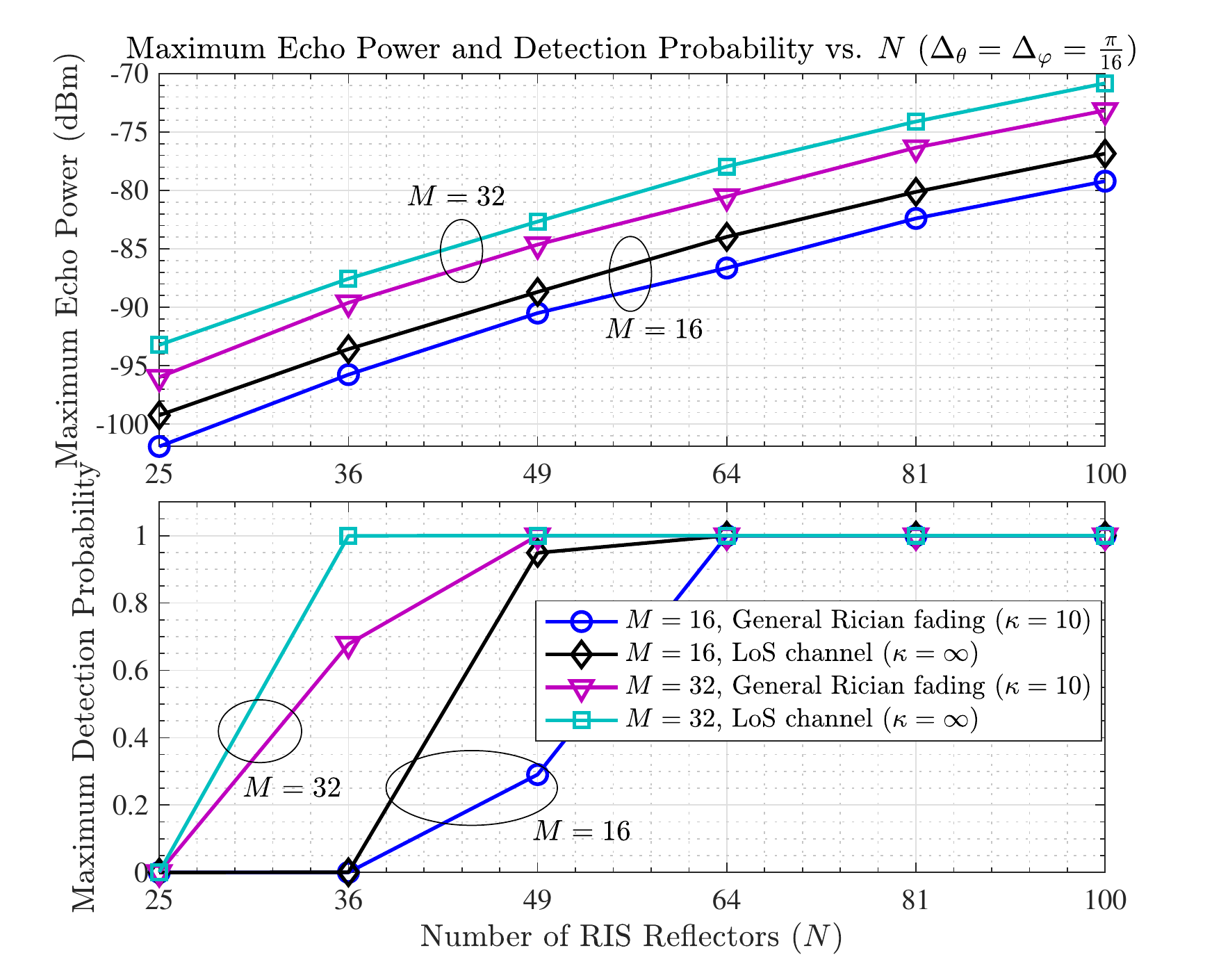}
\hfil
\centering
\caption{\color{black}Maximum echo power and detection probability with respect to $N$ at $P_{tx}=30$ dBm.}
\label{maxPrx_maxPd_N}
\end{minipage}
\end{figure}

{\color{black}Fig. \ref{Overall-Convergence} plots the convergence curves of Algorithm 3 pertaining to $1\sim 12$ iterations at $M=32$.  It is demonstrated that all the curves in the figure show an uptrend and converge to certain upper limits (within about $10$ iterations). This phenomenon is consistent with the proof of Proposition 4. Moreover, the SNRs at the UE with $N=49$ can converge to higher levels compared to those with $N=36$, due to the higher passive beamforming gain generated by larger RIS.}

\subsection{Problem Feasibility Condition}

{\color{black}Then, we simulate the problem feasibility condition, by presenting the maximum echo power and the maximum detection probability obtained by solving $(\mathrm{P13})$. The results with respect to $N$ are provided in Fig. \ref{maxPrx_maxPd_N}. 
It is demonstrated that when $N$ grows from $25$ to $100$, both the maximum echo power and the maximum detection probability are improved concomitantly, owing to the increase of the passive beamforming gain produced by the RIS.  It is noteworthy that when a small number of RIS reflectors (e.g., $N=25$) are equipped, the received echo is significantly weak with an echo power below $-93$ dBm. This is because the echo inherently experiences double reflections before reaching the BS, thereby suffering from severe path loss when the passive beamforming gain is not sufficiently high. Nevertheless, it can be observed that as $N$ further grows to $100$, the echo strength is enhanced by $20$ dB or so, while the detection probability is improved by more than $95\%$. This result is generalized for both pure LoS and Rician fading ($\kappa=10$) scenarios. As such, we can conclude that adding more reflectors to the RIS is an effective way to improve the potential detection performance.

Here, we also investigate the required minimum transmit power for Problem $(\mathrm{P1})$ to be feasible. The results are presented in Fig. \ref{Problem_feasibility-N} by varying $N$. Fig. \ref{Problem_feasibility-N} indicates that}
increasing the number of RIS reflectors can prominently reduce the required minimum transmit power. From this perspective, it is shown that the enlargement of the RIS can be a viable solution to energy conservation without posing much performance loss.

\begin{figure}[h]
\centering
\begin{minipage}[t]{3.4in}
\includegraphics[width=3.3in]{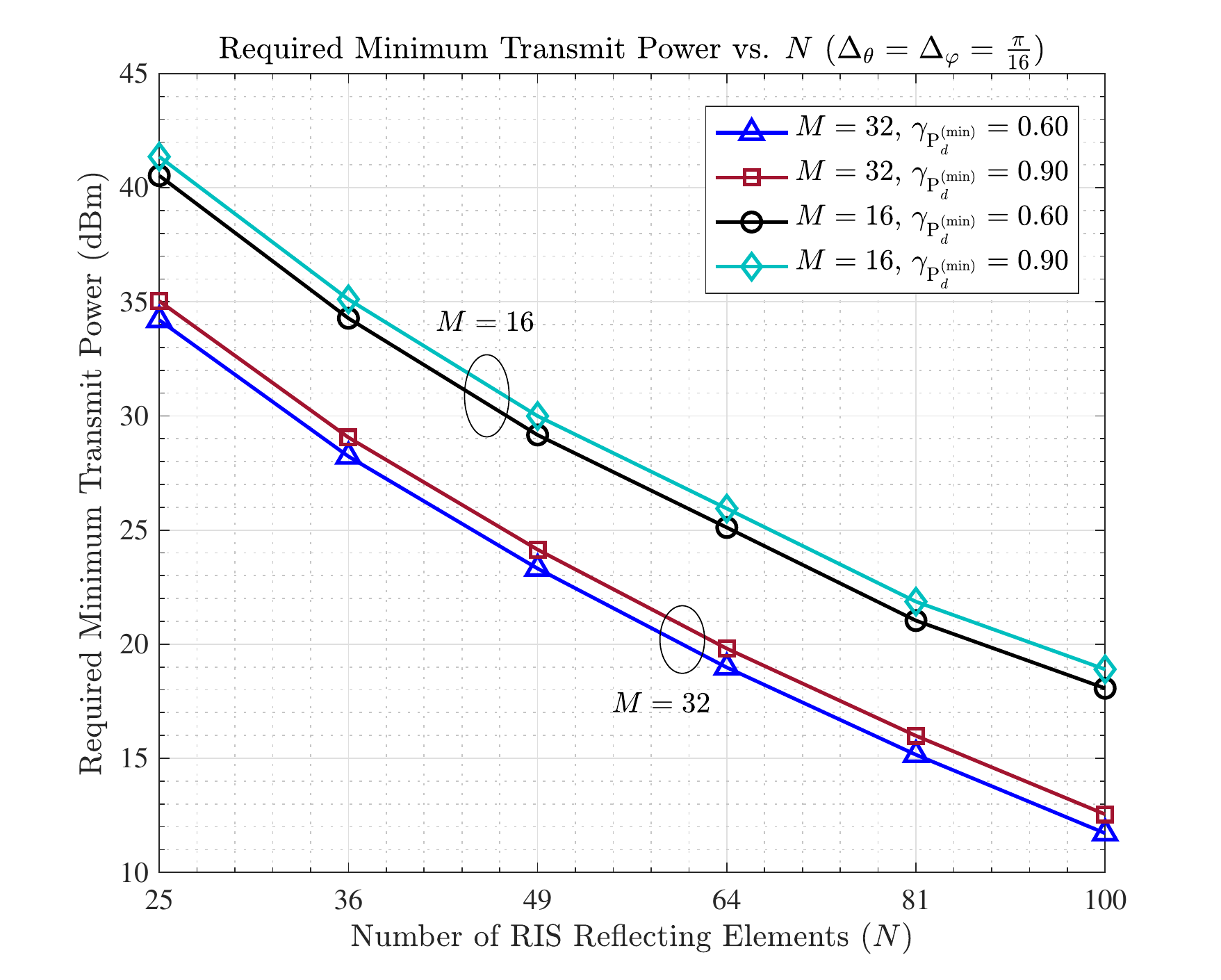}
\hfil
\centering
\caption{\color{black}Required minimum transmit power in dBm with respect to $N$. }
\label{Problem_feasibility-N}
\end{minipage}
\begin{minipage}[t]{3.4in}
\includegraphics[width=3.3in]{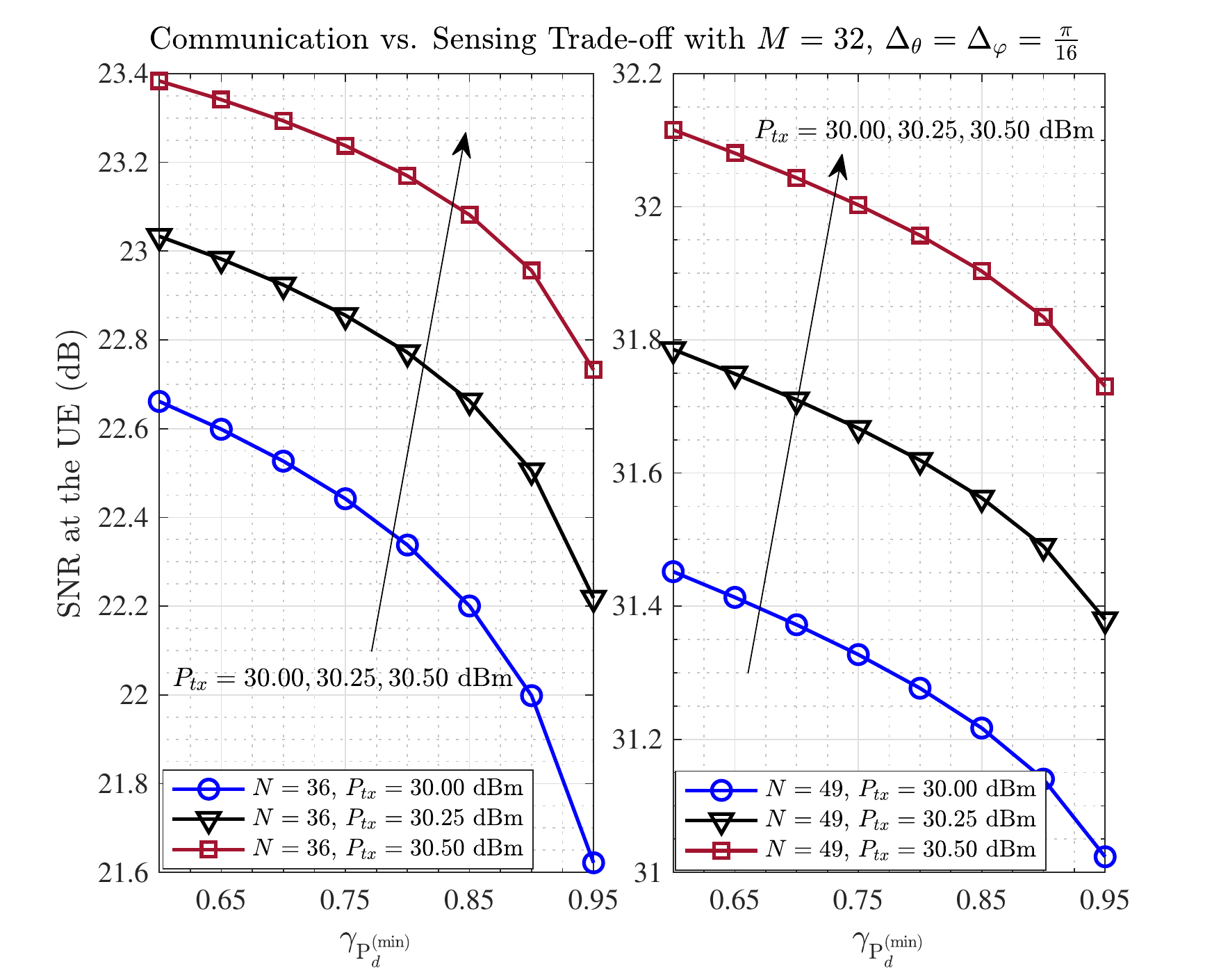}
\hfil
\centering
\caption{\color{black}SNR at the UE with respect to $\gamma_{\mathbb{P}_d^{(\mathrm{min})}}$.}
\label{Communication-sensing}
\end{minipage}
\end{figure}

\subsection{Communication vs.  Sensing Trade-off}

According to our problem formulation, an inherent trade-off exists between the communication and sensing performances. We identify this trade-off by presenting the joint variations of the optimized SNR at the UE and the minimum detection probability requirement in Fig. \ref{Communication-sensing}.
It is shown that the increase of the minimum detection probability requirement results in an SNR degradation. The rationale behind is that when $\mathbf{\Psi}_S$ deviates from $\mathbf{\Psi}_U$, the optimized passive beamforming gains toward the UE and the sensing direction are generally competing paradoxes, i.e. one's enhancement requires another's compromise. This phenomenon occurs because of the limitations of the system power and the transmission/reflection array scales. Therefore, a balance between sensing and communication performances deserves to be pursued in an RIS-enabled ISAC scenario.

\subsection{UDR vs. Sensing-duration Trade-off}

Note that in (\ref{Detail_UDR}), a trade-off between the UDR and $T_0$ is implied when $\gamma_{\mathbb{P}_d^{(\mathrm{min})}}$ is given. Specifically, the equality $\mathop{\mathrm{max}}\limits_{\mathbf{w}_{c}, \mathbf{w}_{s},\mathbf{w}_{rx},\bm{\omega}}\ \mathbb{P}_d (\mathbf{\Psi}_S,r_\mathrm{max},\Delta_\theta,\Delta_\varphi) = \gamma_{\mathbb{P}_d^{(\mathrm{min})}}$ in (\ref{Detail_UDR}) can be rewritten as
 \begin{equation}\nonumber
T_0 = \frac{\sigma_n^2}{ f_s \mathop{\mathrm{max}}\limits_{\mathbf{w}_{c}, \mathbf{w}_{s},\mathbf{w}_{rx},\bm{\omega}} P_{rx} (\mathbf{\Psi}_S,r_\mathrm{max},\Delta_\theta,\Delta_\varphi) } \left(   Q^{-1}(\mathbb{P}_f)  -  Q^{-1} (\gamma_{\mathbb{P}_d^{(\mathrm{min})}}) \right)^2,
 \end{equation}
indicating that $T_0$ is inversely proportional to $\mathop{\mathrm{max}}\limits_{\mathbf{w}_{c}, \mathbf{w}_{s},\mathbf{w}_{rx},\bm{\omega}} P_{rx} (\mathbf{\Psi}_S,r_\mathrm{max},\Delta_\theta,\Delta_\varphi)$. If the angle-spread $\Delta_\theta$ or $\Delta_\varphi$ is scaled down, the value of $\mathop{\mathrm{max}}\limits_{\mathbf{w}_{c}, \mathbf{w}_{s},\mathbf{w}_{rx},\bm{\omega}} P_{rx} (\mathbf{\Psi}_S,r_\mathrm{max},\Delta_\theta,\Delta_\varphi)$ will be reduced on account of the decrease of the integral value inside, thereby leading to an increase of $T_0$.

\begin{figure}[h]
\centering
\begin{minipage}[t]{3.4in}
\includegraphics[width=3.3in]{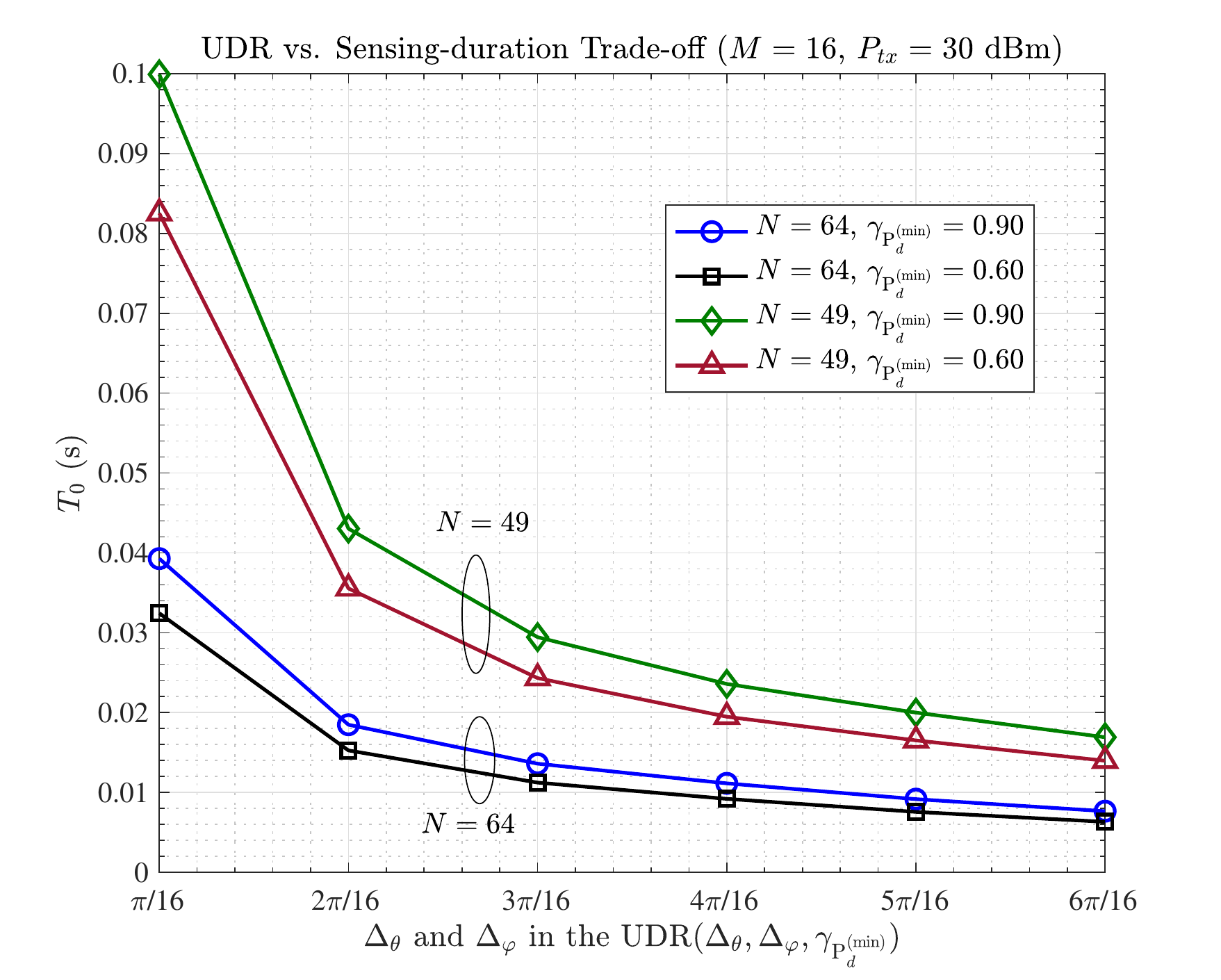}
\hfil
\centering
\caption{\color{black}$T_0$ with respect to $\Delta_\theta$ and $\Delta_\varphi$ in the $\mathrm{UDR}(\Delta_\theta,\Delta_\varphi,\gamma_{\mathbb{P}_d^{(\mathrm{min})}})$. }
\label{Trade-off}
\end{minipage}
\begin{minipage}[t]{3.4in}
\includegraphics[width=3.3in]{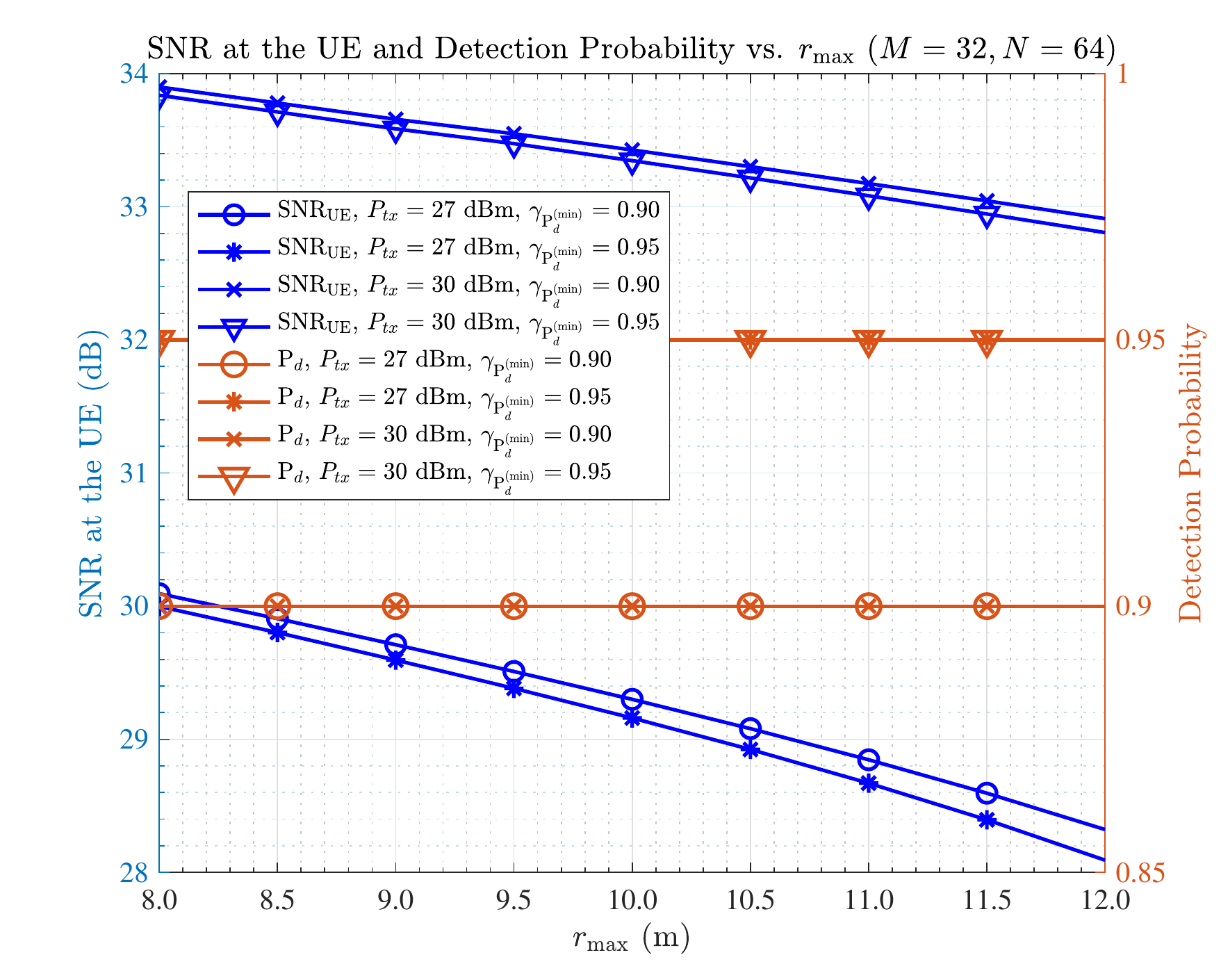}
\hfil
\centering
\caption{\color{black}SNR at the UE and detection probability with respect to $r_\mathrm{max}$.}
\label{With_respect_to_rmax}
\end{minipage}
\end{figure}

Considering $\Delta_\theta=\Delta_\varphi$ for simplicity, we show this trade-off in Fig. \ref{Trade-off} by depicting $T_0$ with respect to $\Delta_\theta$ and $\Delta_\varphi$ in the $\mathrm{UDR}(\Delta_\theta,\Delta_\varphi,\gamma_{\mathbb{P}_d^{(\mathrm{min})}})$. We observe that longer $T_0$ is required for detecting the target with smaller $\Delta_\theta$ and $\Delta_\varphi$. The rationale behind is that according to (\ref{Pd_metric}), the detection probability is proportional to the echo energy-to-noise ratio (ENR) at the BS, denoted by $\mathrm{ENR}=\frac{T_0 f_s P_{rx} (\mathbf{\Psi}_S,r_\mathrm{max},\Delta_\theta,\Delta_\varphi) }{\sigma_n^2}$. As smaller $\Delta_\theta$ and $\Delta_\varphi$ result in lower $P_{rx} (\mathbf{\Psi}_S,r_\mathrm{max},\Delta_\theta,\Delta_\varphi)$, $T_0$ should be prolonged to compensate for the reduction of $P_{rx} (\mathbf{\Psi}_S,r_\mathrm{max},\Delta_\theta,\Delta_\varphi)$ in order to maintain the ENR.
Besides, Fig. \ref{Trade-off} also illustrates that if $T_0$ remains, a smaller target can be detected with a minimum detection probability of $\gamma_{\mathbb{P}_d^{(\mathrm{min})}}$ when $N$ grows. In this regard, we conclude that increasing the number of RIS reflectors can enhance the target detection capability.

{\color{black} \subsection{Performance with Respect to $r_\mathrm{max}$}}

{\color{black}
Subsequently, we examine the optimization performance by varying $r_\mathrm{max}$ to change the length of the BS-RIS-target-RIS-BS link. The optimization results of the SNR at the UE and the detection probability are depicted in Fig. \ref{With_respect_to_rmax}. From Fig. \ref{With_respect_to_rmax}, we observe that when $r_\mathrm{max}$ increases, the SNR at the UE descends whereas the detection probability remains at $\gamma_{\mathbb{P}_d^{(\mathrm{min})}}$. This is because the overall optimization problem is formulated to maximize the SNR while ensuring the detection probability to be not lower than $\gamma_{\mathbb{P}_d^{(\mathrm{min})}}$. When the target moves far away from the RIS, the echo will experience severer path loss. Then, the passive beamforming gain toward the sensing direction should be strengthened to maintain the required detection performance, while the gain toward the UE is compromised in view of the communication vs. sensing trade-off.
}

\subsection{Performance Comparisons with Existing Benchmarks}

Finally, we compare our method with several existing benchmarks, including: 
{\color{black}
\textbf{Benchmark 1-random phase shifts}: The RIS phase shifts are randomly generated and follow uniform distribution on $[0,2\pi]$. The SNR and echo power are averaged over 300 Monte Carlo trials.
\textbf{Benchmark 2-without sensing constraint:} The optimization result obtained by solving $(\mathrm{P1})$ in the absence of the constraint (\ref{Overall_problem}e).
\textbf{Benchmark 3-directional phase shifts:} Replacing the phase-shift solution of Benchmark 2 with $\bm{\omega} = \mathbf{a}(\theta_U,\varphi_U) \odot \mathbf{a}^*(\theta_R,\varphi_R)$.}
\textbf{Benchmark 4:} The approach of maximizing the beamforming gain toward the sensing direction $\mathbf{\Psi}_S$ \textit{without the consideration of the echoes}, while satisfying the minimum SNR requirement of the UE. Similar ideas were adopted in some existing works, such as \cite{RIS-ISAC-2}. 
\textbf{Benchmark 5:}
The approach of maximizing the beamforming gain over the cascaded BS-RIS-target-RIS-BS link in consideration of the echoes, \textit{which however, treats the targets as points}. Similar ideas were adopted in e.g. \cite{RIS-ISAC-1}.  
For comparisons with Benchmark 4 and 5, we conduct the following steps:
1) Execute the proposed Algorithm 3 under certain configurations of $\gamma_{\mathbb{P}_d^{(\mathrm{min})}}$, $\Delta_\theta$ and $\Delta_\varphi$ to acquire the solutions of $\mathbf{w}_{c}, \mathbf{w}_{s},\mathbf{w}_{rx},\bm{\omega}$, and record the corresponding SNR at the UE, the echo power and the detection probability for $S(\Delta_\theta,\Delta_\varphi)$ as $\mathrm{SNR_{UE,(proposed)}}$, $P_{rx,\mathrm{(proposed)}}$ and $\mathbb{P}_{d,\mathrm{(proposed)}}$.
2) Set the minimum SNR requirements of Benchmark 4 and 5 as $\mathrm{SNR_{UE,(proposed)}}$, and perform the two benchmarks to acquire their beamforming solutions.
3) Use the beamforming solutions obtained in Step 2) to compute the detection probabilities of Benchmark 4 and 5 for $S(\Delta_\theta,\Delta_\varphi)$, denoted by $\mathbb{P}_{d,\mathrm{(B4)}}$ and $\mathbb{P}_{d,\mathrm{(B5)}}$, respectively.
4) Compare $\mathbb{P}_{d,\mathrm{(B4)}}$, $\mathbb{P}_{d,\mathrm{(B5)}}$ with $\mathbb{P}_{d,\mathrm{(proposed)}}$.

\begin{figure}[h]
\includegraphics[width=3.3in]{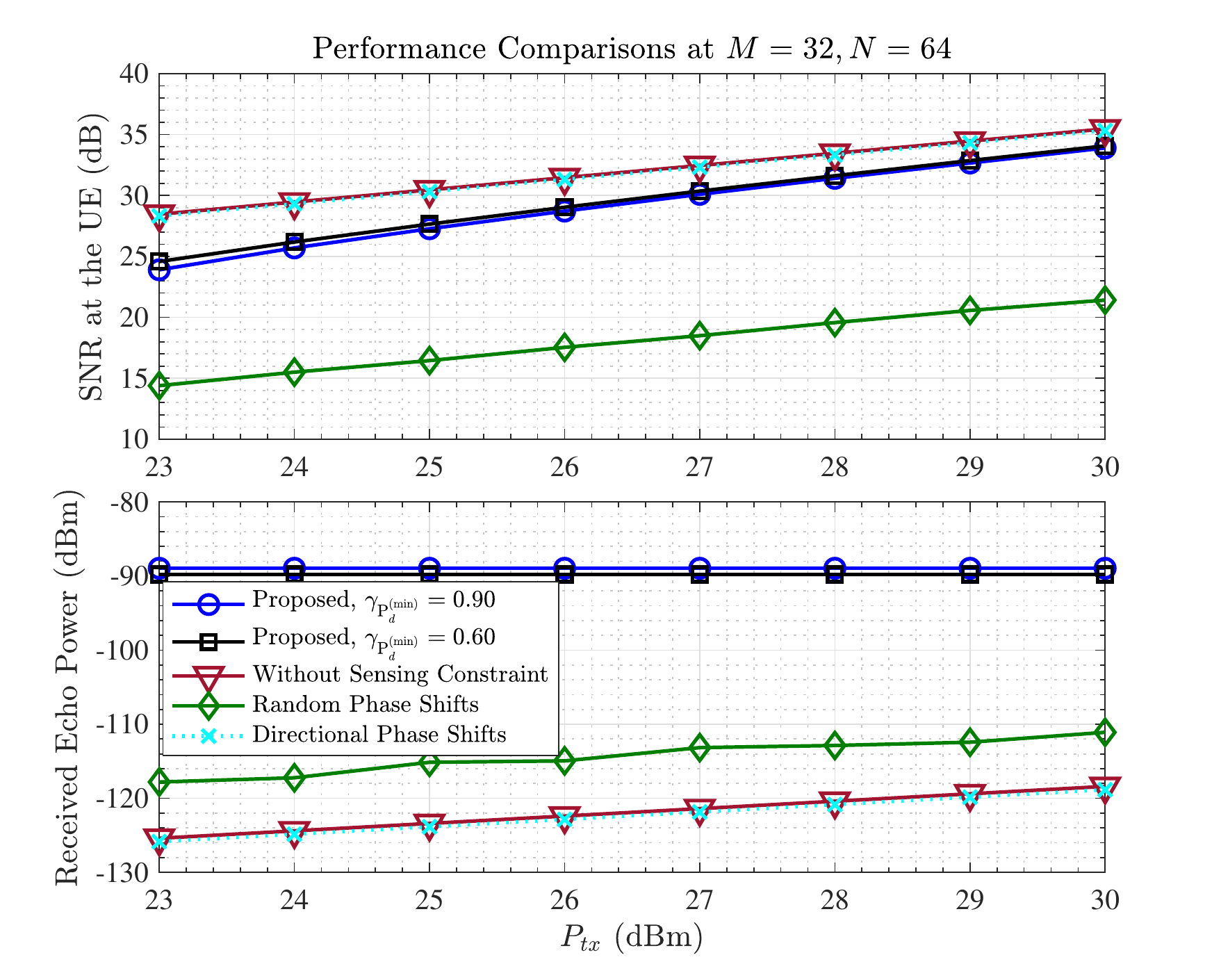}
\hfil
\centering
\caption{\color{black}Performance comparisons with Benchmark 1, 2 and 3.}
\label{Comparison_Benchmarks}
\end{figure}

\begin{spacing}{1.36}
{\color{black} Fig. \ref{Comparison_Benchmarks} illustrates the performance comparisons with Benchmark 1, 2 and 3 in terms of SNR at the UE and the received echo power.  It can be seen that the proposed optimization method outperforms Benchmark 1 to a large extend, since Benchmark 1 merely randomizes the RIS phase shifts instead of optimizing them. Additionally, the SNR of the proposed approach is relatively lower than those of Benchmark 2 and 3, whereas the received echo power shows the opposite. This is because both Benchmark 2 and 3 completely focus on maximizing the SNR at the UE without considering the target sensing, so that their passive beamforming gains toward the sensing direction are significantly weak. As a result, the proposed beamforming design is validated to be advantageous in simultaneously supporting UE communication and target detection.}
\end{spacing}

\vspace*{5pt}
\begin{table}[h]
\caption{Performance comparisons with Benchmark 4 and 5 at $M=32$, $N=64$ and $P_{tx} = 30$ dBm.}
\label{Table_with_B34}
\color{black}
\centering
\begin{small}
\begin{tabular}{ccccc}
\hline
\hline
& $\Delta_{\theta}$ and $\Delta_{\varphi}$ $(\Delta_{\theta}=\Delta_{\varphi})$ & SNR at the UE & Received Echo Power & Detection Probability\\
\hline
\hline
Proposed & $\pi/16$ & $33.8993$ dB & $-88.9526$ dBm & $\mathbb{P}_{d,\mathrm{(proposed)}} = 0.9000$ \\ 
Benchmark 4 & $\pi/16$ & $33.8993$ dB & $-105.8100$ dBm & $\mathbb{P}_{d,\mathrm{(B4)}} = 2.6450\times10^{-17}$ \\
Benchmark 5 & $\pi/16$ & $33.8993$ dB & $-89.0670$ dBm & $\mathbb{P}_{d,\mathrm{(B5)}} =  0.8716$ \\
\hline
Proposed & $2\pi/16$ & $34.5037$ dB & $-88.9526$ dBm & $\mathbb{P}_{d,\mathrm{(proposed)}} = 0.9000$ \\ 
Benchmark 4 & $2\pi/16$ & $34.5037$ dB & $-99.6171$ dBm & $\mathbb{P}_{d,\mathrm{(B4)}} = 1.0769\times10^{-11}$ \\
Benchmark 5 & $2\pi/16$ & $34.5037$ dB & $-89.1962$ dBm & $\mathbb{P}_{d,\mathrm{(B5)}} =  0.8339$ \\
\hline
Proposed & $3\pi/16$ & $34.6705$ dB & $-88.9526$ dBm & $\mathbb{P}_{d,\mathrm{(proposed)}} = 0.9000$ \\ 
Benchmark 4 & $3\pi/16$ & $34.6705$ dB & $-96.8008$ dBm & $\mathbb{P}_{d,\mathrm{(B4)}} = 2.8253\times10^{-8}$ \\
Benchmark 5 & $3\pi/16$ & $34.6705$ dB & $-89.4402$ dBm & $\mathbb{P}_{d,\mathrm{(B5)}} =  0.7472$ \\
\hline
Proposed & $4\pi/16$ & $35.1372$ dB & $-88.9526$ dBm & $\mathbb{P}_{d,\mathrm{(proposed)}} = 0.9000$ \\ 
Benchmark 4 & $4\pi/16$ & $35.1372$ dB & $-96.6249$ dBm & $\mathbb{P}_{d,\mathrm{(B4)}} = 4.7506\times10^{-8}$ \\
Benchmark 5 & $4\pi/16$ & $35.1372$ dB & $-94.1107$ dBm & $\mathbb{P}_{d,\mathrm{(B5)}} =  8.1513\times 10^{-5}$ \\
\hline
\hline
\end{tabular}
\end{small}
\end{table}

Table \ref{Table_with_B34} lists the performances of the proposed algorithm, Benchmark 4 and Benchmark 5 with respect to different $\Delta_{\theta}$ and $\Delta_{\varphi}$. The results illustrate that our proposed beamforming strategy performs prominently better than Benchmark 4 in terms of the echo power and the detection probability, under the premise of the same SNRs at the UE. This is because Benchmark 4 simply considers the BS-RIS-target link, without providing additional passive beamforming gain for the target-RIS-BS echo link.
More importantly, in comparison with Benchmark 5, we observe that the echo power and the detection probability of Benchmark 5 are close to those of the proposed algorithm when $\Delta_{\varphi}=\Delta_{\theta}=\pi/16$. Nevertheless, as $\Delta_{\theta}$ and $\Delta_{\varphi}$ increase from $\pi/16$ to $4\pi/16$, the performance gap emerges and widens. 
In specific, the detection probability of the proposed strategy remains at $90\%$, whereas that of Benchmark 5 descends in a monotonic way. 
This phenomenon occurs because when detecting a large possible target, the Benchmark 5 aims to maximize the beamforming gain toward a point, such that the sensing beam generated by Benchmark 5 may only cover a portion of the scattering surface area.
By contrast, the sensing beam generated by the proposed strategy is optimized in consideration of the target size, such that it can illuminate the entire scattering surface area of the target as much as possible. 



\section{Conclusions and Prospects}


In our work, the RIS was leveraged to assist the joint communication and target detection in the ISAC, and a novel joint active and passive beamforming optimization approach was developed to optimize the communication and sensing performances. To this end, the detection probability was derived based on the illumination power on an approximated scattering surface area of the target, and the UDR was defined to measure the target detection capability. Then, an optimization problem was formulated to maximize the SNR at the UE under a minimum detection probability constraint. {\color{black}To solve this non-convex problem, the communication and sensing beamformers, the receive combining vector and the RIS phase shifts are alternatively optimized and updated. }The results validated that our proposed beamforming design was superior to the investigated benchmarks when detecting targets with practical sizes, so that it would be promising in contributing to the development of the RIS-enabled ISAC.
{\color{black}In the future, the proposed design in this paper is worth to be extended to a more general multi-user case with multiple targets to be simultaneously detected.}

\appendices

\section{Proof of Proposition 1}

We prove by first considering the transformation of $\mathbf{q}^\mathrm{H} \widetilde{\mathbf{A}}(\theta,\varphi) \mathbf{q} \mathbf{q}^\mathrm{H} \widetilde{\mathbf{B}}(\theta,\varphi) \mathbf{q}$ inside the integral in $f(\mathbf{q})$. In specific, by introducing $\mathbf{Q}=\mathbf{q}\mathbf{q}^\mathrm{H}$, we have
\small \begin{equation}
\mathbf{q}^\mathrm{H} \widetilde{\mathbf{A}}(\theta,\varphi) \mathbf{q} \mathbf{q}^\mathrm{H} \widetilde{\mathbf{B}}(\theta,\varphi) \mathbf{q}=tr\left\{ \widetilde{\mathbf{B}}(\theta,\varphi) \mathbf{Q} \widetilde{\mathbf{A}}(\theta,\varphi)  \mathbf{Q} \right\}
= \left[\mathrm{vec}(\mathbf{Q}^\mathrm{T})\right]^\mathrm{T}
\left[\widetilde{\mathbf{A}}^\mathrm{T}(\theta,\varphi) \otimes \widetilde{\mathbf{B}}(\theta,\varphi)\right] \mathrm{vec}(\mathbf{Q}).
\end{equation}\normalsize
Since $\left[\mathrm{vec}(\mathbf{Q}^\mathrm{T})\right]^\mathrm{T}=\left[\mathrm{vec}(\mathbf{q}^*\mathbf{q}^\mathrm{T} )\right]^\mathrm{T}=\left(\mathbf{q}\otimes\mathbf{q}^*  \right)^\mathrm{T}=\mathbf{q}^\mathrm{T} \otimes \mathbf{q}^\mathrm{H}$ and $\mathrm{vec}(\mathbf{Q})=\mathrm{vec}(\mathbf{q}\mathbf{q}^\mathrm{H})=\mathbf{q}^* \otimes \mathbf{q}$ hold, we have
\small \begin{equation}\label{Kro}
 \left[\mathrm{vec}(\mathbf{Q}^\mathrm{T})\right]^\mathrm{T}
\left[\widetilde{\mathbf{A}}^\mathrm{T}(\theta,\varphi) \otimes \widetilde{\mathbf{B}}(\theta,\varphi)\right] \mathrm{vec}(\mathbf{Q})= 
\left(\mathbf{q}^\mathrm{T} \otimes \mathbf{q}^\mathrm{H}\right) \left[\widetilde{\mathbf{A}}^\mathrm{T}(\theta,\varphi) \otimes \widetilde{\mathbf{B}}(\theta,\varphi)\right] \left( \mathbf{q}^* \otimes \mathbf{q} \right).
\end{equation}\normalsize
In order to use (\ref{Kro}) to simplify $f(\mathbf{q})$, we define two auxiliary matrices, i.e. $\mathbf{C}(\theta,\varphi)$ and $\mathbf{K}$, as 
\small \begin{equation}
\mathbf{C}(\theta,\varphi)=\widetilde{\mathbf{A}}^\mathrm{T}(\theta,\varphi) \otimes \widetilde{\mathbf{B}}(\theta,\varphi),\ \ \ 
\mathbf{K}=\int_{\varphi_S-\frac{\Delta_{\varphi}}{2}}^{\varphi_S+\frac{\Delta_{\varphi}}{2}} \int_{\theta_S-\frac{\Delta_{\theta}}{2}}^{\theta_S+\frac{\Delta_{\theta}}{2}} \mathbf{C}(\theta,\varphi)\ d\theta\  d\varphi.
\end{equation}\normalsize
Then, $f(\mathbf{q})$ can be recast as
\small \begin{equation}\label{f_omega_K}
f(\mathbf{q})=\left(\mathbf{q}^\mathrm{T} \otimes \mathbf{q}^\mathrm{H}\right) \mathbf{K} \left( \mathbf{q}^* \otimes \mathbf{q} \right).
\end{equation}\normalsize

It is noted that because: 1) $\mathbf{C}(\theta,\varphi)\succeq \mathbf{0}$ for $\forall\varphi\in [\varphi_S-\frac{\Delta_{\varphi}}{2},\varphi_S+\frac{\Delta_{\varphi}}{2}]$ and $\forall\theta\in [\theta_S-\frac{\Delta_{\theta}}{2},\theta_S+\frac{\Delta_{\theta}}{2}]$, and 2) $\mathbf{C}^{\mathrm{H}}(\theta,\varphi)=\widetilde{\mathbf{A}}^\mathrm{*}(\theta,\varphi) \otimes \widetilde{\mathbf{B}}^\mathrm{H}(\theta,\varphi)=\widetilde{\mathbf{A}}^\mathrm{T}(\theta,\varphi) \otimes \widetilde{\mathbf{B}}(\theta,\varphi)=\mathbf{C}(\theta,\varphi)$, we have $\mathbf{K}\succeq \mathbf{0}$ and $\mathbf{K}^\mathrm{H}=\mathbf{K}$, implying that $\mathbf{K}$ is a positive semidefinite Hermitian matrix. Therefore, $\mathbf{K}$ can be decomposed into $\mathbf{K}=\mathbf{U}\mathbf{\Lambda}\mathbf{U}^{\mathrm{H}}$ via the eigenvalue decomposition, where $\mathbf{U}$ is a unitary matrix, and $\mathbf{\Lambda}$ is a diagonal eigenvalue matrix, whose diagonal elements are non-negative. 

Substituting $\mathbf{K}=\mathbf{U}\mathbf{\Lambda}\mathbf{U}^{\mathrm{H}}$ into (\ref{f_omega_K}), we obtain
\small \begin{align}
f(\mathbf{q})=&\left(\mathbf{q}^\mathrm{T} \otimes \mathbf{q}^\mathrm{H}\right) \mathbf{U}\mathbf{\Lambda}\mathbf{U}^{\mathrm{H}} \left( \mathbf{q}^* \otimes \mathbf{q} \right) 
= \left(\mathbf{q}^\mathrm{T} \otimes \mathbf{q}^\mathrm{H}\right) \mathbf{U}\mathbf{\Lambda}^{\frac{1}{2}}\mathbf{\Lambda}^{\frac{1}{2}}\mathbf{U}^{\mathrm{H}} \left( \mathbf{q}^* \otimes \mathbf{q} \right)
= \left\|\left(\mathbf{q}^\mathrm{T} \otimes \mathbf{q}^\mathrm{H}\right) \mathbf{U}\mathbf{\Lambda}^{\frac{1}{2}}\right\|_2^2.
\end{align}\normalsize

By rewriting $\mathbf{U}\mathbf{\Lambda}^{\frac{1}{2}}$ as $\mathbf{U}\mathbf{\Lambda}^{\frac{1}{2}}=[\bm{\sigma}_1,\bm{\sigma}_2,\cdots,\bm{\sigma}_{(N+1)^2}]$, we have
\small \begin{align}
f(\mathbf{q})=& \left\|\left(\mathbf{q}^\mathrm{T} \otimes \mathbf{q}^\mathrm{H}\right) [\bm{\sigma}_1,\bm{\sigma}_2,\cdots,\bm{\sigma}_{(N+1)^2}] \right\|_2^2 \nonumber\\
=& \left\| \left[\left(\mathbf{q}^\mathrm{T} \otimes \mathbf{q}^\mathrm{H}\right)\bm{\sigma}_1,\left(\mathbf{q}^\mathrm{T} \otimes \mathbf{q}^\mathrm{H}\right)\bm{\sigma}_2,\cdots,\left(\mathbf{q}^\mathrm{T} \otimes \mathbf{q}^\mathrm{H}\right)\bm{\sigma}_{(N+1)^2}\right] \right\|_2^2 \nonumber\\
=& \left\| \left[\mathbf{q}^\mathrm{H} (\mathrm{vec}^{-1}(\bm{\sigma}_1)) \mathbf{q},\mathbf{q}^\mathrm{H} (\mathrm{vec}^{-1}(\bm{\sigma}_2)) \mathbf{q},\cdots,\mathbf{q}^\mathrm{H} (\mathrm{vec}^{-1}(\bm{\sigma}_{(N+1)^2})) \mathbf{q}\right] \right\|_2^2 \nonumber\\
=& \left\| \left[tr\left\{ (\mathrm{vec}^{-1}(\bm{\sigma}_1)) \mathbf{q} \mathbf{q}^\mathrm{H}\right\},tr\left\{ (\mathrm{vec}^{-1}(\bm{\sigma}_2)) \mathbf{q} \mathbf{q}^\mathrm{H}\right\},\cdots,tr\left\{ (\mathrm{vec}^{-1}(\bm{\sigma}_{(N+1)^2})) \mathbf{q} \mathbf{q}^\mathrm{H}\right\}\right] \right\|_2^2 \nonumber\\
=& \left\| \left[tr\left\{ (\mathrm{vec}^{-1}(\bm{\sigma}_1)) \mathbf{Q}\right\},tr\left\{ (\mathrm{vec}^{-1}(\bm{\sigma}_2)) \mathbf{Q}\right\},\cdots,tr\left\{ (\mathrm{vec}^{-1}(\bm{\sigma}_{(N+1)^2})) \mathbf{Q}\right\}\right] \right\|_2^2,
\end{align}\normalsize
where $\bm{\sigma}_i$ is the $i$-th column of $\mathbf{U}\mathbf{\Lambda}^{\frac{1}{2}}$. Finally, by denoting $\mathrm{vec}^{-1}(\bm{\sigma}_{i})$ as
\small \begin{equation}\label{S_i}
\mathbf{S}_i = \mathrm{vec}^{-1}(\bm{\sigma}_{i}), \ \ \mathrm{for}\ \  i=1,2,...,(N+1)^2,
\end{equation}\normalsize
we complete the proof of Proposition 1.

\section{Proof of Proposition 2}
Here, we aim to derive the real representation of $\mathcal{T}_{ \mathbf{v}(\mathbf{X}_k) }\left( \|\mathbf{v}(\mathbf{X})\|_2 \right)$. 
We rewrite the vector $\mathbf{v}(\mathbf{X})$ as
\small \begin{equation}\label{vQ}
\mathbf{v}(\mathbf{X}) = \mathbf{v}_\mathrm{R}(\mathbf{X}) + j \mathbf{v}_\mathrm{I}(\mathbf{X}),
\end{equation}\normalsize
where $ \mathbf{v}_\mathrm{R}(\mathbf{X})$ and $ \mathbf{v}_\mathrm{I}(\mathbf{X})$ are, respectively, the real part and imaginary part of $\mathbf{v}(\mathbf{X})$, given by
\small \begin{align}
\mathbf{v}_\mathrm{R}(\mathbf{X}) =& \left[\mathfrak{Re}\{tr(\mathbf{S}_1 \mathbf{X})\}, \mathfrak{Re}\{tr(\mathbf{S}_2 \mathbf{X})\},\cdots,\mathfrak{Re}\{tr(\mathbf{S}_{(N+1)^2} \mathbf{X})\}\right]^\mathrm{T}  \nonumber\\
=& \left[\frac{1}{2}tr\left( (\mathbf{S}_1 + \mathbf{S}_1^\mathrm{H}) \mathbf{X}\right), \frac{1}{2}tr\left( (\mathbf{S}_2 + \mathbf{S}_2^\mathrm{H}) \mathbf{X}\right),\cdots,\frac{1}{2}tr\left( (\mathbf{S}_{(N+1)^2} + \mathbf{S}_{(N+1)^2}^\mathrm{H}) \mathbf{X}\right)\right]^\mathrm{T},
\end{align}\normalsize
\small \begin{align}
\mathbf{v}_\mathrm{I}(\mathbf{X}) =& \left[\mathfrak{Im}\{tr(\mathbf{S}_1 \mathbf{X})\}, \mathfrak{Im}\{tr(\mathbf{S}_2 \mathbf{X})\},\cdots,\mathfrak{Im}\{tr(\mathbf{S}_{(N+1)^2} \mathbf{X})\}\right]^\mathrm{T}  \nonumber\\
=& \left[-\frac{j}{2}tr\left( (\mathbf{S}_1 - \mathbf{S}_1^\mathrm{H}) \mathbf{X}\right), -\frac{j}{2}tr\left( (\mathbf{S}_2 - \mathbf{S}_2^\mathrm{H}) \mathbf{X}\right),\cdots,-\frac{j}{2}tr\left( (\mathbf{S}_{(N+1)^2} - \mathbf{S}_{(N+1)^2}^\mathrm{H}) \mathbf{X}\right)\right]^\mathrm{T}.
\end{align}\normalsize
Then, we combine $ \mathbf{v}_\mathrm{R}(\mathbf{X})$ and $ \mathbf{v}_\mathrm{I}(\mathbf{X})$ into a new vector $\widetilde{\mathbf{v}}(\mathbf{X})$, formed by
\small \begin{equation}\label{widetilde_vQ_A}
\begin{split}
\widetilde{\mathbf{v}}(\mathbf{X}) = \left[\begin{matrix}
\mathbf{v}_\mathrm{R}(\mathbf{X}) \\
\mathbf{v}_\mathrm{I}(\mathbf{X})
\end{matrix}\right] 
\in \mathbb{R}^{2(N+1)^2\times 1}.
\end{split}
\end{equation}\normalsize
Hence, according to (\ref{vQ}) and (\ref{widetilde_vQ_A}), we have
\small \begin{equation}
\|\widetilde{\mathbf{v}}(\mathbf{X})\|_2 = \|\mathbf{v}(\mathbf{X})\|_2 = 
\sqrt{   \mathbf{v}_\mathrm{R}^\mathrm{T}(\mathbf{X}) \mathbf{v}_\mathrm{R}(\mathbf{X}) +  \mathbf{v}_\mathrm{I}^\mathrm{T}(\mathbf{X}) \mathbf{v}_\mathrm{I}(\mathbf{X})  },
\end{equation}\normalsize
implying that the constraints $ \|\mathbf{v}(\mathbf{X})\|_2 - 
\mu \sqrt{\mathcal{G}  } \geq 0$ and $ \|\widetilde{\mathbf{v}}(\mathbf{X})\|_2 - 
\mu \sqrt{\mathcal{G}  } \geq 0$ are equivalent. 
Note that according to (\ref{widetilde_vQ_A}), $\widetilde{\mathbf{v}}(\mathbf{X})$ is a real-value vector, such that the partial derivatives of $\|\widetilde{\mathbf{v}}(\mathbf{X})\|_2$ with respect to $ \mathbf{v}_\mathrm{R}(\mathbf{X})$ and $ \mathbf{v}_\mathrm{I}(\mathbf{X})$ are completely real. 

After some manipulations, we obtain 
\small \begin{equation}
\frac{\partial \|\widetilde{\mathbf{v}}(\mathbf{X})\|_2}{\partial \mathbf{v}_\mathrm{R}(\mathbf{X})} = \frac{\mathbf{v}_\mathrm{R}(\mathbf{X})}{\|\widetilde{\mathbf{v}}(\mathbf{X})\|_2} , \ \ 
\frac{\partial \|\widetilde{\mathbf{v}}(\mathbf{X})\|_2}{\partial \mathbf{v}_\mathrm{I}(\mathbf{X})} = \frac{\mathbf{v}_\mathrm{I}(\mathbf{X})}{\|\widetilde{\mathbf{v}}(\mathbf{X})\|_2}.
\end{equation}\normalsize
Then, we have
\small \begin{equation}\label{Derivative_k}
\left.\frac{\partial \|\widetilde{\mathbf{v}}(\mathbf{X})\|_2}{\partial \widetilde{\mathbf{v}}(\mathbf{X})}\right|_{ \widetilde{\mathbf{v}}(\mathbf{X})=\widetilde{\mathbf{v}}(\mathbf{X}_k) } = 
\left[\begin{matrix}
\left.\frac{\partial \|\widetilde{\mathbf{v}}(\mathbf{X})\|_2}{\partial \mathbf{v}_\mathrm{R}(\mathbf{X})}\right|_{ \widetilde{\mathbf{v}}(\mathbf{X})=\widetilde{\mathbf{v}}(\mathbf{X}_k) } \\
\left.\frac{\partial \|\widetilde{\mathbf{v}}(\mathbf{X})\|_2}{\partial \mathbf{v}_\mathrm{I}(\mathbf{X})}\right|_{ \widetilde{\mathbf{v}}(\mathbf{X})=\widetilde{\mathbf{v}}(\mathbf{X}_k) }
\end{matrix} \right] =
\left[\begin{matrix}
\frac{\mathbf{v}_\mathrm{R}(\mathbf{X}_k)}{\|\widetilde{\mathbf{v}}(\mathbf{X}_k)\|_2} \\
\frac{\mathbf{v}_\mathrm{I}(\mathbf{X}_k)}{\|\widetilde{\mathbf{v}}(\mathbf{X}_k)\|_2}
\end{matrix}\right]
= \frac{\widetilde{\mathbf{v}}(\mathbf{X}_k)}{\|\widetilde{\mathbf{v}}(\mathbf{X}_k)\|_2}.
\end{equation}\normalsize

Based on (\ref{Derivative_k}), we obtain
\small \begin{align}\label{67}
\mathcal{T}_{ \widetilde{\mathbf{v}}(\mathbf{X}_k) }\left( \|\widetilde{\mathbf{v}}(\mathbf{X})\|_2 \right)
=\ &
\|\widetilde{\mathbf{v}}(\mathbf{X}_k)\|_2  +  \left\langle \left.\frac{\partial \|\widetilde{\mathbf{v}}(\mathbf{X})\|_2}{\partial\widetilde{\mathbf{v}}(\mathbf{X})}\right|_{ \widetilde{\mathbf{v}}(\mathbf{X})=\widetilde{\mathbf{v}}(\mathbf{X}_k) }, \left[\widetilde{\mathbf{v}}(\mathbf{X})-\widetilde{\mathbf{v}}(\mathbf{X}_k)\right] \right\rangle \nonumber\\
=\ &
\|\widetilde{\mathbf{v}}(\mathbf{X}_k)\|_2 + \left(\frac{\widetilde{\mathbf{v}}(\mathbf{X}_k)}{\|\widetilde{\mathbf{v}}(\mathbf{X}_k)\|_2}\right)^{\mathrm{T}}   \left[\widetilde{\mathbf{v}}(\mathbf{X})-\widetilde{\mathbf{v}}(\mathbf{X}_k)\right] \nonumber\\
=\ & \|\widetilde{\mathbf{v}}(\mathbf{X}_k)\|_2^{-1} \widetilde{\mathbf{v}}^{\mathrm{T}}(\mathbf{X}_k) \widetilde{\mathbf{v}}(\mathbf{X}) \nonumber\\
=\ & \|\widetilde{\mathbf{v}}(\mathbf{X}_k)\|_2^{-1}
\sum_{i=1}^{(N+1)^2}   tr\left( \frac{1}{2}(\mathbf{S}_{i} + \mathbf{S}_{i}^\mathrm{H}) \mathbf{X}_k\right)  tr\left( \frac{1}{2}(\mathbf{S}_{i} + \mathbf{S}_{i}^\mathrm{H}) \mathbf{X}\right)   \nonumber\\
&\ \ \ \ \ \ + \|\widetilde{\mathbf{v}}(\mathbf{X}_k)\|_2^{-1}
\sum_{i=1}^{(N+1)^2}   tr\left( -\frac{j}{2}(\mathbf{S}_{i} - \mathbf{S}_{i}^\mathrm{H}) \mathbf{X}_k\right)  tr\left( -\frac{j}{2}(\mathbf{S}_{i} - \mathbf{S}_{i}^\mathrm{H}) \mathbf{X}\right)  \nonumber\\
=\ & \sum_{i=1}^{(N+1)^2} tr\left( \|\widetilde{\mathbf{v}}(\mathbf{X}_k)\|_2^{-1}  tr\left( \frac{1}{4}(\mathbf{S}_{i} + \mathbf{S}_{i}^\mathrm{H}) \mathbf{X}_k\right) (\mathbf{S}_{i} + \mathbf{S}_{i}^\mathrm{H}) \mathbf{X}\right) \nonumber\\
&\ \ \ \ \ \  + \sum_{i=1}^{(N+1)^2} tr\left( \|\widetilde{\mathbf{v}}(\mathbf{X}_k)\|_2^{-1}  tr\left( -\frac{1}{4}(\mathbf{S}_{i} - \mathbf{S}_{i}^\mathrm{H}) \mathbf{X}_k\right) (\mathbf{S}_{i} - \mathbf{S}_{i}^\mathrm{H}) \mathbf{X}\right) 
=\  tr\left(\mathbf{\Upsilon}_{\mathbf{X}_k}  \mathbf{X}\right),
\end{align}\normalsize
where $\mathbf{\Upsilon}_{\mathbf{X}_k}$ is given by
\small \begin{equation}\label{Upsilon_Qk}
\begin{split}
\mathbf{\Upsilon}_{\mathbf{X}_k}
= \frac{1}{4 \|\widetilde{\mathbf{v}}(\mathbf{X}_k)\|_2}  \left\{
\sum_{i=1}^{(N+1)^2}   tr\left( (\mathbf{S}_{i} + \mathbf{S}_{i}^\mathrm{H}) \mathbf{X}_k\right) (\mathbf{S}_{i} + \mathbf{S}_{i}^\mathrm{H}) -    \sum_{i=1}^{(N+1)^2}  tr\left( (\mathbf{S}_{i} - \mathbf{S}_{i}^\mathrm{H}) \mathbf{X}_k\right) (\mathbf{S}_{i} - \mathbf{S}_{i}^\mathrm{H})
\right\}.
\end{split}
\end{equation}\normalsize

Because $\|\widetilde{\mathbf{v}}(\mathbf{X}_k)\|_2$, $\left.\frac{\partial \|\widetilde{\mathbf{v}}(\mathbf{X})\|_2}{\partial \widetilde{\mathbf{v}}(\mathbf{X})}\right|_{ \widetilde{\mathbf{v}}(\mathbf{X})=\widetilde{\mathbf{v}}(\mathbf{X}_k) }$ and $\left[\widetilde{\mathbf{v}}(\mathbf{X})-\widetilde{\mathbf{v}}(\mathbf{X}_k)\right]$ are all real, $\mathcal{T}_{ \widetilde{\mathbf{v}}(\mathbf{X}_k) }\left( \|\widetilde{\mathbf{v}}(\mathbf{X})\|_2 \right)$ is real as well. In addition, because $\|\mathbf{v}(\mathbf{X})\|_2 - \mu \sqrt{\mathcal{G}}\geq 0$ and $\|\widetilde{\mathbf{v}}(\mathbf{X})\|_2 -\mu \sqrt{\mathcal{G}} \geq 0$ are equivalent, $\mathcal{T}_{ \mathbf{v}(\mathbf{X}_k) }\left( \|\mathbf{v}(\mathbf{X})\|_2 \right) - \mu \sqrt{\mathcal{G}}\geq 0$ and $\mathcal{T}_{ \widetilde{\mathbf{v}}(\mathbf{X}_k) }\left( \|\widetilde{\mathbf{v}}(\mathbf{X})\|_2 \right) - \mu \sqrt{\mathcal{G}}\geq 0$ are also equivalent. Consequently, we prove that $\mathcal{T}_{ \widetilde{\mathbf{v}}(\mathbf{X}_k) }\left( \|\widetilde{\mathbf{v}}(\mathbf{X})\|_2 \right)$ is a real representation of $\mathcal{T}_{ \mathbf{v}(\mathbf{X}_k) }\left( \|\mathbf{v}(\mathbf{X})\|_2 \right)$.



\begin{spacing}{1.17}

\ifCLASSOPTIONcaptionsoff
  \newpage
\fi

\end{spacing}


\end{document}